\pdfoutput=1
\newif\ifdraft
\draftfalse
\synctex=1
\documentclass[letterpaper,11pt]{amsart}
\usepackage{amsmath}
\usepackage{amsthm}
\usepackage{amssymb}
\usepackage{amsfonts}

\usepackage{amsaddr}
\makeatletter
\renewcommand{\email}[2][]{%
  \ifx\emails\@empty\relax\else{\g@addto@macro\emails{,\space}}\fi%
  \@ifnotempty{#1}{\g@addto@macro\emails{\textrm{(#1)}\space}}%
  \g@addto@macro\emails{#2}%
}
\makeatother

%
%
\usepackage{fullpage}
\usepackage{algorithm}
\usepackage{graphicx,color}
\usepackage[dvipsnames]{xcolor}
\usepackage{fancybox}
\usepackage{enumerate}
\usepackage{tikz}
\usepackage{subcaption}
\usepackage{cite}
\usepackage[utf8]{inputenc}
\usepackage{bm}
\renewcommand{\mathbf}[1]{\bm{#1}}

\usepackage{cool}
\Style{DSymb={\mathrm d},DShorten=false,IntegrateDifferentialDSymb=\mathrm{d}}

\newcommand{\knuth}[1]{\ensuremath{\inb{#1}}}
\ifdraft
\usepackage{todonotes}
\usepackage{scrtime}
\usepackage{ifthen}
\usepackage{prelim2e}
\newcommand{\datetime}{\the\year-\ifthenelse{\the\month < 10}{0}{}\the\month-\ifthenelse{\the\day < 10}{0}{}\the\day{} \thistime}

\fi

\ifdraft
\usepackage[pagebackref]{hyperref}
\else
  \usepackage[pagebackref]{hyperref}
\fi
\hypersetup{
  colorlinks=true,
  linkcolor=green!30!black,
  linktoc=all,
  citecolor=blue,
  bookmarksnumbered=true
}
\usepackage{cleveref}

\newtheorem{theorem}{Theorem}[section]

\newtheorem{corollary}[theorem]{Corollary}
\newtheorem{lemma}[theorem]{Lemma}

\theoremstyle{definition}
\newtheorem{definition}[theorem]{Definition}

%

\newcommand{\abs}[1]{\left\vert#1\right\vert}
\newcommand{\set}[1]{\left\{#1\right\}}
\newcommand{\eps}{\varepsilon}

\newcommand{\defeq}{:=}

\newcommand{\Aout}{A_{\mathrm{out}}}
\newcommand{\F}{\mathcal{F}}

\renewcommand{\E}{\mathbb{E}}

\newcommand{\polylog}[1]{\ensuremath{\mathop{\mathrm{polylog}}\inp{#1}}}

\newcommand{\CommentS}[1]{}

\newcommand{\supp}{\mathrm{Supp}}
\newcommand{\G}{\mathcal{G}}

\newcommand{\R}{\mathbb{R}}

\newcommand{\M}{\mathcal{M}}

\usepackage{mathtools}

\DeclareMathOperator*{\argmax}{arg\,max}

\newcommand{\ed}{\ensuremath{(\varepsilon, \delta)}}



\newcommand{\Median}{\mathrm{Median}}
\newcommand{\Qp}{Q^{(p)}}

\newcommand{\EM}{\mathcal{E}}
\newcommand{\Data}{D} 
\newcommand{\dis}[2]{\mathrm{D}\inp{ #1 \Vert #2}}

\newcommand{\disd}[2]{\mathrm{D}_{\delta}\inp{ #1 \Vert #2}}
\newcommand{\T}{\tau}
\newcommand{\Tt}{\widetilde{\tau}}
\newcommand{\qt}{\widetilde{q}}
\newcommand{\xt}{\widetilde{x}}
\newcommand{\Lap}{\mathtt{Lap}}
\newcommand{\Tstar}{\T_*}

\newcommand{\p}{p}
\newcommand{\pt}{\widetilde{p}}
\newcommand{\pti}{\widetilde{p_i}}
\newcommand{\ptk}{\widetilde{p_k}}
\newcommand{\pstar}{\p_*}

\newcommand{\Pnd}{\Phi^{(N,\Delta)}}

\crefname{theorem}{Theorem}{Theorems}
\crefname{observation}{Observation}{Observations}
\crefname{claim}{Claim}{Claims}
\crefname{condition}{Condition}{Conditions}
\crefname{algorithm}{Algorithm}{Algorithms}
\crefname{property}{Property}{Properties}
\crefname{example}{Example}{Examples}
\crefname{fact}{Fact}{Facts}
\crefname{lemma}{Lemma}{Lemmas}
\crefname{corollary}{Corollary}{Corollaries}
\crefname{definition}{Definition}{Definitions}
\crefname{remark}{Remark}{Remarks}
\crefname{proposition}{Proposition}{Propositions}
\crefname{equation}{eq.}{eqs.}

\newcommand{\liuexp}[1]{}

%


\begin{document}

\title{Private Selection from Private Candidates}
\author{Jingcheng Liu \and Kunal Talwar}

\date{}
\maketitle
\begin{abstract}
	Differentially Private algorithms often need to select the best amongst many candidate options. Classical works on this {\em selection} problem require that the candidates' goodness, measured as a real-valued score function, does not change by much when one person's data changes. In many applications such as hyperparameter optimization, this stability assumption is much too strong. In this work, we consider the selection problem under a much weaker stability assumption on the candidates, namely that the score functions are differentially private. Under this assumption, we present algorithms that are near-optimal along the three relevant dimensions: privacy, utility and computational efficiency.

	Our result can be seen as a generalization of the exponential mechanism and its existing generalizations. We also develop an online version of our algorithm, that can be seen as a generalization of the sparse vector technique to this weaker stability assumption. We show how our results imply  better algorithms for hyperparameter selection in differentially private machine learning, as well as for adaptive data analysis.

\end{abstract}
\let\bakthefootnote\thefootnote
\let\thefootnote\relax
\footnotetext{Jingcheng Liu, Computer Science Division, UC Berkeley. Email: \texttt{liuexp@berkeley.edu}.}
\footnotetext{Kunal Talwar, Google Brain. Email: \texttt{kunal@google.com}.}
\footnotetext{Some of this work was done while JL was an intern at Google Brain.}
\let\thefootnote\bakthefootnote
\newpage
\setcounter{page}{1}

\section{Introduction}

Differential Privacy~\cite{DMNS} is the standard notion of privacy for statistical databases. It imposes a probabilistic constraint on the behavior of the algorithm on datasets that differ in one person's input. Formally,
\begin{definition}[Differential Privacy]
	Let $\mathcal{M} : \mathcal{D}^n \rightarrow \mathcal{R}$ be a randomized algorithm mapping datasets to some range $\mathcal{R}$. We say that $\mathcal{M}$ is $\ed$-differentially private if for all pairs of adjacent datasets $D, D' \in \mathcal{D}^n$, and for all measurable subsets $S \subseteq \mathcal{R}$,
	\begin{align*}
		\Pr[\mathcal{M}(D) \in S] \leq \exp(\eps) \cdot \Pr[\mathcal{M}(D') \in S] + \delta.
	\end{align*}
	Here, two datasets are adjacent if they differ in one person's input. When $\delta=0$, we will sometimes say that $\mathcal{M}$ is $\eps$-differentially private.
\end{definition}
Differential privacy (DP) satisfies nice post-processing and composition properties, allowing for complex differentially private algorithms to be built out of simpler building blocks. In the last decade or so, differentially private algorithms have been designed and analyzed for numerous statistical and machine learning tasks, in most cases by carefully putting together these building blocks. This approach to the design and analysis of differentially private algorithms has proven surprisingly robust and useful.

One of these fundamental building blocks is {\em Differentially Private Selection}, which aims to select, based on a dataset, the best of many options. For concreteness, suppose that we have a score function $q : [K] \times \mathcal{D}^n \rightarrow \R$ that maps each of $K$ \emph{candidates}, and a dataset to a real-valued score. The DP selection problem is to select amongst these $K$ candidates, one that (approximately) maximizes this score on a given dataset $D \in \mathcal{D}^n$, while ensuring differential privacy.

One can only hope to approximately maximize $q$ when single individuals in the dataset cannot change any of the score functions $q(i, \cdot)$ too much. This stability of $q$ under small changes in $D$ is usually codified in an assumption that each score function $q(i, D)$ is Lipschitz with respect to Hamming distance $1$ changes to $D$. The Exponential mechanism~\cite{McSherryT} is an algorithm for DP selection under this assumption and has found numerous applications to the design of DP mechanisms. Several other mechanisms for the private selection problem have been proposed, that improve the utility guarantee under stronger assumptions~\cite{BeimelNS13, SmithT13, mir2013, ChaudhuriHS14, RaskhodnikovaS16, MinamiASN16}.

In many settings however, the Lipschitzness assumption is much too strong. In this work, we ask:
{\em Are there weaker versions of the stability assumption that allow for private selection?}
We show that one can codify the stability simply as differential privacy: the function $q$, viewed as a randomized algorithm, satisfies differential privacy.
Indeed, one can convert a Lipschitz function $q'$ into an $\eps$-DP random function $q$ by simply adding, say, a noise drawn from the Laplace distribution to $q'$.
We assume oracle access to a randomized function that on input $(i,\Data)$ computes a sample $(\xt,\qt)$ from the $i$-th candidate $\M_i(\Data)$, where $\qt$ is the score and $\xt$ can be any additional output.
Moreover, the output distributions of $\M_i( \Data)$ and $\M_i(\Data')$ are promised to be close whenever $\Data$ and $\Data'$ are neighbors. Here closeness in distributions is taken to mean $\eps$-DP or $(\eps, \delta)$-DP. Motivated by applications, we assume that 
the scores are bounded, say $\qt \in [0, 1]$.

To measure the quality of a candidate $\M_i(\Data)$, one option is by the \emph{median} of the distribution:
	$\Median\inp{\M_i(\Data)} \defeq \sup \set{\T \; : \; \Pr_{(\xt,\qt)\sim \M_i(\Data)}[\qt \ge \T] \ge \frac{1}{2}}$.
	However, even if a candidate $\M_i$ is $\eps_1$-DP, its median can still be very sensitive to the dataset.
	Thus one could only hope to approximately maximize the median score.
	Moreover, in many real world applications, one not only wants to find a ``good'' candidate, but also get a ``good'' sample from it,
	especially because these candidates themselves are randomized algorithms.
	Therefore, we use the following non-private algorithm as our main benchmark:
	draw a number of samples $(\xt_j,\qt_j)$ from every candidates, and then output the one with the highest score $\qt_j$.
	If one only assumes each candidate is individually $\eps_1$-DP, however, outputting the best of the $\widetilde{K}$ options will only be $\widetilde{K} \eps_1$-DP (see~\Cref{sec:naiveDP}).
	We would like to compete with this naive algortihm, while still preserving $O(\eps_1)$-DP.
Another important resource constraint in applications is the computational efficiency of the procedure. In our setting, we would want to minimize the number of oracle calls to $\M_i(\Data)$ made by our algorithm.

Our first result is a simple algorithm that given as input a threshold $\tau$, outputs a sample $(\xt,\qt)$ with score $\qt \ge \tau$, under the assumption that at least one candidate has a median score of at least $\tau$.
This algorithm makes a near linear number of oracle calls, and improves on the quadratic bound that follows from a reinterpretation of a result in~\cite{GLMRT}. We show that the loss in privacy, utility and efficiency for this algorithm are all close to optimal.
Interestingly, this algorithm can be seen as, starting from a naive differentially private algorithm with a poor utility guarantee (e.g., pick a candidate uniformly at random),
and then by repeating it in a private way to boost its utility guarantee.
In doing so, we get simple algorithms that are both private and have good utility guarantees.

\begin{theorem}
	Fix any $\eps_1>0, \tau \in [0,1]$.
  Then given $\eps_1$-DP algorithms $\M_1, \ldots, \M_K$, there is an algorithm $\M$ that on any dataset $D$, outputs a sample $(\xt,\qt)$ 
	such that 
	\begin{enumerate}[(a)]
	\item $\M$ is $(2\eps_1)$-DP. 
	\item $\qt\ge \tau$. 
	\item Let $\widetilde{T}$ be the number of calls the algorithm makes to any $\M_i(D)$, and suppose that $\exists i : \Pr_{q \sim \M_i(\Data)} \inb{q \ge \T} \geq \frac 1 2$, then
  $\E \widetilde{T} \le 2 K. 
  $
	\end{enumerate}
\end{theorem}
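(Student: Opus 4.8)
The plan is to run a simple rejection-sampling loop: in each round, draw a candidate index $i$ uniformly at random from $[K]$, make one oracle call to obtain $(\xt,\qt) \sim \M_i(D)$, and halt and output $(\xt,\qt)$ if $\qt \ge \tau$, otherwise proceed to the next round. Property~(b) is then immediate, since the loop only ever outputs a sample with $\qt \ge \tau$. For property~(c), each round makes exactly one oracle call, and the probability that a given round succeeds is $\frac1K\sum_{i=1}^K \Pr_{q\sim\M_i(D)}[q\ge\tau] \ge \frac1K\cdot\frac12 = \frac1{2K}$ under the stated hypothesis; hence $\widetilde T$, the number of rounds, is stochastically dominated by a geometric random variable with success probability $\tfrac1{2K}$, so $\E\widetilde T \le 2K$.

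The crux is property~(a), and the key step is to recognize the exact shape of the output distribution. Write $\mu_D \defeq \frac1K\sum_{i=1}^K \M_i(D)$ for the uniform mixture of the candidates' output distributions, and $S \defeq \set{(\xt,\qt) : \qt \ge \tau}$ for the success region. Summing over the number of failed rounds shows that for any measurable $A \subseteq S$,
\begin{align*}
	\Pr[\M(D) \in A] \;=\; \sum_{t \ge 0} \bigl(1-\mu_D(S)\bigr)^t \, \mu_D(A) \;=\; \frac{\mu_D(A)}{\mu_D(S)},
\end{align*}
i.e.\ the output is $\mu_D$ conditioned on $S$. Then for adjacent $D,D'$ and measurable $A \subseteq S$,
\begin{align*}
	\frac{\Pr[\M(D) \in A]}{\Pr[\M(D') \in A]} \;=\; \frac{\mu_D(A)}{\mu_{D'}(A)} \cdot \frac{\mu_{D'}(S)}{\mu_D(S)}.
\end{align*}
Since $\mu_D$ is a mixture of the $\eps_1$-DP mechanisms $\M_i$ with data-independent weights, it is itself $\eps_1$-DP (apply the DP inequality componentwise and take the convex combination). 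Applying this to the numerator bounds the first factor by $e^{\eps_1}$, and applying it to the event $S$ bounds the second factor by $e^{\eps_1}$; multiplying gives $(2\eps_1)$-DP. The whole point is that one pays $\eps_1$ once for the mixture measure and once more for the normalizing constant $\mu_D(S)$ — a total of $2\eps_1$, rather than the $K\eps_1$ that the naive ``best of $K$'' analysis would suggest.

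I expect the only subtlety — rather than a genuine obstacle — to be bookkeeping around the degenerate case $\mu_D(S)=0$, in which the loop never halts. Because each $\M_i$ is $\eps_1$-DP with $\delta=0$, we have $\mu_D(S)=0 \iff \mu_{D'}(S)=0$ for adjacent $D,D'$, so either both loops halt almost surely and the computation above applies verbatim, or neither produces an output and the DP condition holds vacuously (and this case is in any event excluded along any path of datasets where the hypothesis of~(c) holds). The remaining ingredient to state cleanly is the elementary fact that a fixed-weight mixture of $\eps_1$-DP mechanisms is $\eps_1$-DP, used twice above.
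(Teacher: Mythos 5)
Your proposal is correct and is essentially the paper's own argument: the paper's Algorithm~\ref{alg:thresholding} is exactly this rejection-sampling loop over the uniform mixture $Q(\Data)=\frac1K\sum_i \M_i(\Data)$ (with an optional $\gamma$-biased stopping coin and cap $T$ added for practicality, which the paper notes can be set to $\gamma=0$, $T=\infty$ to recover the clean $2\eps_1$ bound), and its privacy proof likewise writes the output law as $Q(\Data)$ conditioned on $\{q\ge\tau\}$ and pays $e^{\eps_1}$ once for the point mass and once for the normalizing constant $p_1=\Pr_{q\sim Q(\Data)}[q\ge\tau]$, with the same geometric argument giving $\E\widetilde{T}\le 1/p_1\le 2K$.
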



Can we do this without knowing this target value $\tau$? We give two algorithms that compete with the best $i$ without knowing the target $\tau$.
The first can be seen as modifying the naive non-private algorithm by employing a random stopping strategy.
In doing so, it guarantees that ``outputting the highest scored sample seen so far'' is already private.
However it pays a small additional privacy penalty: the final privacy cost is $3\eps_1$ instead of $2\eps_1$.

\begin{theorem}
	Fix any $\eps_1>0, \gamma \in [0,1]$.
  Then given $\eps_1$-DP algorithms $\M_1, \ldots, \M_K$, there is an algorithm $\M$ that on any dataset $D$, outputs a sample $(\xt,\qt)$ 
  such that 
	\begin{enumerate}[(a)]
  \item $\M$ is $(3\eps_1)$-DP.
  \item Let $\widetilde{T}$ be the number of calls the algorithm makes to any $\M_i(D)$, then
  $\E \widetilde{T} \le \frac{1}{\gamma} 
  $.
  \item $\qt$ is the highest scored sample among the $\widetilde{T}$ samples seen so far.
	  \end{enumerate}
\end{theorem}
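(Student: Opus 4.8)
The plan is to take the \emph{naive} non-private algorithm --- keep sampling candidates and remember the highest-scoring sample --- and make it private purely by randomizing \emph{when to stop}. Concretely, $\M$ maintains a running best sample; in each round it draws a fresh sample $(\xt,\qt)\sim\M_i(D)$ for an index $i\in[K]$ chosen uniformly at random, replaces the running best if $\qt$ exceeds its score, and then halts with probability $\gamma$ (returning the running best) and proceeds to the next round otherwise. Item (b) is then immediate: the number of rounds $\widetilde T$ is $\mathrm{Geom}(\gamma)$ and is independent of $D$, so $\E\widetilde T = 1/\gamma$, and each round costs exactly one oracle call; item (c) holds by construction. The content of the theorem is item (a).

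For privacy I would first collapse the $K$ candidates into one: since each $\M_i$ is $\eps_1$-DP, so is the mixture $\mu_D := \frac{1}{K}\sum_{i=1}^{K}\M_i(D)$, and the algorithm above is exactly ``draw $\widetilde T\sim\mathrm{Geom}(\gamma)$ i.i.d.\ samples from $\mu_D$ and return the one of largest score''. Writing $F(\tau;D):=\Pr_{(\xt,\qt)\sim\mu_D}[\qt\ge\tau]$ for the score tail, conditioning on $\widetilde T=t$, using exchangeability to place the winning draw first, and summing $\sum_{t\ge1}\gamma(1-\gamma)^{t-1}\,t\,(1-F(\qt;D))^{t-1} = \gamma/\bigl(\gamma+(1-\gamma)F(\qt;D)\bigr)^{2}$ yields the closed form
\begin{align*}
\Pr[\M(D)\in S] \;=\; \int_{S} \frac{\gamma}{\bigl(\gamma+(1-\gamma)F(\qt;D)\bigr)^{2}}\,\mathrm{d}\mu_D(\xt,\qt)
\end{align*}
for every measurable $S$. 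Now the dependence on $D$ sits in exactly two nonnegative factors, each of which moves by at most $e^{\eps_1}$ under a neighboring change $D\to D'$: (i) swapping $\mathrm{d}\mu_D$ for $\mathrm{d}\mu_{D'}$ costs $e^{\eps_1}$, since $\eps_1$-DP with $\delta=0$ forces $\mathrm{d}\mu_D/\mathrm{d}\mu_{D'}\le e^{\eps_1}$ pointwise and the integrand is nonnegative; (ii) applying $\eps_1$-DP to the event $\{\qt\ge\tau\}$ gives $F(\tau;D')\le e^{\eps_1}F(\tau;D)$, hence $\gamma+(1-\gamma)F(\tau;D')\le e^{\eps_1}(\gamma+(1-\gamma)F(\tau;D))$, so swapping the weight for the one with $D'$ costs a further $e^{2\eps_1}$. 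Combining the two gives $\Pr[\M(D)\in S]\le e^{3\eps_1}\Pr[\M(D')\in S]$, which is (a).

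The step I expect to be the main obstacle is the innocuous-looking ``place the winning draw first'': it assumes the largest of the $\widetilde T$ scores is attained uniquely, i.e.\ the score distribution is atomless, which the Lipschitz-plus-Laplace construction does satisfy but which we should not assume in general. I would remove the assumption by breaking ties with an independent, data-independent infinitesimal perturbation of each score --- this is post-processing, so it preserves $\eps_1$-DP, and it is harmless for utility since scores remain in $[0,1]$ up to a negligible error --- or, equivalently, by breaking ties in draw order and passing to the limit; the closed form above survives, and the rest of the argument is unchanged. The remaining loose ends --- $\widetilde T<\infty$ a.s.\ and measurability of $\M$ --- are routine.
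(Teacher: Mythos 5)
Your proposal is correct and takes essentially the same route as the paper's proof: the identical random-stopping algorithm, the same observation that the stopping time is geometric and independent of the data, the same closed-form geometric sum for the law of the running maximum, and the same factorization of the likelihood ratio into three multiplicative $e^{\eps_1}$ factors (one for the density and two for tail probabilities). The only difference is cosmetic: the paper handles atoms directly by distinguishing $\Pr[\qt > q]$ from $\Pr[\qt \ge q]$, giving a product of two distinct denominator factors instead of your square, whereas you reduce to the atomless case by tie-breaking, which works equally well.
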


Our second algorithm keeps the privacy cost to essentially $2\eps_1$, at the cost of a slightly higher runtime and a more complicated algorithm and analysis. This is valuable since in some settings, the utility of the base algorithm is quite sensitive with respect to the privacy parameter $\eps_1$. In such settings, with a final target privacy parameter of $\eps_{fin}$, the second algorithm can allow us to give each $\M_i$ a privacy budget of $\approx \eps_{fin}/2$, which can lead to a better utility than the $\approx (\eps_{fin}/3)$-DP $\M_i$'s needed for the first simpler algorithm.

\begin{theorem}
Fix any $\eps_1>0, \eps_0 \in [0,1], \beta > 0, R \in \mathbb{N}$.
Suppose that there are $\eps_1$-DP algorithms $\M_1, \ldots, \M_K$ and let $\tau^*(D) = \max_i \Median(\M_i(D))$. There is an algorithm $\M$ that on any dataset $D$ either outputs $\bot$, or outputs a sample $(\xt,\qt)$ 
such that
\begin{enumerate}[(a)]
	\item 
		$\M$ is $(2\eps_1+\eps_0, \delta)$-DP.
	\item 
		Except with probability $\beta+\delta/R$, $\xt$ has quality at least  $\tau^* - \frac{1}{R}$.
	\item 
The number of calls $\widetilde{T}$ that the algorithm makes to any $\M_i(D)$ satisfies (deterministically)
\begin{align*}
	\widetilde{T} \le O\inp{K\inp{\frac{R+1}{\beta^2}}^{6+\frac{12\eps_1}{\eps_0}} \inp{\frac{\ln\frac{R}{\delta}}{\eps_0^2} + \frac{\ln \frac{1}{\eps_0}}{\beta}} }.
\end{align*}
  Furthermore,
	$\Pr\inb{\M \hbox{ outputs $\bot$} } \le \beta+\delta$.
\end{enumerate}
\end{theorem}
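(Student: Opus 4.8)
\emph{Proof plan.} The plan is to reduce to the threshold-based algorithm of the first theorem by privately searching over a grid of candidate thresholds. Fix the grid $\mathcal{G} = \{1/R, 2/R, \ldots, 1\}$ and let $\tau^\dagger = \tau^\dagger(D)$ be the largest grid point with $\tau^\dagger \le \tau^*(D)$, so that $\tau^\dagger \ge \tau^*(D) - 1/R$. If $i$ attains $\Median(\M_i(D)) = \tau^*(D)$ then $\Pr_{q \sim \M_i(D)}[q \ge \tau^\dagger] \ge \tfrac12$, so by the first theorem the $(2\eps_1)$-DP procedure $\A(\tau^\dagger)$ returns a sample with score $\ge \tau^\dagger \ge \tau^*(D) - 1/R$ using $O(K)$ oracle calls in expectation, which we truncate to a deterministic cap by Markov. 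Hence it suffices to identify, with only an additional $\eps_0$ of privacy budget, a grid point $\tau$ that is not much below $\tau^\dagger$ and has some candidate above median at level $\tau$, and then run $\A(\tau)$; we output $\bot$ precisely when the search reports that no grid point works.

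\emph{The search.} For a fixed grid point $\tau$, the test ``$\exists i:\Pr_{q\sim\M_i(D)}[q\ge\tau]\ge\tfrac12$'' is implemented by drawing a batch of $m$ samples from each $\M_i(D)$, forming the empirical frequencies $\hat p_i(\tau)$, and comparing $\max_i\hat p_i(\tau)$ to $\tfrac12$ with a noisy margin; one runs these tests over $\mathcal{G}$ from the top down (or by a private binary search) and reports the threshold via a single $\eps_0$-DP selection step. The crux of the privacy accounting is that $\M_i$ is only $\eps_1$-DP, so the quantities $p_i(\tau;D)$ are stable only \emph{multiplicatively}, $p_i(\tau;D)\le e^{\eps_1}p_i(\tau;D')$; equivalently $\ln p_i(\tau;\cdot)$ has sensitivity $\eps_1$. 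To make the threshold-selection step cost only $\eps_0$ rather than $\eps_1$, its (log-)scores must be resolved well enough that the correct threshold beats the others by $\Omega(\tfrac{\eps_1}{\eps_0}\log\tfrac{R}{\beta})$; this forces the $\hat p_i(\tau)$ to be accurate to multiplicative precision $(R/\beta)^{-\Theta(\eps_1/\eps_0)}$, hence $m=(R/\beta)^{\Theta(\eps_1/\eps_0)}$, and together with the $\mathrm{poly}(R/\beta)$ factor needed for plain concentration this produces the exponent $6+\tfrac{12\eps_1}{\eps_0}$ in the oracle-call bound. Total privacy is then $2\eps_1$ (the final $\A(\tau)$ call) plus $\eps_0$ (the search), i.e. $2\eps_1+\eps_0$; the $\delta$ absorbs the concentration failures of the $\hat p_i$ and the tail of the selection step, which also controls both $\Pr[\text{quality}<\tau^*-1/R]\le\beta+\delta/R$ and $\Pr[\bot]\le\beta+\delta$.

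\emph{Main obstacle.} The delicate point is the privacy accounting of the adaptive threshold search: a naive union over the $\Theta(R)$ per-threshold tests would compose to $\Theta(R)$ times the per-test cost, and the data-dependent index at which the search stops leaks extra information. The fix is to arrange that the search exposes, differentially privately, essentially a single test-outcome string whose distribution is $\eps_0$-stable --- in the spirit of the sparse-vector technique, but adapted to the present multiplicative stability --- and to keep the $2\eps_1$ charged for the final $\A(\tau)$ invocation cleanly decoupled from the $\eps_0$ charged for everything that only influences the choice of $\tau$. Pinning down the constants and the precise $\eps_1/\eps_0$ dependence in the batch size $m$, while maintaining $\Pr[\bot]\le\beta+\delta$, is the main work; the utility and the deterministic oracle-complexity bound then follow from concentration of the $\hat p_i$ and the expected-runtime guarantee of the first theorem.
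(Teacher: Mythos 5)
Your high-level route is the paper's route: a sparse-vector-style private search over a grid of $R$ thresholds charged to $\eps_0$, followed by the known-threshold algorithm charged to $\approx 2\eps_1$, with the exponent $6+\tfrac{12\eps_1}{\eps_0}$ coming from the multiplicative resolution needed on tail probabilities. But the proposal has two genuine gaps, and they are precisely where the paper's work lies. First, ``truncate to a deterministic cap by Markov'' is not a valid step: cutting off the known-threshold procedure after a fixed number of calls changes its output distribution, and whether the truncated procedure is still private is exactly what the paper's Theorem 3.1 proves. There, the deterministic cap is obtained via the $\gamma$-biased stopping coin together with a $\bot$ output, privacy of the $\bot$ event on neighboring datasets is analyzed separately, and the requirement $T\ge\max\left\{\tfrac1\gamma\ln\tfrac2{\eps_0},\,1+\tfrac1{e\gamma}\right\}$ is what produces the $\tfrac{\ln(1/\eps_0)}{\beta}$ term in the call bound and (with $\gamma\approx\beta p_1$, $p_1$ the guaranteed tail mass above the selected threshold) the bound $\Pr[\bot]\le\beta+\delta$. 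None of this follows from Markov's inequality.

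Second, the privacy of the search itself is deferred rather than proved. You correctly note that $\ln p_i(\tau;\cdot)$ has sensitivity $\eps_1$, but your test statistics are the \emph{empirical} frequencies $\hat p_i(\tau)$, which are random; their multiplicative stability holds only in expectation, and a worst-case log-Lipschitz claim is false (the empirical count can be $0$). The paper's resolution is Lemma 4.2: under the independent coupling, $N\hat p+\Delta\le e^{\eps_1+\eps_0}\left(N\hat p'+\Delta\right)$ except with probability $\delta_0$, with $\Delta=O\!\left(\eps_0^{-2}\ln\tfrac1{\delta_0}\right)$; the AboveThreshold test is then run on the regularized statistic $N\hat p+\Delta$, and the coupling-failure probability is converted to the $\delta$ in $(\eps_0,\delta)$-DP via Lemma 4.9. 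This additive regularizer is the missing mechanism in your sketch, and it is the source of both the $\delta$ and the $\eps_0^{-2}\ln\tfrac R\delta$ factor in the sample size; saying the analysis is ``in the spirit of sparse vector'' and that pinning it down ``is the main work'' leaves the heart of the proof unproved. Two further points you would need to fix to hit the stated bounds: the paper draws one pool of $N$ samples from the uniform mixture $Q$ and reuses it across all $R$ grid points (fresh per-candidate batches per grid point can exceed the deterministic call bound by a factor of $R$), and it tests the mixture percentile against $\pstar\approx\tfrac1{2K}$ rather than $\max_i\hat p_i$ against $\tfrac12$, since the downstream runtime bound needs a lower bound on the mixture mass above the chosen threshold.
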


In the process, we develop an online version of our algorithm, which can be seen as a generalization of the sparse vector technique~\cite{DworkNRRV09} to this privacy-instead-of-Lipschitzness setting. 
This algorithm takes as input a sequence of mechanisms $\M_i(\cdot)$ and $\T_i$, and stops at the first $i$ such that $\M_i(\cdot)$ has median score larger than $\T_i$.

\begin{theorem}
	There is an $(\eps_3,\delta)$-DP mechanism $\M_{sv}$ such that,
	for any $p^* \in (0,1), \beta\in (0,1)$,
	and for any sequence of $\eps_1$-DP mechanisms $\M_1, \cdots, \M_K$ and any sequence of thresholds $\T_1, \cdots, \T_k$: 
	\begin{enumerate}[(a)]
		\item 
	  If there is an $i$ such that $\Pr[\M_i(D) \geq  \T_i] \geq p^*$, then $\M$ outputs $i$ with probability $(1-\beta)$. 
  \item 
	 If $\M$ outputs $i$, then except with probability $\beta$, $\Pr[\M_i(D) \geq \T_i] \geq (\frac{\beta}{K})^{O(\eps_1/\eps_3)} \cdot p^*$.
	\end{enumerate}
\end{theorem}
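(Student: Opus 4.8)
The plan is to build $\M_{sv}$ by mimicking the AboveThreshold/sparse-vector template, but replacing the single Laplace-perturbed threshold test with a private estimate of each $\M_i$'s success probability $p_i \defeq \Pr[\M_i(D) \ge \T_i]$. For each candidate $i$ in turn, I would draw $m$ i.i.d. samples from $\M_i(D)$, form the empirical fraction $\widehat p_i$ of those samples clearing $\T_i$, add Laplace noise to $\widehat p_i$ (and, once at the start, add Laplace noise to the ``target'' $p^*/2$ to play the role of the noisy threshold), and stop at the first $i$ whose noisy estimate exceeds the noisy target, outputting that $i$. The number of probes $m$ and the noise scales are the free parameters to be tuned against $\beta$, $K$, $p^*$, $\eps_1$, $\eps_3$.

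The key steps, in order: (1) \emph{Privacy.} Observe that the only data-dependent quantities released are, for the prefix $i \le i^\dagger$ of candidates examined, the noisy counts $\widehat p_i + \Lap(\cdot)$, plus the stopping index. The sensitivity here is the crux: changing one person's data changes the \emph{distribution} of each $\M_i(D)$ in the $\eps_1$-DP sense, not a single real number, so $\widehat p_i$ (a function of $m$ fresh samples) does not have bounded sensitivity in the usual sense. The right move is to argue about $p_i$ directly: $\eps_1$-DP of $\M_i$ gives $e^{-\eps_1} p_i' \le p_i \le e^{\eps_1} p_i'$ for neighbors $D,D'$, i.e. a \emph{multiplicative} stability of the test probability. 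So I would run the sparse-vector privacy argument in ``multiplicative'' coordinates — work with $\ln p_i$ or, more robustly, analyze the likelihood ratio of the whole transcript across neighbors by conditioning on the noisy target and integrating, exactly as in the classical AboveThreshold proof but with the additive $\eps_1$ shift of a Lipschitz score replaced by the multiplicative $e^{\eps_1}$ shift of a probability; composing the noise added to the estimates with this stability should yield $(\eps_3,\delta)$-DP for appropriate Laplace parameters, with $\delta$ absorbing the low-probability event that some $\widehat p_i$ deviates far from $p_i$. (2) \emph{Completeness (part (a)).} If some $i$ has $p_i \ge p^*$, then by a Chernoff bound $\widehat p_i \ge \tfrac34 p^*$ with probability $1-\beta/2$ for $m = \Theta(\log(K/\beta)/p^*)$, and the Laplace noise is smaller than $\tfrac14 p^*$ with probability $1-\beta/(2K)$; a union bound over the at-most-$K$ noisy target draws and the relevant estimates gives that $\M_{sv}$ stops at or before this $i$ w.p. $1-\beta$. (3) \emph{Soundness (part (b)).} If $\M_{sv}$ outputs $i$, then its noisy estimate cleared the noisy target $\approx p^*/2$; since the Laplace tails give $|\,\widehat p_i + \text{noise} - p_i\,| \le t$ with probability $1 - e^{-\Omega(t/\sigma)}$, we get $p_i \gtrsim p^* - O(\sigma \log(K/\beta))$ except with probability $\beta$. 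To turn this additive guarantee into the claimed multiplicative bound $p_i \ge (\beta/K)^{O(\eps_1/\eps_3)} p^*$, I would choose the noise scale $\sigma$ as a \emph{multiplicative} fraction of the current estimate (equivalently, run the comparison in log-space), so that the slack incurred is $p^* \cdot (\beta/K)^{\Theta(\eps_1/\eps_3)}$ rather than an absolute constant; this is exactly the regime where spending $\eps_3 \ll \eps_1$ costs utility polynomially in $K/\beta$ with exponent $\eps_1/\eps_3$, matching the statement.

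The main obstacle is step (1): the sampling step means $\M_{sv}$ never sees $p_i$ itself, only a noisy empirical proxy, so the clean ``$\pm\eps_1$ shift'' structure that makes the textbook sparse-vector proof go through is not literally available. I expect to handle this by a two-layer argument — first condition on the (high-probability) event that all empirical estimates used along the transcript are within a small multiplicative factor of the true $p_i$'s, charging the complementary event to $\delta$; then, on that event, run the AboveThreshold likelihood-ratio bound with the multiplicative $e^{\eps_1}$ stability of the $p_i$'s in place of the additive Lipschitz bound, which requires care because the noise must be calibrated to a \emph{relative} sensitivity. A secondary subtlety is that, unlike classical sparse vector where only one ``above'' answer is ever released, here the algorithm also outputs the sample realizing the success; but since that sample is a fresh draw from $\M_i(D)$, which is itself $\eps_1$-DP, it contributes only an additional $\eps_1$ by composition, consistent with the stated constants once the bookkeeping is done.
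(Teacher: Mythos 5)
Your high-level route is the paper's route: run AboveThreshold on a multiplicative/log-odds version of the success probabilities, estimate those probabilities by sampling from each $\M_i(D)$, and obtain $(\eps_3,\delta)$-DP by conditioning on a good event for the empirical estimates and charging its complement to $\delta$ (this is \cref{alg:gensparse1} and \cref{thm:gensparse1}, with the conditioning formalized via \cref{lem:couplingDP}). The genuine gap is exactly at the step you yourself flag as the crux. You propose to condition on the event that every empirical estimate $\widehat p_i$ along the transcript is within a small multiplicative factor of the true $p_i$, and then run the likelihood-ratio argument in $\ln p$ coordinates. That event does not have probability $1-\delta$: a sparse-vector stream typically consists mostly of far-below-threshold queries, whose $p_i$ can be arbitrarily small (even $0$), and with any number of samples fixed in advance the estimate $\widehat p_i$ is then $0$ with substantial probability, so neither $\widehat p_i/p_i$ nor the coupling of $\widehat p_i$ with $\widehat p_i'$ on a neighboring dataset is multiplicatively controlled, and $\ln\widehat p_i$ is not even finite. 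These silent, below-threshold rounds are precisely the ones the privacy argument must handle. The paper's fix is an additive regularization: it works with $N\widehat p_i+\Delta$ for $\Delta = O\big(\eps_0^{-2}\ln(T/\delta)\big)$, proves in \cref{lem:trivial-coupling} that the $\Delta$-shifted counts on neighboring datasets are multiplicatively $e^{\eps_1+\eps_0}$-close except with probability $\delta_0$ \emph{regardless of how small the means are}, and runs the noisy test on the regularized odds $\Pnd(\widehat p_i)=\frac{N\widehat p_i+\Delta}{N(1-\widehat p_i)+\Delta}$, which stays finite and stable at $\widehat p_i=0$ (with one-sided concentration, \cref{lem:concentrationX}, doing the utility-side work). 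Without some such shift, your conditioning event fails and the multiplicative AboveThreshold analysis does not go through.

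A second, consequential issue is your sample size. Once the regularizer is in place, soundness at the claimed scale --- certifying $p_i \ge (\beta/K)^{O(\eps_1/\eps_3)}\,p^*$ for the reported index --- requires the bias floor $\Delta/N$ to sit below that polynomially small scale, which forces $N$ of order $\Delta\cdot\big((K+1)/\beta\big)^{\Theta(\eps_1/\eps_3)}/p^*$, as in \cref{alg:gensparse1}; your $m=\Theta(\log(K/\beta)/p^*)$ is exponentially too small in the exponent $\eps_1/\eps_3$, and with it an index can be reported while $\widehat p_i$ is indistinguishable from $0$ at the relevant scale, so part (b) fails. Your completeness argument (Chernoff plus Laplace tails and a union bound) and your closing remark about composing with a fresh draw from the selected $\M_i(D)$ are fine and match the paper.
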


Several remarks are in order. First note that the stability assumption that we use, i.e. that of differential privacy, is in some sense the weakest possible. Indeed if we want the final outcome to be differentially private and we treat each mechanism as a blackbox, it is easy to see that each mechanism itself must be differentially private. In other words, we have relaxed the Lipschitzness condition to the weakest possible condition that would allow for differentially private selection.  Our algorithm suffers a factor of two loss in the privacy parameter. In Appendix~\ref{app:lower_bounds}, we show that this factor of two loss in unavoidable even in simple settings. Note also that our algorithm only makes $\tilde{O}(K)$ oracle calls, whereas even computing the maximum non-privately would require $K$ oracle calls.

We next outline some motivating applications of our work.

\medskip\noindent{\bf Hyperparameter/algorithm Selection:} When designing practical machine learning algorithms, one often ends up choosing amongst different algorithms/models, or setting values for common hyperparameters such as the learning rate in an algorithm. This {\em hyperparameter selection} problem has attracted a lot of interest in recent years~\cite{wiki:hyperparameter}. Differentially private ML algorithms such as~\cite{abadi2016deep, papernot2016semi} have many of these hyperparameters, and often add on a few hyperparameters of their own. A common approach in the non-private setting is to try out several (or all) values of the hyperparameters and select the best one based on the performance on a validation set. Doing this with privacy requires more care. Chaudhuri and Vinterbo~\cite{ChaudhuriV13} studied this problem formally under strong assumptions on the algorithm. These assumptions, however, can be hard to enforce and one would like to design an algorithm that works without any additional assumptions. Note that given $K$ choices for the hyperparameters, and an $\eps$-DP
learner, one can publish $K$ models and select the best, say using the exponential mechanism. This approach only gives $\eps K$-DP, which allows for  privacy budget of only $\eps/K$ (or $\eps\sqrt{\log \frac 1 \delta}/\sqrt{K}$ if using advanced composition) for the learner, which often translates to significantly poorer utility guarantee. In this setting, note also that each oracle call is a run of the DP learner for some hyperparameter setting, that can involve a large computational cost.

Our work shows how to compete with the best choices of hyperparameters in the non-private setting while satisfying $O(\eps)$-DP, at a small computational overhead.

\medskip\noindent{\bf Adaptive data analysis beyond low-sensitivity queries:} One of the applications of DP, beyond privacy itself, is in understanding overfitting in the adaptive setting where the same dataset is used in a sequence of analyses, chosen adaptively based on the results of previous ones. This problem, sometimes referred to as the {\em garden of forking paths}~\cite{gelman2014statistical}, can lead to a breakdown of standard statistical guarantees. A beautiful recent line of work~\cite{DworkFHPRR15, BassilyNSSSU16} shows that when these analyses take the form of low-sensitivity queries, using differentially private versions of these analyses allows us to improve the sample complexity quadratically over what would otherwise be possible.  Often, however, the forking paths can involve queries that are not low-sensitivity. For example, at some step an analyst may choose the best $k$ for $k$-means clustering or may choose the clustering algorithm itself amongst one of several. At another step, the analyst may project the data for a carefully chosen target rank, and may choose to use a projection algorithm such as PCA, or an $\ell_p$ version of PCA to get outlier robustness, for a carefully chosen $p$. Making these choices differentially private naively would involve paying for the privacy cost of each of the options considered, even though only one may be used in the subsequent analysis. Our work shows that if one uses a differentially private algorithm to score each of the options, selecting amongst them can be done while paying the adaptivity cost of only one query, essentially independently of the number $K$ of options considered.

\medskip\noindent{\bf Generalizing the Exponential Mechanism: } Beyond these applications, our result can be viewed as a generalization of the expoenential mechanism. 
Given a score fuction $q$ that has sensitivity $S$, observe that adding Laplace noise of scale $S/\eps$ to the score gives us an $\eps$-DP mechanism. 
Our algorithm can be then used to select amongst these. We can however relax the assumptions. If we allow the score functions to have different sensitivities, we can still use our framework and recover the generalized exponential mechanism of Raskhodnikova and Smith~\cite{RaskhodnikovaS16}. If the score functions have small smoothed sensitivity~\cite{NissimRS07}, we get a smooth sensitivity version of the exponential mechanism. This last result does not seem to follow from known techniques.

\medskip\noindent{\bf Private amplification for private algorithms: } Beyond these applications, our result can be viewed as an extension of the \emph{private amplification scheme} introduced in~\cite{GLMRT}.
Given a private algorithm, which is usually a randomized algorithm, ideally one would like to run it multiple times, and then choose the \emph{best} run so as to obtain an output with a higher quality.
Here the quality measure can either be the success probability, or any other utility measure of the output.
This is trivial in the non-private setting. Is it possible to compete with such a naive repetition strategy in a differentially private way?
In this work, we present an algorithm that can be seen as modifying the naive repetition strategy with a random stopping time, which is arguably almost as competitive as the non-private naive repetition.

\subsection{Other Related Work}
The Differentially Private Selection problem, often known as differentially private maximization, is a very general algorithmic problem that arises in many applications. Some examples include private PAC learning~\cite{KLNRS}, private
frequent itemset mining~\cite{BhaskarLST10}, private PCA~\cite{ChaudhuriSS12, KapralovT13} and private multiple hypothesis testing~\cite{UhleropSF13, DworkSZ15}. The Sparse Vector Technique can be viewed in hindsight as a novel solution to the online version of the selection problem, under the assumption that the target value $\tau$ is known in advance. This technique was introduced by Dwork et al.~\cite{DworkNRRV09}.
We refer the reader to the book by Dwork and Roth~\cite{dwork2014algorithmic} for further applications of these techniques.

Several generalization of the exponential mechanims have been proposed. Smith and Thakurta~\cite{SmithT13} and Beimel et al.~\cite{BeimelNS13} showed that the utility guarantee can be improved using the propose-test-release framework of Dwork and Lei~\cite{DworkL09} when there is a large margin between the maximum and the rest. Chaudhuri et al.~\cite{ChaudhuriHS14} gave an elegant algorithm that can exploit a large margin between the maximum and the $k$th maximum for any $k$. Raskhodnikova and Smith~\cite{RaskhodnikovaS16} proposed the generalized exponential mechanism whose utility depends on the sensitivity of the maximizer, rather than the worst-case sensitivity. Minami et al.~\cite{MinamiASN16} show that under certain assumptions on the base distribution, the sensitivity assumptions on the loss function can be significantly relaxed.
Our algorithms can also be seen as a natural generalization of the Laplace mechanism.
Given a Lipschitz score function $q'$, one can convert it into an $\eps$-DP score function $q$ by adding a Laplace noise. 
Then the Laplace mechanism says that one can just output the max of the noise-added scores. 
However, the Laplace relies crucially on the fact that the noise is a Laplace noise. 
As we will discuss in~\Cref{sec:naiveDP}, under the mere assumption that the score function is $\eps$-DP, outputting the max will inevitably incur a factor of $K$ loss in privacy.

The problem of algorithm selection has also been studied in~\cite{pythia} where the best parameters are learnt from features of the problem. Ligett et al.~\cite{LigettNRWW17} study the problem of picking from a sequence of algorithms with increasing privacy costs, until one with good utility is found, for a special class of mechanisms.

The problem of private median finding, and more generally private percentile estimation has been studied in several works~\cite{NissimRS07,DworkL09, Smith11, BunNSV15}. While syntactically similar to the threshold estimation problem studied in Section~\ref{sec:sparse_vector}, the assumptions on the data in those works are very different from ours and we do not believe that the techniques in those works apply to the setting of interest in this work.

\subsection{Organization}
The rest of the paper is organized as follows. In Section~\ref{sec:priv_select} we present our algorithm for the known threshold case. Section~\ref{sec:sparse_vector} describes our sparse vector and general selection algorithms. We sketch applications of our results in Section~\ref{sec:applications}. The appendices contain some deferred proofs, show why simpler natural approaches do not work for our problem, and show a lower bound on the privacy overhead.

\section{Preliminary and Notations}
For a random variable $X$ and distribution $Q$, we write $X \sim Q$ if $X$ is distributed according to the law of $Q$.


Let $\dis{A}{B}$ be the max-divergence of two random variables defined as follows:
\[
  \dis{A}{B} = \max_{S \subseteq \supp(A)} \inb{\ln \frac{\Pr[A \in S]}{\Pr[B \in S]}}.
\]
Then we define $\disd{A}{B}$ as:
\[
  \disd{A}{B} = \max_{S \subseteq \supp(A)} \inb{\ln \frac{\Pr[A \in S]-\delta}{\Pr[B \in S]}}.
\]
For convenience, for distributions $Q_1$ and $Q_2$, let $q_1,q_2$ be random variables distributed as $Q_1$ and $Q_2$ respectively, then we will also write $\dis{Q_1}{Q_2} \defeq \dis{q_1}{q_2}$, and similarly $\disd{Q_1}{Q_2} \defeq \disd{q_1}{q_2}$.

For a distribution $Q(\Data)$ that depends on datasets $\Data$, we say that $Q$ satisfies \emph{$\eps$-differential privacy} (or simply written as $\eps$-DP), if for every two neighboring datasets $\Data_1, \Data_2$, $\dis{Q(\Data_1)}{Q(\Data_2)} \le \eps$.
And we say $Q$ satisfies \emph{$(\eps,\delta)$-DP} if for every two neighboring datasets $\Data_1, \Data_2$, $\disd{Q(\Data_1)}{Q(\Data_2)} \le \eps$.

Given a function $f$ on dataset $\Data$, we say that $f$ is $t$-Lipschitz if for any two neighboring dataset $\Data,\Data'$, $\abs{f(\Data) - f(\Data')} \le t$.

\section{Private selection}
\label{sec:priv_select}

Let $\set{M_i(\Data)}_{i=1}^K$ be a set of differentially private mechanisms, that is, for every $i$, $M_i$ is a differentially private mechanism with respect to the dataset $\Data$.
We will also refer to the set of $M_i$ as \emph{private candidates}.
For convenience, we will also treat a randomized mechanism $M_i(\Data)$ as a distribution, and write $m \sim M_i(\Data)$ if $m$ follows the output distribution of $M_i(\Data)$. 
Let $\set{q_i}$ be scoring functions over the output of these mechanisms, that is, for $m \sim M_i(\Data)$, $q_i(m)$ is the score for $m$.
We assume that there is a total ordering of the candidates: when two candidates have the same score, we assume that there is an arbitrary tie-breaking rule (e.g., by alphabetical ordering). Given a total ordering of the candidates, without loss of generality we will further assume that each option has a different score.


The goal of \emph{private selection} is to select $(m,i)$ that (approximately) maximizes the score of $q_i(m)$.
Naively, a natural algorithm is to draw samples $m_i \sim M_i$ for every $i$, and then output the pair $(m_i,i)$ with the highest score $q_i(m_i)$.
Unfortunately this naive algorithm is not private.
The detailed discussion and analysis is deferred to~\Cref{sec:naiveDP}.
The next natural algorithm would be to output the $p$-th percentile best, which unfortunately is also not private. Again we defer the analysis to~\Cref{sec:percentileDP}.

In this section, we will start with the following naive algorithm that is guaranteed to be \emph{private} but not very \emph{useful} (has poor utility guarantee):
we choose a candidate $i$ uniformly at random and output $M_i(D)$.
It is not hard to see that such a choice of candidate is at least as private as the individual candidates.
However, the probability of getting a reasonably ``good'' candidate can be of the order $O(1/K)$.
Nevertheless, we will show how to boost its usefulness (utility guarantee) by \emph{thresholding} or \emph{random stopping}.
As a result, this leads to simple and practical algorithms that are also able to compete with the \emph{best} candidates in a differentially private way.

Formally, we consider a randomized mechanism $Q(\Data)$, where every output comes with a utility score $q$: $(x,q) \in \Omega \times \R$.
For convenience, we will abuse notation and also denote the output distribution of the randomized mechanism $Q(\Data)$ by $Q(\Data)$, and write $(\xt,\qt) \sim Q(\Data)$ to indicate that $(\xt,\qt)$ is obtained by running the randomized mechanism $Q(\Data)$.
Given blackbox access to $Q(\Data)$, the goal is to find $(x,q)$ that (approximately) \emph{maximizes} the score: e.g., they are the top $1\%$, that is, $\Pr_{(\xt,\qt)\sim Q(\Data)} [\qt > q] < 0.01$.
When it is clear from the context, we will also simply write $q \sim Q(\Data)$ for taking only the $q$ part of the pair $(x,q)$.

To apply this framework to the private selection problem, we define a randomized mechanism $Q(\Data)$ as follows:
we first sample $i \sim \mathrm{Uniform}[K]$, then sample $m \sim M_i(\Data)$, and evaluate the score $q_i(m)$, and output $\left( (i,m),q_i(m) \right)$.
The above sampling process 
implements an oracle access to the naive algorithm that outputs a candidate uniformly at random.
Our goal is to boost the utility of such a naive algorithm.
While our algorithms work for more general distribution of $Q(\Data)$, where the candidate $i$ can be drawn from \emph{any} samplable distribution,
we will focus in this work on the case when $i$ is drawn uniformly from a finite set of candidates (e.g., due to the lack of domain knowledge).
It is also worth noting that if $M_i$ is $\eps_1$-DP for every $i$, then so is $Q$.
Similarly if $M_i$ is $(\eps_1,\delta_1)$-DP for every $i$, then so is $Q$.

\subsection{Private selection with a known threshold $\T$}
We consider a thresholding algorithm,
which
for a given threshold,
repeatedly samples from the candidates until we get one that is above the threshold. In addition, we have a small probability $\gamma$ of stopping at each step.
See~\cref{alg:thresholding} for a more formal description.

  \begin{algorithm}
    \flushleft Input: a threshold $\T$, a budget $\gamma \le 1$ and $\eps_0 \le 1$, number of steps $T\ge \max\set{\frac{1}{\gamma}\ln \frac{2}{\eps_0 } , 1+\frac{1}{e\gamma} }$, and sampling access to $Q(\Data)$.

    For $j = 1, \cdots, T$:
  \begin{itemize}
    \item draw $(x,q) \sim Q(\Data)$
    \item if $q \ge \T$ then output $(x,q)$ and halt; 
    \item flip a $\gamma$-biased coin: with probability $\gamma$, output $\bot$ and halt;
  \end{itemize}
  Output $\bot$ and halt.
  \caption{Thresholding with a known threshold $\T$.}
  \label{alg:thresholding}
\end{algorithm}

We assume that the adversary can only observe the final output of the algorithm.
We show that for any choice of parameters, the algorithm is private; and if the given threshold $\T$ is a ``good'' threshold, the algorithm is unlikely to output $\bot$.
\begin{theorem}
  Fix any $\eps_1, \delta_1 >0, \eps_0 \in [0,1], \gamma \in [0,1]$.
  Let $T$ be any integer such that $T\ge \max\set{\frac{1}{\gamma}\ln \frac{2}{\eps_0 } , 1+\frac{1}{e\gamma} }$,
  Then~\cref{alg:thresholding} with these parameters satisfies the following:
  \begin{enumerate}[(a)]
	  \item Let $\Aout(\Data)$ be the output of~\cref{alg:thresholding}, then for $q\geq\tau$,
		  \[\Pr[\Aout(\Data) = (x,q)] \propto \Pr_{(\xt,\qt) \sim Q(\Data)}[(\xt,\qt) = (x,q)].\]
	  \item If $Q$ is $\eps_1$-DP, then the output is $(2\eps_1 + \eps_0)$-DP.
	  \item If $Q$ is $(\eps_1,\delta_1)$-DP, then the output is $\inp{2\eps_1 + \eps_0, \; 3 e^{2\eps_1 + \eps_0}\cdot\frac{\delta_1}{\gamma}}$-DP.
    \item Let $\widetilde{T}$ be the number of iterations of the algorithm, and let $p_1 = \Pr_{q \sim Q(\Data)} \inb{q \ge \T}$, then
	    \[\E \widetilde{T} \le \frac{1}{p_1(1-\gamma) + \gamma} \le \min\set{\frac{1}{p_1}, \frac{1}{\gamma}}.\]
    \item
  Furthermore,
  $\Pr\inb{ \hbox{output $\bot$} } \le \frac{(1-p_1) (1+ \eps_0/2)}{p_1}\gamma$.
  \end{enumerate}
  \label{thm:thresholdingDP}
\end{theorem}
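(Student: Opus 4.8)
The plan is to compute the exact output distribution of \cref{alg:thresholding} as a function of the single scalar $p_1(\Data)\defeq\Pr_{q\sim Q(\Data)}[q\ge\T]$, read off (a) and (d), and then obtain (b), (c), (e) by showing that a ``normalization constant'' and a ``correction term'' occurring in that distribution are suitably stable in $\Data$.

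Write $\alpha(\Data)\defeq(1-p_1(\Data))(1-\gamma)$. By freshness of the samples and independence of the coins, the probability that the algorithm reaches (draws at) iteration $j$ equals $\alpha(\Data)^{j-1}$, since iterations $1,\dots,j-1$ must each produce a sample with $\qt<\T$ (probability $1-p_1(\Data)$) and a non-stopping coin (probability $1-\gamma$). Summing over iterations, for $q\ge\T$ we get $\Pr[\Aout(\Data)=(x,q)]=C(\Data)\cdot\Pr_{(\xt,\qt)\sim Q(\Data)}[(\xt,\qt)=(x,q)]$ with $C(\Data)\defeq\sum_{j=1}^{T}\alpha(\Data)^{j-1}=\frac{1-\alpha(\Data)^{T}}{1-\alpha(\Data)}$, while the left side is $0$ for $q<\T$; this is (a). Since $\{\widetilde T\ge j\}$ is exactly the event of reaching iteration $j$, $\E\widetilde T=C(\Data)\le\frac1{1-\alpha(\Data)}=\frac1{p_1(1-\gamma)+\gamma}$, and $p_1(1-\gamma)+\gamma\ge\max\set{p_1,\gamma}$ gives (d). Complementing, and writing $g(p)\defeq p(1-p)^{T-1}(1-\gamma)^{T}$, one gets $\Pr[\Aout(\Data)=\bot]=1-p_1(\Data)C(\Data)=\frac{(1-p_1(\Data))\,(\gamma+g(p_1(\Data)))}{1-\alpha(\Data)}$.

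Next I record what the two lower bounds on $T$ buy. From $p(1-p)^{T-1}\le\frac1{e(T-1)}$ (maximize at $p=1/T$, use $(1-1/T)^{T}\le e^{-1}$), from $(1-\gamma)^{T}\le e^{-\gamma T}\le\eps_0/2$ (using $T\ge\frac1\gamma\ln\frac2{\eps_0}$), and from $e(T-1)\gamma\ge1$ (using $T\ge1+\frac1{e\gamma}$), it follows that $g(p)\le\frac{(1-\gamma)^{T}}{e(T-1)}\le\frac{\eps_0}{2e(T-1)}\le\frac{\gamma\eps_0}{2}$ for every $p$, hence $\gamma+g(p)\in[\gamma,\gamma(1+\tfrac{\eps_0}{2})]$; and $\alpha(\Data)^{T}\le(1-\gamma)^{T}\le\eps_0/2\le 1-e^{-\eps_0}$, hence $\frac1{1-\alpha(\Data)^{T}}\le e^{\eps_0}$. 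Now fix neighbors $\Data,\Data'$. Applying $\eps_1$-DP of $Q$ to $\set{\qt\ge\T}$ and to $\set{\qt<\T}$ gives $p_1(\Data)\le e^{\eps_1}p_1(\Data')$ and $1-p_1(\Data)\le e^{\eps_1}(1-p_1(\Data'))$; the second yields $\alpha(\Data)\le e^{\eps_1}\alpha(\Data')$, the first yields $1-\alpha(\Data)=p_1(\Data)(1-\gamma)+\gamma\le e^{\eps_1}(1-\alpha(\Data'))$ and symmetrically $1-\alpha(\Data')\le e^{\eps_1}(1-\alpha(\Data))$. Combining, $C(\Data)\le\frac1{1-\alpha(\Data)}\le\frac{e^{\eps_1}}{1-\alpha(\Data')}$ and $C(\Data')\ge\frac{e^{-\eps_0}}{1-\alpha(\Data')}$, so $C(\Data)\le e^{\eps_1+\eps_0}C(\Data')$; likewise $\gamma+g(p_1(\Data))\le e^{\eps_0}(\gamma+g(p_1(\Data')))$ since both lie in $[\gamma,\gamma(1+\tfrac{\eps_0}{2})]$. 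Hence the likelihood ratio is at most $e^{2\eps_1+\eps_0}$ at every outcome: for $q\ge\T$, $\frac{\Pr[\Aout(\Data)=(x,q)]}{\Pr[\Aout(\Data')=(x,q)]}=\frac{C(\Data)}{C(\Data')}\cdot\frac{\Pr_{(\xt,\qt)\sim Q(\Data)}[(\xt,\qt)=(x,q)]}{\Pr_{(\xt,\qt)\sim Q(\Data')}[(\xt,\qt)=(x,q)]}\le e^{\eps_1+\eps_0}\cdot e^{\eps_1}$ (the last factor by $\eps_1$-DP of $Q$), and $\frac{\Pr[\Aout(\Data)=\bot]}{\Pr[\Aout(\Data')=\bot]}=\frac{1-p_1(\Data)}{1-p_1(\Data')}\cdot\frac{\gamma+g(p_1(\Data))}{\gamma+g(p_1(\Data'))}\cdot\frac{1-\alpha(\Data')}{1-\alpha(\Data)}\le e^{\eps_1}\cdot e^{\eps_0}\cdot e^{\eps_1}$. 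This proves (b).

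For (e): the bound $\Pr[\Aout(\Data)=\bot]\le\frac{(1-p_1)(1+\eps_0/2)}{p_1}\gamma$ is immediate for $p_1\in\set{0,1}$, and for $p_1\in(0,1)$, substituting the closed form and clearing $1-p_1$ and the positive denominators, it is equivalent to $p_1(\gamma+g(p_1))\le(1+\tfrac{\eps_0}{2})\gamma\,(1-\alpha(\Data))=(1+\tfrac{\eps_0}{2})(\gamma p_1+\gamma^2(1-p_1))$, i.e.\ to $p_1 g(p_1)\le\tfrac{\eps_0}{2}\gamma p_1+(1+\tfrac{\eps_0}{2})\gamma^2(1-p_1)$, which holds since $g(p_1)\le\tfrac{\gamma\eps_0}{2}$ already makes the first right-hand term alone dominate the left. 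For (c), the same computations go through with $p_1(\Data)\le e^{\eps_1}p_1(\Data')+\delta_1$ and $1-p_1(\Data)\le e^{\eps_1}(1-p_1(\Data'))+\delta_1$ replacing the exact-DP bounds, together with a union bound: the probability the algorithm ever draws from the ``bad set'' witnessing the $(\eps_1,\delta_1)$ guarantee of $Q$ is at most $\delta_1\cdot\E\widetilde T\le\delta_1/\gamma$ (using that ``reach iteration $j$'' is independent of the $j$-th draw), and pushing the resulting additive errors through the $e^{2\eps_1+\eps_0}$ multiplicative factors accumulates to at most $3e^{2\eps_1+\eps_0}\delta_1/\gamma$. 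I expect the main obstacle to be not any single step but making the privacy loss come out to exactly $2\eps_1+\eps_0$: this requires $C(\Data)$ to be $(\eps_1+\eps_0)$-stable and $g$ to sit in a multiplicative band of width $1+\eps_0/2$, which is precisely the role of the two lower bounds on $T$; the $(\eps_1,\delta_1)$ bookkeeping for (c) is the secondary delicate point.
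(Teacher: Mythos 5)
Your treatment of parts (a), (b), (d) and (e) is correct and is essentially the paper's own argument: you derive the same closed forms $\Pr[\Aout(\Data)=(x,q)]=p(x,q)\cdot\frac{1-((1-p_1)(1-\gamma))^T}{1-(1-p_1)(1-\gamma)}$ and $\Pr[\Aout(\Data)=\bot]=\frac{(1-p_1)\left(\gamma+p_1(1-p_1)^{T-1}(1-\gamma)^T\right)}{p_1(1-\gamma)+\gamma}$, you use the two lower bounds on $T$ in exactly the same way (namely $(1-\gamma)^T\le\eps_0/2$ and $p_1(1-p_1)^{T-1}\le\frac{1}{e(T-1)}\le\gamma$), and you use $\eps_1$-closeness of $p_1$, of $1-p_1$, and of the point masses; packaging the computation through $C(\Data)$ and $g(p_1)$ is only a cosmetic reorganization of the paper's proof of (a), (b), (d), (e).

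The genuine gap is part (c). Your plan leans on a per-draw ``bad set'' of probability $\delta_1$ outside of which a draw from $Q(\Data)$ is $e^{\eps_1}$-close to a draw from $Q(\Data')$, followed by a union bound over the $\widetilde T$ draws. An $(\eps_1,\delta_1)$-DP guarantee does not supply such a set: the event-to-pointwise conversion (Lemma~\ref{lem:ed_to_cond_e} in the paper) only gives a bad event of probability about $\sqrt{\delta_1}$ together with closeness $\eps_1+\sqrt{2\delta_1}$, so this route yields a guarantee of the shape of Theorem~\ref{thm:maxRandDPapprox} (losses of order $\sqrt{\delta_1}$ in \emph{both} coordinates), not the claimed $\left(2\eps_1+\eps_0,\;3e^{2\eps_1+\eps_0}\delta_1/\gamma\right)$. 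The alternative you mention in passing --- rerunning the algebra with $p\le e^{\eps_1}p'+\delta_1$ and $p_1\ge e^{-\eps_1}\max\set{0,p_1'-\delta_1}$ --- is indeed what the paper does, but it is not a routine substitution and you have not carried it out: because the additive $\delta_1$ is an event-level quantity, one must argue for an arbitrary event $F$ not containing $\bot$ and for the atom $\bot$ separately, prove an upper bound on $\Pr[\Aout(\Data)\in F]$ and a lower bound on $\Pr[\Aout(\Data')\in F]$, and then bound the difference $\Pr[\Aout(\Data)\in F]-e^{2\eps_1+\eps_0}\Pr[\Aout(\Data')\in F]$ by $2e^{2\eps_1}\delta_1/\gamma$, and the corresponding difference for $\bot$ by $e^{2\eps_1+\eps_0}\delta_1/\gamma$; the constant $3$ and the $1/\gamma$ emerge from that subtraction, where the $+\delta_1$ in the numerator and the $-\delta_1$ inside the normalization interact and the denominators are bounded below by $\gamma$. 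Asserting that the additive errors ``accumulate to at most $3e^{2\eps_1+\eps_0}\delta_1/\gamma$'' skips exactly the part of the proof that carries the content of (c).
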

  Due to space considerations, we defer this proof to~\Cref{sec:proof-thresholding}.
  As a remark,  it is clear that in the worst case, the number of iterations of~\cref{alg:thresholding} is no more than $T$; this theorem provides a more average-case guarantee: the larger $p_1$ (or $\gamma$) is, the more likely that the algorithm will terminate (much) sooner than $T$.
Moreover, it is worth noting that the larger the setting of $T$ is, the smaller we can set $\eps_0, \gamma$, providing more privacy and utility.
In particular, the above theorem holds even if we set $\gamma=0,\eps_0=0$ but $T=\infty$, in other words, we run the algorithm till it stops by itself.
However this would not be a very practical setting: if one started with a ``bad'' threshold, the algorithm may never stop. In that case, one may want to stop the algorithm and try a different threshold.
Therefore, for practical purposes one may want to set $\gamma>0$ and $\eps_0>0$.

\subsection{Random stopping without thresholding}
In this subsection, we show that the idea of random stopping leads to a simple private algorithm, even without knowing the threshold.
It is similar to~\cref{alg:thresholding} but without the thresholding part: draw a random number of samples, and then output the best option.

  \begin{algorithm}
    \flushleft Input: a budget $\gamma \le 1$ and the sampling access to $Q(\Data)$.

    Initialize the list (multiset) $S=\emptyset$.

    For $j = 1, \cdots, \infty$:
  \begin{itemize}
    \item draw $(x,q) \sim Q(\Data)$
    \item $S \gets S \cup \set{(x,q)}$
    \item flip a $\gamma$-biased coin: with probability $\gamma$, we output the highest scored candidate from $S$ and halt;
  \end{itemize}
  \caption{Outputting the highest score with random stopping.}
  \label{alg:maxRand}
\end{algorithm}

\begin{theorem}
	Fix any $\eps_1>0, \gamma \in [0,1]$.
	    If $Q$ is $\eps_1$-DP, then the output of~\cref{alg:maxRand} is $(3\eps_1 )$-DP.
  \label{thm:maxRandDP}
\end{theorem}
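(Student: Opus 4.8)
The plan is to write the output distribution of the algorithm in closed form as a reweighting of $Q(\Data)$, and then bound the pointwise likelihood ratio between a dataset $\Data$ and a neighbor $\Data'$ by applying the $\eps_1$-DP of $Q$ twice. First I would note that the number of iterations $\widetilde T$ is geometric with parameter $\gamma$, i.e.\ $\Pr[\widetilde T = t] = (1-\gamma)^{t-1}\gamma$, and that conditioned on $\widetilde T = t$ the algorithm outputs the highest-scoring of $t$ i.i.d.\ draws from $Q(\Data)$. By symmetry over which of the $t$ draws is the winner, for any measurable $S$,
\[
\Pr[\M(\Data)\in S] = \sum_{t\ge 1}(1-\gamma)^{t-1}\gamma\cdot t\int_S L_\Data(q)^{\,t-1}\,dQ(\Data)(x,q),
\]
where $L_\Data(q) = \Pr_{(\xt,\qt)\sim Q(\Data)}[\qt < q]$ is the probability that a fresh sample loses to score $q$ (under the distinct-scores/tie-breaking convention of the paper; with an explicit tie-break this becomes the probability of a measurable event depending on the full outcome, which changes nothing below).

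Using $\sum_{t\ge1} t u^{t-1} = (1-u)^{-2}$ with $u=(1-\gamma)L_\Data(q)$, and Tonelli to exchange the sum and integral (all terms nonnegative), this collapses to
\[
\Pr[\M(\Data)\in S] = \int_S \frac{\gamma}{\bigl(\gamma + (1-\gamma)P_\Data(q)\bigr)^2}\,dQ(\Data)(x,q),
\qquad P_\Data(q) := \Pr_{(\xt,\qt)\sim Q(\Data)}[\qt \ge q],
\]
since $1-(1-\gamma)L_\Data(q) = \gamma + (1-\gamma)P_\Data(q)$. Passing to densities with respect to a common dominating measure $\mu$ of $Q(\Data)$ and $Q(\Data')$ (e.g.\ $\mu = \tfrac12(Q(\Data)+Q(\Data'))$) and writing $h_\Data = dQ(\Data)/d\mu$, we have $\Pr[\M(\Data)\in S] = \int_S g_\Data(q)\, h_\Data(x,q)\, d\mu$ with $g_\Data(q) = \gamma/\bigl(\gamma+(1-\gamma)P_\Data(q)\bigr)^2$. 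The $\eps_1$-DP of $Q$ gives two facts: (i) $h_\Data(x,q)\le e^{\eps_1}h_{\Data'}(x,q)$ for $\mu$-a.e.\ $(x,q)$ (DP in density form), and (ii) $P_{\Data'}(q)\le e^{\eps_1}P_\Data(q)$, by applying DP to the event $\{\qt\ge q\}$. Combining (ii) with the elementary inequality $\gamma + (1-\gamma)e^{\eps_1}a \le e^{\eps_1}\bigl(\gamma+(1-\gamma)a\bigr)$ yields $g_\Data(q)/g_{\Data'}(q)\le e^{2\eps_1}$; together with (i) this gives $g_\Data h_\Data \le e^{3\eps_1} g_{\Data'}h_{\Data'}$ a.e., and integrating over $S$ gives $\Pr[\M(\Data)\in S]\le e^{3\eps_1}\Pr[\M(\Data')\in S]$, which is $(3\eps_1)$-DP.

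I expect the main obstacle to be conceptual rather than computational: the reweighting factor $g_\Data(q)$ is a global functional of $Q(\Data)$ (a tail-probability/CDF value), so the pointwise density comparison cannot be carried out with the local density $h_\Data$ alone — it genuinely requires invoking $\eps_1$-DP a second time on the tail event $\{\qt \ge q\}$, and the squared denominator is exactly what turns that second application into a factor $e^{2\eps_1}$, accounting for the extra $\eps_1$ over the $2\eps_1$ bound of the known-threshold algorithm. The remaining technical points are routine: justifying the sum/integral interchange, handling the edge cases $\gamma = 1$ (the output is just $Q(\Data)$) and $\gamma \to 0$, and the standard measure-theoretic step that an a.e.\ pointwise density-ratio bound implies the DP inequality for all measurable $S$.
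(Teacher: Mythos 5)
Your overall strategy---write the output law of the random-stopping algorithm in closed form as a reweighting of $Q(\Data)$, then bound the likelihood ratio by invoking $\eps_1$-DP once for the density and twice for the tail-dependent factor---is the same route the paper takes. However, the closed form you derive is incorrect when $Q(\Data)$ has atoms, and the theorem (as well as the paper's intended application, where $Q$ picks one of $K$ possibly discrete mechanisms) must cover that case; indeed the paper's own proof works with $p=\Pr_{\qt\sim Q(\Data)}[\qt=q]>0$. The distinct-scores/tie-breaking convention only separates \emph{different} candidates; repeated draws of the \emph{same} candidate can produce literally the same pair $(x,q)$, and for such ties no tie-break makes the ``winner'' event symmetric. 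Your symmetric decomposition $t\,L_\Data(q)^{t-1}$ with $L_\Data(q)=\Pr[\qt<q]$ then misses the runs in which two or more draws land on the winning atom. Concretely, if $Q(\Data)$ is a point mass, your collapsed formula assigns total probability $\int \gamma/\bigl(\gamma+(1-\gamma)\bigr)^2\,dQ(\Data)=\gamma$ rather than $1$, so it is not the output distribution; the parenthetical claim that an explicit tie-break ``changes nothing below'' is exactly where the argument fails.

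The correct closed form, which the paper obtains by computing $\Pr[\max S\le q\mid |S|=j]=(1-p_0)^j$ and then the chance that the atom was drawn, is $\Pr[\Aout(\Data)=(x,q)]=\gamma p/\bigl((p_0(1-\gamma)+\gamma)(p_1(1-\gamma)+\gamma)\bigr)$ with $p_0=\Pr[\qt>q]$ and $p_1=\Pr[\qt\ge q]$: your squared denominator must be replaced by a product of two \emph{different} tail factors (these coincide only in the atomless case, where your formula is fine). You could also salvage your winner decomposition by declaring the winner to be the first draw attaining the maximum, which gives the asymmetric weight $(1-p_1)^{i-1}\,p\,(1-p_0)^{t-i}$ and sums to the same expression. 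With the corrected formula your final step goes through verbatim: the density ratio contributes $e^{\eps_1}$, and each factor $\gamma+(1-\gamma)p_0$, $\gamma+(1-\gamma)p_1$ moves by at most $e^{\eps_1}$ between neighbors via the same elementary inequality you use, giving $(3\eps_1)$-DP. So the conclusion and the shape of the argument are right, but as written the proof only establishes the theorem for atomless $Q(\Data)$.
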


\begin{proof}
	We first consider the event of getting the output $(x,q)$ from~\cref{alg:maxRand} on neighboring datasets $\Data$ and $\Data'$.
  Without loss of generality, we assume that each option has a different score.\footnote{Otherwise, whenever we write $q_1 > q$, we break ties using the same total ordering of the candidates.}
  Then we denote
  \begin{align*}
	  p \defeq& \Pr_{\qt \sim Q(\Data)} \inb{\qt = q } \quad \quad \hbox{ and } \quad \quad p' \defeq \Pr_{\qt \sim Q(\Data')} \inb{\qt = q }, \\
	  p_0 \defeq& \Pr_{\qt \sim Q(\Data)} \inb{\qt > q } \quad \quad \hbox{ and } \quad \quad p_0' \defeq \Pr_{\qt \sim Q(\Data')} \inb{\qt > q}, \\
	p_1 \defeq& \Pr_{\qt \sim Q(\Data)} \inb{\qt \ge q } \quad \quad \hbox{ and } \quad \quad p_1' \defeq \Pr_{\qt \sim Q(\Data')} \inb{\qt \ge q} .
  \end{align*}
  Notice that $p= \Pr_{(\xt,\qt) \sim Q(\Data)} \inb{ (\xt,\qt) = (x,q) }$, $p_1= p_0 + p$, and $p_1' = p_0' + p'$.

  We define the highest score for a set (or a multiset) $S$ of tuples $(x,q)$ as
  \[
	  \max S\defeq \max_{ (x,q) \in S } q. 
  \]
  Let $\Aout(\Data)$ be the output of~\cref{alg:maxRand} on $\Data$,  then we have
  \begin{align*}
	  &\Pr\inb{ \Aout(\Data) = (x,q) } \\
	  =& \sum_{j=1}^{\infty}\Pr\inb{ \Aout(\Data) = (x,q) \wedge \abs{S} = j } \\
	  =&\sum_{j=1}^\infty \Pr\inb{\abs{S} = j} \cdot \Pr\inb{\hbox{$\max S \le q$, and $(x,q) \in S$ }\mid \abs{S} = j} \\
	  =&\sum_{j=1}^\infty \inp{1-\gamma}^{j-1} \gamma \cdot \Pr\inb{\hbox{$\max S \le q$, and $(x,q) \in S$ }\mid \abs{S} = j}.
  \end{align*}
  Then, observe that
  \begin{align*}
	  \Pr\inb{\hbox{$\max S \le q$} \mid \abs{S}=j} = (1-p_0)^j,
  \end{align*}
  and
  \begin{align*}
	  \Pr\inb{ (x,q) \in S \mid \hbox{$\max S \le q$, and $\abs{S}=j$}} = 1 - \inp{1-\frac{p}{1-p_0}}^j= 1 - \inp{\frac{1-p_1}{1-p_0}}^j.
  \end{align*}
  Together we have
  \begin{align*}
    \Pr\inb{ \Aout(\Data) = (x,q) }
    =&\sum_{j=1}^\infty \inp{1-\gamma}^{j-1} \gamma \cdot (1-p_0)^j \inp{1 - \inp{\frac{1-p_1}{1-p_0}}^j} \\
    =&\sum_{j=1}^\infty \inp{1-\gamma}^{j-1} \gamma \cdot  \inp{(1-p_0)^j - (1-p_1)^j} \\
    =&\frac{\gamma(1-p_0)}{1 - (1-\gamma)(1-p_0)} - \frac{\gamma(1-p_1)}{1 - (1-\gamma)(1-p_1)} \\
    =&\frac{\gamma(p_1-p_0)}{\inp{p_0(1-\gamma)+\gamma}\inp{p_1(1-\gamma)+\gamma}} \\
    =&\frac{\gamma p}{\inp{p_0(1-\gamma)+\gamma}\inp{p_1(1-\gamma)+\gamma}} .
  \end{align*}
  Since $Q$ is $\eps_1$-DP, we have that $p,p_0,p_1$ are $\eps_1$-close (in a DP sense) to $p',p_0',p_1'$, respectively. 
  Then,
  \begin{align*}
    \frac{\Pr\inb{ \Aout(\Data) = (x,q) } }{\Pr\inb{ \Aout(\Data') = (x,q) } }
    = \frac{p}{p'}\cdot \frac{p_0'(1-\gamma) + \gamma}{p_0(1-\gamma) + \gamma} \cdot \frac{p_1'(1-\gamma) + \gamma}{p_1(1-\gamma) + \gamma}
    \le  \exp(3\eps_1) .
  \end{align*}
\end{proof}
The following utility bound holds for this algorithm.
\begin{theorem}
  For $p > 0$, let $\Qp(D) = \sup \{z: \Pr[Q(D) \ge z] > p\}$. Then the output of~\cref{alg:maxRand} has score at least $\Qp(D)$ except with probability $\gamma / p$.
\end{theorem}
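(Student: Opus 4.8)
The plan is to compute, for a fixed level $z$ strictly below $\Qp(D)$, the probability that the highest-scored sample produced before the random stopping time lies below $z$, bound it by $\gamma/p$, and then let $z \uparrow \Qp(D)$.

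First, fix $z < \Qp(D)$ and set $a \defeq \Pr_{q \sim Q(D)}[q \ge z]$. By definition of $\Qp(D)$ as a supremum there is some $z'' \in (z, \Qp(D)]$ with $\Pr[Q(D) \ge z''] > p$, and monotonicity of the tail gives $a \ge \Pr[Q(D) \ge z''] > p$. Next I would analyze the failure event $\{\max S < z\}$ round by round: as long as every sample drawn so far has score $< z$, round $j$ contributes a fresh i.i.d.\ draw from $Q(D)$ (which is $< z$ with probability $1-a$) followed by the $\gamma$-biased stopping coin, and the output has score $< z$ exactly when all samples up to the stopping time are $< z$. Hence
\[
  \Pr[\max S < z] \;=\; \sum_{j=1}^{\infty} \bigl((1-\gamma)(1-a)\bigr)^{j-1}\,\gamma\,(1-a) \;=\; \frac{\gamma(1-a)}{1-(1-\gamma)(1-a)} \;=\; \frac{\gamma(1-a)}{a+\gamma(1-a)}\,.
\]
Since $a + \gamma(1-a) \ge a$ and $1 - a \le 1$, the right-hand side is at most $\gamma/a < \gamma/p$. (One could instead sum the per-output formula $\Pr[\Aout(D) = (x,q)] = \gamma p/\bigl((p_0(1-\gamma)+\gamma)(p_1(1-\gamma)+\gamma)\bigr)$ from the proof of~\Cref{thm:maxRandDP} over all outputs with $q < z$, but the round-by-round computation is cleaner.)

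Finally, the events $\{\max S < z\}$ increase to $\{\max S < \Qp(D)\}$ as $z \uparrow \Qp(D)$, so by continuity of measure $\Pr[\max S < \Qp(D)] = \lim_{z \uparrow \Qp(D)} \Pr[\max S < z] \le \gamma/p$; that is, the output of~\cref{alg:maxRand} has score at least $\Qp(D)$ except with probability $\gamma/p$. The only minor points to keep in mind are that the algorithm halts almost surely when $\gamma > 0$ (so the output is well defined; for $\gamma = 0$ the bound is vacuous) and the strict-versus-nonstrict inequality at the threshold, which is exactly what the limiting step handles. I do not anticipate any real obstacle.
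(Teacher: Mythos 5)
Your proof is correct and follows essentially the same route as the paper: condition on the geometric stopping time $|S|=j$, multiply by the probability that all $j$ i.i.d.\ samples fall below the level, and sum the geometric series to get the bound $\gamma/p$. The only difference is cosmetic — you work at a level $z<\Qp(D)$ where the tail probability is strictly greater than $p$ and pass to the limit, whereas the paper applies the same computation directly at $\Qp(D)$ (implicitly using $\Pr[Q(D)\ge \Qp(D)]\ge p$); both are fine.
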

\begin{proof}
Let $\Aout(\Data)$ be the output of~\cref{alg:maxRand} on $\Data$, then we write
\begin{align*}
	\Pr[\Aout(\Data) < \Qp(\Data)]
  & = \sum_{j=1}^\infty \Pr\inb{\abs{S} = j} \cdot \Pr\inb{\max S < \Qp(\Data)\mid \abs{S} = j}\\
  &= \sum_{j=1}^\infty \inp{1-\gamma}^{j-1} \gamma \cdot  (1-p)^j\\
  &\leq \gamma / p.
\end{align*}
\end{proof}
Instead of random stopping, one can also design a hard stopping variant of this algorithm similar to that of~\cref{alg:thresholding}, and allow for $\ed$-DP input algorithms.

\begin{theorem}
  	Fix any $\gamma \in [0,1], \delta_2 > 0$ and let $T=\frac 1 \gamma \log \frac 1 {\delta_2}$. Consider a variant of~\cref{alg:maxRand} that outputs the highest scored candidate from $S$ if $j$ reaches $T$.
  	    If $Q$ is $(\eps_1, \delta_1)$-DP, then the output of this algorithm is $(3\eps_1 + 3\sqrt{2\delta_1}, \delta)$-DP for $\delta = \sqrt{2\delta_1} T + \delta_2$.
    \label{thm:maxRandDPapprox}
\end{theorem}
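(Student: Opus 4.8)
The plan is to rerun the proof of \cref{thm:maxRandDP} with two modifications: handle the hard cap at step $T$ separately, and convert the $(\eps_1,\delta_1)$-DP hypothesis on $Q$ into a pointwise ``approximately pure'' statement about a single draw. For the cap, couple the capped algorithm with \cref{alg:maxRand} using the same samples $(x_1,q_1),(x_2,q_2),\dots$ and the same $\gamma$-biased stopping coins; the two runs have the same output unless the stopping coin is tails on all of the first $T$ steps, which occurs with probability $(1-\gamma)^T\le e^{-\gamma T}=\delta_2$. Hence the capped output is within total variation $\delta_2$ of the output of \cref{alg:maxRand} on every dataset, and it suffices to show the latter is $(3\eps_1+3\sqrt{2\delta_1},\,\sqrt{2\delta_1}T)$-DP; applying the coupling on both neighbors turns the additive term into $\sqrt{2\delta_1}T+(1+e^{3\eps_1+3\sqrt{2\delta_1}})\delta_2$, which I absorb into the free parameter $\delta_2$.

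Write $P\defeq Q(\Data),\ P'\defeq Q(\Data')$ for neighbors $\Data,\Data'$. From $P(S)\le e^{\eps_1}P'(S)+\delta_1$ one gets the elementary privacy-loss tail bound $\Pr_{y\sim P}[\ln\tfrac{P(y)}{P'(y)}>\eps_1+t]\le\tfrac{\delta_1}{1-e^{-t}}$, and taking $t=\sqrt{2\delta_1}$ makes the right side $O(\sqrt{2\delta_1})$ (for $\delta_1$ below an absolute constant); symmetrically with $P,P'$ swapped. Call a possible output $(x,q)$ \emph{bad} if $\ln\tfrac{P(x,q)}{P'(x,q)}>\eps_1+\sqrt{2\delta_1}$, or if $q$ lies above the $(1-\theta)$-quantile of $P$, where $\theta\defeq\tfrac{\sqrt{2\delta_1}}{2e^{\eps_1}}$. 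A single draw from $P$ is bad with probability $O(\sqrt{2\delta_1})$, and since the output of \cref{alg:maxRand} is the maximum of $N$ i.i.d.\ draws from $P$ with $N$ independent of the draws and $\EX N=1/\gamma\le T$, a union bound gives that its output is bad with probability $O(\sqrt{2\delta_1}T)$, and a short calculation with the above choice of $\theta$ sharpens this to $\le\sqrt{2\delta_1}T$.

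Now reproduce the pointwise computation in the proof of \cref{thm:maxRandDP}, which (with $p,p_0,p_1,p',p_0',p_1'$ as defined there) gives $\Pr[\Aout(\Data)=(x,q)]=\tfrac{\gamma p}{(p_0(1-\gamma)+\gamma)(p_1(1-\gamma)+\gamma)}$ and likewise for $\Data'$. For a \emph{good} output $(x,q)$: lying outside the high-likelihood-ratio set gives $p\le e^{\eps_1+\sqrt{2\delta_1}}p'$; and $p_1\ge p_0\ge\theta$, so the raw bounds $p_0'\le e^{\eps_1}p_0+\delta_1$, $p_1'\le e^{\eps_1}p_1+\delta_1$ upgrade to $p_0'\le e^{\eps_1+\sqrt{2\delta_1}}p_0$, $p_1'\le e^{\eps_1+\sqrt{2\delta_1}}p_1$. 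Exactly as in \cref{thm:maxRandDP}, each of the three factors in the ratio is then $\le e^{\eps_1+\sqrt{2\delta_1}}$, so $\Pr[\Aout(\Data)=(x,q)]\le e^{3\eps_1+3\sqrt{2\delta_1}}\Pr[\Aout(\Data')=(x,q)]$ for good $(x,q)$. For an arbitrary measurable $\mathcal S$, summing this over the good outputs in $\mathcal S$ and bounding the bad outputs by the probability above yields $\Pr[\Aout(\Data)\in\mathcal S]\le e^{3\eps_1+3\sqrt{2\delta_1}}\Pr[\Aout(\Data')\in\mathcal S]+\sqrt{2\delta_1}T$; with the first step this proves the theorem.

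The step I expect to be most delicate is controlling the tail quantities $p_0,p_1$: these are probabilities of the \emph{sets} $\{q'>q\},\{q'\ge q\}$, not of single outcomes, so membership of the output in the likelihood-ratio good set does not by itself control them. The fix is the second clause of ``bad'': outside the extreme upper tail of $P$ the additive approximate-DP bound on such a tail probability upgrades to a multiplicative bound with loss only $\sqrt{2\delta_1}$ (via the choice of $\theta$), while a single draw lands in that upper tail with probability only $O(\sqrt{2\delta_1})$, so the extra union-bound cost is still $O(\sqrt{2\delta_1}T)$; choosing the constant in $\theta$ as above makes the two sources of badness sum to exactly $\sqrt{2\delta_1}T$. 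Atoms in the score distribution contribute only lower-order slack to the quantile estimate and can be absorbed by an infinitesimal perturbation.
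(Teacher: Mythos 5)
Your route is genuinely different from the paper's: the paper conditions each of the at most $T$ calls on the complement of the small bad event supplied by \cref{lem:ed_to_cond_e} (together with the event of reaching step $T$), re-uses \cref{thm:maxRandDP} as a black box on the conditioned, purely $(\eps_1+\sqrt{2\delta_1})$-close per-call distributions, and converts back with \cref{lem:cond_e_to_ed}; you instead redo the pointwise computation of \cref{thm:maxRandDP} and control the three factors directly on a set of good outputs, with an expectation/union bound over the random number of draws. Most of that is fine: the privacy-loss tail bound is exactly \cref{lem:ed_to_cond_e}, and the arithmetic upgrading $p_0'\le e^{\eps_1}p_0+\delta_1$ to $p_0'\le e^{\eps_1+\sqrt{2\delta_1}}p_0$ when $p_0\ge\theta$ checks out. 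Two minor quibbles: your cap-versus-uncapped reduction, applied on both neighbors, yields $\delta=\sqrt{2\delta_1}T+(1+e^{3\eps_1+3\sqrt{2\delta_1}})\delta_2$ rather than the stated $\sqrt{2\delta_1}T+\delta_2$ with the stated $T$ (harmless in spirit, given the paper's disclaimer about not optimizing $\delta$, but not literally the claimed constant), and $\E[N]=1/\gamma\le T$ needs $\delta_2\le e^{-1}$.

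The genuine gap is the step ``$p_1\ge p_0\ge\theta$ for good outputs.'' Your quantile clause controls the mass at or above $q$, not the strictly-above mass $p_0=\Pr_{P}[\widetilde{q}>q]$, and atoms are not a lower-order nuisance here: if $P$ is (say) a point mass at $(x_0,q_0)$, then the realized output is good under your definition with probability $1$, yet $p_0=0$, while $(\eps_1,\delta_1)$-DP permits $p_0'=\delta_1$; the factor $\frac{p_0'(1-\gamma)+\gamma}{p_0(1-\gamma)+\gamma}$ is then as large as $1+\delta_1/\gamma$, which exceeds $e^{\eps_1+\sqrt{2\delta_1}}$ once $\gamma$ is small relative to $\delta_1/(\eps_1+\sqrt{2\delta_1})$. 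So the inequality you invoke can fail completely, and ``an infinitesimal perturbation'' does not repair it, because perturbing the score law changes $P$, $p$, $p_0$, $p_1$ and the exact output formula you are quoting (a rigorous version would refine each atom by an independent uniform tie-breaker and redo the computation for the refined process). Two clean fixes: (i) drop the quantile clause and note that in the only regime where the theorem is non-vacuous one has $\gamma\ge\sqrt{2\delta_1}$ (otherwise $\sqrt{2\delta_1}T\ge\log(1/\delta_2)\ge 1$), so the additive $\delta_1$ in $p_0'\le e^{\eps_1}p_0+\delta_1$ and $p_1'\le e^{\eps_1}p_1+\delta_1$ is absorbed by the $+\gamma$ floor in the denominators, giving each factor $\le e^{\eps_1}+\delta_1/\gamma\le e^{\eps_1+\sqrt{2\delta_1}}$ with no tail condition at all; or (ii) follow the paper: condition every call on the bad event of \cref{lem:ed_to_cond_e}, after which all comparisons of $p$, $p_0$, $p_1$ are purely multiplicative and \cref{thm:maxRandDP} applies verbatim, and fold the cap event into the same conditioning so that \cref{lem:cond_e_to_ed} returns $\delta=\sqrt{2\delta_1}T+\delta_2$ exactly.
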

\begin{proof}
  We simply reduce to~\cref{thm:maxRandDP} using simple properties of $(\eps, \delta)$-DP. We give details next, using folklore results proven in~\Cref{app:ed_v_cond_e}. Fix a pair of neighboring datasets $D$ and $D'$. Then we can define an event $B$ such that $\Pr[B] \leq \sqrt{\delta_1}$ and that $Q(D) \mid B^c$ and $Q(D') \mid B^c$ are multiplicatively $\eps_1 + \sqrt{2\delta_1}$ close. Let $B_j$ be the event $B$ in the $j$th call to $Q$.
   Further, let $C$ be the event that the algorithm reaches step $T$. Conditioned on $(\cup_{j=1}^T B_j \cup  C)^c$, the run of this algorithm can be coupled with a run of~\cref{alg:maxRand} for a pure DP $Q$. Further, the probability of the event $\cup_j B_j \cup C$ is at most $\sqrt{2\delta_1} T + \delta_2$. The claim follows.
\end{proof}
Since $\delta_1$ is typically smaller than a polynomial, we have not attempted to optimize the $\delta$ term in this theorem.
We conclude with a remark that, in the case when $Q$ satisfies purely $\eps_1$-DP, one can show that the hard stopping variant of~\cref{alg:maxRand} preserves purely $\approx 3\eps_1$-DP.

\begin{theorem}
	Fix any $\eps_0 \in (0, 1/2), \gamma \in [0,1], \delta_2 > 0$ and let $T=\left\lceil \frac{1}{\gamma}  \inp{\ln \frac{2(1+\gamma)^2}{\eps_0 \gamma^2} + \ln \ln \frac{2(1+\gamma)^2}{\eps_0 \gamma^2}}\right\rceil$. Consider a variant of~\cref{alg:maxRand} that outputs the highest scored candidate from $S$ if $j$ reaches $T$.
	If $Q$ is $\eps_1$-DP, then the output of this algorithm is $(3\eps_1 + 3\eps_0)$-DP. 
    \label{thm:maxRandDPstop}
\end{theorem}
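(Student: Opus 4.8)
The plan is to reduce to \Cref{thm:maxRandDP}: I will show that, for every pair of neighboring datasets $\Data,\Data'$ and every output $(x,q)$, the truncated algorithm's output density agrees with that of \cref{alg:maxRand} (the untruncated, purely random‑stopping version) up to a multiplicative factor $1\pm\eta_0$, with $\eta_0$ depending only on $\gamma$ and $T$, and then that the stated choice of $T$ forces $\eta_0<\eps_0$. Throughout assume $\gamma\in(0,1]$ (so $T<\infty$). I keep the notation $p,p_0,p_1$ from the proof of \Cref{thm:maxRandDP}, set $d_b\defeq\gamma+(1-\gamma)p_b$ for $b\in\{0,1\}$, and let $\Aout,\Aout^\infty$ denote the outputs of the truncated variant and of \cref{alg:maxRand}; recall from that proof that $\Pr[\Aout^\infty(\Data)=(x,q)]=\gamma p/(d_0 d_1)$.

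First I would write down the output law of the truncated variant. The only change from \cref{alg:maxRand} is that the number of samples $\abs{S}$ is now $\min\{G,T\}$ with $G$ geometric of rate $\gamma$, so $\Pr[\abs{S}=j]=(1-\gamma)^{j-1}\gamma$ for $1\le j\le T-1$ and $\Pr[\abs{S}=T]=(1-\gamma)^{T-1}$. Reusing the conditional probability $\Pr[\max S\le q,\ (x,q)\in S\mid \abs{S}=j]=(1-p_0)^j-(1-p_1)^j$ from the proof of \Cref{thm:maxRandDP}, summing the (now finite) series, and comparing with the geometric sum computed there, a routine truncation‑error calculation gives
\begin{align*}
\Pr[\Aout(\Data)=(x,q)]=\frac{\gamma p}{d_0 d_1}+\Delta(\Data),\qquad \Delta(\Data)=(1-\gamma)^T\inp{\frac{h(p_0)}{d_0}-\frac{h(p_1)}{d_1}},
\end{align*}
where $h(y)\defeq y(1-y)^T\in[0,1]$ for $y\in[0,1]$.

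The key step — and the main obstacle — is to bound $\abs{\Delta(\Data)}$ by a small multiple of $\Pr[\Aout^\infty(\Data)=(x,q)]=\gamma p/(d_0 d_1)$; the point is that the (possibly tiny) point mass $p$ must cancel, so that the error term stays bounded uniformly in $(x,q)$. To this end I would write $d_1 h(p_0)-d_0 h(p_1)=d_0\bigl(h(p_0)-h(p_1)\bigr)+(1-\gamma)p\,h(p_0)$. Since $h'(y)=(1-y)^{T-1}\bigl(1-(T+1)y\bigr)$ satisfies $\abs{h'(y)}\le 2$ on $[0,1]$ — for $(T+1)y\le 1$ both factors lie in $[0,1]$, while for $(T+1)y>1$ one has $\abs{h'(y)}\le (T+1)\,y(1-y)^{T-1}\le (T+1)/T\le 2$ — the mean value theorem gives $\abs{h(p_0)-h(p_1)}\le 2\abs{p_1-p_0}=2p$; combined with $d_0,d_1\le 1$ and $h(p_0)\le 1$ this yields $\abs{d_1 h(p_0)-d_0 h(p_1)}\le 3p$, hence $\abs{\Delta(\Data)}\le 3p(1-\gamma)^T/(d_0 d_1)=\eta_0\cdot\Pr[\Aout^\infty(\Data)=(x,q)]$ with $\eta_0\defeq 3(1-\gamma)^T/\gamma$.

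Finally I would check the choice of $T$ and assemble the bound. Put $x\defeq 2(1+\gamma)^2/(\eps_0\gamma^2)$; from $\eps_0<\tfrac12$ and $\gamma\le 1$ we get $x>4$, and by definition of $T$ we have $\gamma T\ge \ln x+\ln\ln x$, so $(1-\gamma)^T\le e^{-\gamma T}\le 1/(x\ln x)$ and $\eta_0\le 3/(\gamma x\ln x)$; since $\eps_0=2(1+\gamma)^2/(\gamma^2 x)$ and $\gamma/(1+\gamma)^2\le \tfrac14$ on $[0,1]$, this gives $\eta_0/\eps_0\le 3\gamma/\bigl(2(1+\gamma)^2\ln x\bigr)\le 3/(8\ln x)<1$, so indeed $\eta_0<\eps_0$. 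Then for any neighboring $\Data,\Data'$ and any $(x,q)$ with $p>0$,
\begin{align*}
\frac{\Pr[\Aout(\Data)=(x,q)]}{\Pr[\Aout(\Data')=(x,q)]}=\frac{\Pr[\Aout^\infty(\Data)=(x,q)]}{\Pr[\Aout^\infty(\Data')=(x,q)]}\cdot\frac{1+\Delta(\Data)/\Pr[\Aout^\infty(\Data)=(x,q)]}{1+\Delta(\Data')/\Pr[\Aout^\infty(\Data')=(x,q)]}\le e^{3\eps_1}\cdot\frac{1+\eps_0}{1-\eps_0}\le e^{3\eps_1+3\eps_0},
\end{align*}
using the explicit pointwise ratio bound derived in the proof of \Cref{thm:maxRandDP} for the first factor, $\abs{\Delta(\cdot)/\Pr[\Aout^\infty(\cdot)=(x,q)]}\le\eta_0<\eps_0$ together with the monotonicity of $t\mapsto\frac{1+t}{1-t}$ for the second, and $\ln\frac{1+\eps_0}{1-\eps_0}\le 3\eps_0$ for $\eps_0\in(0,\tfrac12)$; the case $p=0$ is trivial since then $\Delta(\Data)=0$ and $(x,q)$ has density $0$ under both mechanisms. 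This is exactly $(3\eps_1+3\eps_0)$-DP.
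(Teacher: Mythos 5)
Your proof is correct, and its overall architecture is the same as the paper's: compute the truncated output law exactly, show it agrees with the infinite-horizon law $\gamma p/(d_0 d_1)$ up to a multiplicative $(1\pm\eps_0)$ factor, and then multiply the $e^{3\eps_1}$ pointwise bound from \cref{thm:maxRandDP} by $\frac{1+\eps_0}{1-\eps_0}\le e^{3\eps_0}$. Where you differ is in how the truncation error is controlled. The paper keeps the correction in the form $a^T-b^T$ and $p_1a^T-p_0b^T$ (with $a=(1-\gamma)(1-p_0)$, $b=(1-\gamma)(1-p_1)$), bounds these by telescoping plus an AM--GM estimate, and consequently needs $T(1-\gamma)^T\le \eps_0\gamma/(1+\gamma)^2$ --- which is exactly why the prescribed $T$ carries the extra $\ln\ln$ term. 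You instead collapse the error into the single expression $(1-\gamma)^T\bigl(h(p_0)/d_0-h(p_1)/d_1\bigr)$ with $h(y)=y(1-y)^T$, and the mean value theorem with $\abs{h'}\le 2$ cleanly extracts the factor $p$, giving a relative error $\eta_0=3(1-\gamma)^T/\gamma$ that only requires $(1-\gamma)^T\le\eps_0\gamma/3$; the stated $T$ then satisfies this with room to spare, and in fact your bound shows the $\ln\ln$ correction in $T$ is unnecessary (any $T\ge\frac1\gamma\ln\frac{3}{\eps_0\gamma}$ would do). So the two arguments buy the same theorem, with yours giving a slightly simpler error estimate and a mildly weaker requirement on $T$, while the paper's bookkeeping stays closer to the explicit closed-form sum it had already derived. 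Two minor points, neither a gap: your ``routine truncation-error calculation'' for $\Delta(\Data)$ is stated rather than shown (it does check out), and, like the paper, your argument implicitly needs $\gamma>0$ for $T$ to be finite --- the $\gamma=0$ endpoint of the stated range is degenerate in both treatments.
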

The proof of this theorem is quite involved and is deferred to~\Cref{sec:proof-maxRandDPstop}.

\section{Searching for a percentile-threshold: privacy-preserving sparse vector}
\label{sec:sparse_vector}
In this section we consider the problem of searching for a percentile-threshold $\T$ for any given percentile $\pstar$ in a differentially private way.
We start by defining some notations.
Given any sequence of randomized queries $\set{Q_i}$, 
we write $q_i \sim Q_i(\Data)$ to indicate that $q_i$ is obtained from running the randomized query $Q_i$ on dataset $\Data$.
In other words, $q_i \sim Q_i(\Data)$ means that $q_i$ follows the output distribution of the randomized query $Q_i$ on dataset $\Data$.
We will treat these $Q_i(\Data)$ as samplable distributions, where each $Q_i$ is $\eps_1$-DP.
Then for any sequence of thresholds $\set{\T_i}$,
and a target threshold $\pstar \in (0,1)$,
we would like to test if $\Pr_{q_i \sim Q_i(\Data)}[q_i \ge \T_i] > \pstar$ 
and output the first one that is above the threshold, and in a differentially private way. 

It is worth noting that this can be seen as an extension of the standard sparse vector algorithm for Lipschitz queries:
given $1$-Lipschitz queries $f_1,\cdots, f_k$ and a threshold $\T_0$, if we set $\pstar=\frac{1}{2}$, $Q_i = f_i + \Lap\inp{\frac{4}{\eps_1}}$ and $\T_i = \T_0$,
then it is not hard to check that the queries $Q_i$ are now $\eps_1$-DP, and the first query $Q_i$ above the percentile-threshold is exactly the same as the first query $f_i$ above the query threshold $\T_0$ (that is, the first $f_i$ with median score at least $\T_0$).
Answering such a percentile query exactly is not private (see~\Cref{sec:percentileDP} for an example for $\pstar=1/2$),
so we will have to relax the goal of finding the first above percentile-threshold query.
Similar to the standard setting, we would like that:
\begin{itemize}
	\item
if a query is much below the threshold, that is, $\Pr_{q_i \sim Q_i(\Data)}[q_i \ge \T_i] \ll \pstar$, then our algorithm should report ``below threshold'' (denoted by $\bot$);
\item
if a query is much above the threshold, that is, $\inp{1- \Pr_{q_i \sim Q_i(\Data)}[q_i \ge \T_i]} \ll \inp{1-\pstar}$, then our algorithm should report ``above threshold'' (denoted by $\top$).
\end{itemize}
In fact, our algorithm will be a natural extension of the standard sparse vector algorithm.

\subsection{Sparse vector for online private queries with the help of a percentile oracle}
To illustrate ideas, we will start by assuming that we have access to an \emph{exact} percentile oracle: $\p(\T_i,Q_i) \defeq \Pr_{q_i\sim Q_i} [q_i \ge \T_i]$.
As a remark, such a percentile oracle is available in the standard sparse vector algorithm, which is just the cumulative distribution function of the Laplace distribution.
We observe that if the randomized queries $Q_i$ are $\eps_1$-DP, then both $\ln \p(\T_i,Q_i)$ and $\ln \inp{1-\p(\T_i,Q_i)}$ are $\eps_1$-Lipschitz.
In other words, although we no longer have Lipschitzness in the ``answer of a query'' space (that is, the quantile space), the fact that each query is $\eps_1$-DP will ensure that we have Lipschitzness in the logarithm of the percentile space (that is, the log of the CDF). This allows us to adapt the sparse vector algorithm to the log of the percentile space.

Let $\Phi(x) \defeq \frac{x}{1-x}$.
Note that $\ln \Phi\inp{\p(\T_i,Q_i)}$  is $2\eps_1$-Lipschitz: since both $\ln \p(\T_i,Q_i)$ and $\ln \inp{1-\p(\T_i,Q_i)}$ are $\eps_1$-Lipschitz,  and $\ln \Phi\inp{\p(\T_i,Q_i)}$ is just the difference of two $\eps_1$-Lipschitz functions.
Also notice that $\Phi$ is a strictly increasing function for $x \in (0,1)$.
Given access to the oracle $\p(\T_i,Q_i)$, we can then adapt the sparse vector algorithm as in~\cref{alg:gensparse0}.

\begin{algorithm}
	\flushleft Input: $\eps_1, \eps_3, \pstar$, a stream of thresholds $\set{\T_i}$ and randomized queries $\set{Q_i(\Data)}$ 

  Sample $\nu \sim \Lap\inp{\frac{4\eps_1}{\eps_3}}$;

  For $i=1, \cdots $:
  \begin{itemize}
    \item let $\xi_i \sim \Lap\inp{\frac{8\eps_1}{\eps_3}}$;
    \item $e^{\xi_i}\cdot \Phi\inp{\p(\T_i, Q_i)} > e^{\nu} \Phi(\pstar)$,
	    output $a_i = \top$ and halt;
    \item otherwise output $a_i=\bot$;
  \end{itemize}
  \caption{AboveThreshold algorithm assuming the oracle $p(\T_i, Q_i)$.}
  \label{alg:gensparse0}
\end{algorithm}

\begin{theorem}
If for every $i$, $Q_i$ is $\eps_1$-DP,
  then 
  \begin{enumerate}[(a)]
    \item \cref{alg:gensparse0} is $\eps_3$-DP.
    \item Conditional on \cref{alg:gensparse0} reporting the $R$-th query $Q_R$ is ``above threshold'',
	    we have that $\forall \beta \in (0,1),
	    \Pr\inb{\Phi\inp{\p(\T_R, Q_R)} \le \inp{\frac{\beta}{R+1}}^{\frac{12\eps_1}{\eps_3}}  \Phi(\pstar)} \le \beta
      $.
      In other words, the algorithm does not stop too early.
    \item Conditional on \cref{alg:gensparse0} reporting the $R$-th query $Q_R$ is ``above threshold'',
	    we have that $\forall \beta \in (0,1),
	    \Pr\inb{\exists i < R : \Phi\inp{\p(\T_i, Q_i)} \ge \inp{\frac{R+1}{\beta}}^{\frac{12\eps_1}{\eps_3}}  \Phi(\pstar)} \le \beta
      $.
      In other words, the algorithm does not stop too late.
    \item $\forall \beta \in (0,1)$, if for some $i$,
	    $
	\Phi\inp{\p(\T_i, Q_i)} \ge \inp{\frac{1}{\beta}}^{\frac{12\eps_1}{\eps_3}}  \Phi(\pstar),
	$
	then
	$\Pr\inb{a_i = \top | \forall j<i, a_j = \bot} \le \beta$.
	In other words, on a query that is way above the threshold the algorithm will likely halt.
  \end{enumerate}
  \label{thm:gensparse0}
\end{theorem}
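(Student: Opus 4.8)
The plan is to recognize \cref{alg:gensparse0} as the ordinary AboveThreshold (sparse vector) mechanism run in ``logarithm of the percentile'' coordinates, and then to read off the privacy clause from the standard AboveThreshold analysis and the three utility clauses from Laplace tail bounds. Taking logarithms, the halting test $e^{\xi_i}\,\Phi(\p(\T_i,Q_i)) > e^{\nu}\,\Phi(\pstar)$ is equivalent to
\[
  \xi_i + g_i > \nu + c, \qquad g_i \defeq \ln\Phi(\p(\T_i,Q_i)), \qquad c \defeq \ln\Phi(\pstar) .
\]
As the preceding discussion records, since each $Q_i$ is $\eps_1$-DP both $\ln\p(\T_i,Q_i)$ and $\ln(1-\p(\T_i,Q_i))$ are $\eps_1$-Lipschitz in $\Data$, so $g_i$ is $2\eps_1$-Lipschitz; write $\Delta\defeq 2\eps_1$. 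Hence \cref{alg:gensparse0} is exactly AboveThreshold on the $\Delta$-sensitive queries $g_i$ against the fixed threshold $c$, with threshold noise $\nu\sim\Lap(2\Delta/\eps_3)$ and per-step noise $\xi_i\sim\Lap(4\Delta/\eps_3)$.

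For clause (a) I would rerun the textbook AboveThreshold privacy proof (see~\cite{dwork2014algorithmic}). Fix neighbors $\Data,\Data'$ and a transcript $(\bot,\dots,\bot,\top)$ that halts at step $R$, and condition on the noises $\xi_1,\dots,\xi_{R-1}$. Writing $h(\Data)\defeq\max_{j<R}(\xi_j+g_j(\Data))$, which is $\Delta$-Lipschitz, this transcript occurs precisely when $h(\Data)-c \le \nu < \xi_R + g_R(\Data) - c$. Couple the $\Data'$ run to the $\Data$ run by replacing $\nu$ with $\nu + (h(\Data')-h(\Data))$ --- a shift of size at most $\Delta$, costing a factor $e^{\eps_3/2}$ in the density of $\Lap(2\Delta/\eps_3)$ --- and $\xi_R$ with $\xi_R + (h(\Data')-h(\Data)) + (g_R(\Data)-g_R(\Data'))$ --- a shift of size at most $2\Delta$, costing a factor $e^{\eps_3/2}$ in the density of $\Lap(4\Delta/\eps_3)$. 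One checks this carries the $\Data$-transcript event onto the $\Data'$-transcript event, so the pointwise density ratio is at most $e^{\eps_3}$; integrating over $\xi_1,\dots,\xi_{R-1}$ preserves the bound. The transcript that halts at the first step, and (when the stream is finite) the all-$\bot$ transcript, are handled by the same shift.

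For clauses (b)--(d) I would translate each ``bad'' event into a deviation of the independent noise, with the deviation level set to $t_r\defeq\frac{12\eps_1}{\eps_3}\ln\frac{r+1}{\beta}$, a fixed multiple of the common noise scale $\Theta(\eps_1/\eps_3)$. If the algorithm stops at step $r$ but $\Phi(\p(\T_r,Q_r)) \le (\tfrac{\beta}{r+1})^{12\eps_1/\eps_3}\,\Phi(\pstar)$ --- equivalently $g_r - c \le -t_r$ --- then the halting test forces $\xi_r - \nu > t_r$; as $\xi_r-\nu$ is a difference of independent Laplaces whose heavier component has scale $8\eps_1/\eps_3$, its upper tail at $t_r$ is at most a constant times $(\tfrac{\beta}{r+1})^{3/2}$, and since $\sum_{r\ge 1}(r+1)^{-3/2}$ converges the probability summed over all stopping times $r$ stays below $\beta$; this is clause (b). For clause (c), if some $i<r$ has $g_i - c \ge t_r$ yet the algorithm did not stop there, then $\xi_i - \nu \le -t_r$, an event with the same tail bound, and a union bound over the fewer than $r$ indices $i<r$ again stays below $\beta$. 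For clause (d), if $g_i - c \ge \frac{12\eps_1}{\eps_3}\ln\frac1\beta$ then failing to halt at step $i$ forces $\xi_i - \nu \le -\frac{12\eps_1}{\eps_3}\ln\frac1\beta$, whose probability is at most $\beta$ by the same Laplace estimate.

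The step I expect to be the main obstacle is precisely the conditioning that appears in (b)--(d): one has to bound the noise \emph{given} that the algorithm has not yet stopped (or stopped exactly at $R$), and such a transcript event is positively correlated with $\nu$ being large, since a long run of $\bot$'s biases the threshold noise upward --- working against the tail bounds sketched above. The reduction to AboveThreshold is what makes this tractable: one decomposes on the value of $\nu$, uses that each $\xi_i$ has twice the scale of $\nu$, and spends the $(r+1)$ factor of slack in the statement to absorb the loss from the conditioning. Carrying out this bookkeeping carefully --- which is also what fixes the constants and the exponent $\tfrac{12\eps_1}{\eps_3}$ in the statement --- is the delicate part; the remaining estimates are routine.
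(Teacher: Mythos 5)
Your proposal is correct and follows essentially the same route as the paper: view $\ln \Phi\inp{\p(\T_i,Q_i)}$ as a $2\eps_1$-Lipschitz query, prove (a) by the standard AboveThreshold shifting coupling of $\nu$ and $\xi_R$ (exactly the coupling the paper uses in its full proof of the extended algorithm), and get (b)--(d) from Laplace tail bounds. The only differences are bookkeeping --- you bound the combined deviation $\xi_i-\nu$ (summing over stopping times in (b), which for $\beta$ near $1$ leaves a trivially fixable constant) whereas the paper conditions on $|\nu|$ and all $|\xi_i|$ being small with a $\beta/(R+1)$ split --- and, like the paper's own proof, you in effect bound the joint rather than the literally conditional probabilities in (b)--(d), so there is no gap relative to the paper's argument.
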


\begin{proof}(Sketch)
	Part (a), part (b) and part (c) all follow from the standard sparse vector analysis (see, e.g., \cite{dwork2014algorithmic}), and the fact that $\ln \Phi\inp{\p(\T_i,Q_i)}$ is $2\eps_1$-Lipschitz.
	Observe that the test $e^{\xi_i}\cdot \Phi\inp{\p(\T_i, Q_i)} > e^{\nu} \Phi(\pstar)$ is equivalent to $\xi_i + \ln \Phi\inp{\p(\T_i,Q_i)} > \nu + \ln \Phi(\pstar)$.
	Therefore, if we view $\ln \Phi\inp{\p(\T_i,Q_i)}$ as the $i$-th query (which is $2\eps_1$-Lipschitz) and $\ln \Phi(\pstar)$ as the threshold, then this is indeed the standard sparse vector setting.
	The details are omitted here as we will see proofs for stronger claims for the actual algorithm in~\cref{thm:gensparse1}. 

	For part (d), 
	observe that the test $e^{\xi_i}\cdot \Phi\inp{\p(\T_i, Q_i)} > e^{\nu} \Phi(\pstar)$ will pass if
	$\xi_i \ge -\frac{8\eps_1}{\eps_3} \ln \frac{1}{\beta}$
	and $\nu \le \frac{4\eps_1}{\eps_3} \ln \frac{1}{\beta}$.
	By a union bound, with probability at least $1-\beta$, both will happen at the same time.
	In other words, the probability of not halting after seeing a query way above the threshold is at most $\beta$.
\end{proof}

We give some estimates in the special case of $\pstar=1/2$, which corresponds to the range of the standard sparse vector setting, as quick corollaries.
In fact, if one apply this to the standard sparse vector setting, one can recover guarantees that match the standard setting up to constant factors.
\begin{corollary}
	If $\pstar = 1/2$, and for every $i$, $Q_i$ is $\eps_1$-DP,
  then 
  \begin{enumerate}[(a)]
    \item \cref{alg:gensparse0} is $\eps_3$-DP.
    \item Conditional on \cref{alg:gensparse0} reporting the $R$-th query $Q_R$ is ``above threshold'',
	    we have that $\forall \beta \in (0,1),
	    \Pr\inb{\p(\T_R, Q_R) \le \frac{\beta^{\frac{12\eps_1}{\eps_3}}}{\beta^{\frac{12\eps_1}{\eps_3}} + (R+1)^{\frac{12\eps_1}{\eps_3}}}  } \le \beta
      $.
      In other words, the algorithm does not stop too early.
    \item Conditional on \cref{alg:gensparse0} reporting the $R$-th query $Q_R$ is ``above threshold'',
	    we have that $\forall \beta \in (0,1),
	    \Pr\inb{\exists i < R : \p(\T_i, Q_i) \ge \frac{(R+1)^{\frac{12\eps_1}{\eps_3}}}{\beta^{\frac{12\eps_1}{\eps_3}} + (R+1)^{\frac{12\eps_1}{\eps_3}}}  } \le \beta
      $.
      In other words, the algorithm does not stop too late.
    \item $\forall \beta \in (0,1)$, if for some $i$,
	    $
	\p(\T_i, Q_i) \ge 1-\frac{\beta^{\frac{12\eps_1}{\eps_3}}}{ 1+ \beta^{\frac{12\eps_1}{\eps_3}} } ,
	$
	then
	$\Pr\inb{a_i = \top | \forall j<i, a_j = \bot} \le \beta$.
	    In other words, the algorithm will likely halt on a query that is way above the threshold.
  \end{enumerate}
  \label{cor:gensparse0}
\end{corollary}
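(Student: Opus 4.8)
The plan is to obtain \Cref{cor:gensparse0} as the specialization of \Cref{thm:gensparse0} to $\pstar = 1/2$, together with undoing the reparametrization $x \mapsto \Phi(x) = \frac{x}{1-x}$ used there. The single fact that drives everything is that $\Phi(1/2) = 1$, so the factor $\Phi(\pstar)$ appearing in every bound of \Cref{thm:gensparse0} is simply $1$; all that remains is to rewrite inequalities stated for $\Phi\inp{\p(\T_i,Q_i)}$ as inequalities for $\p(\T_i,Q_i)$ itself.

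First I would record the elementary fact that $\Phi$ is a strictly increasing bijection of $(0,1)$ onto $(0,\infty)$, with inverse $\Phi^{-1}(y) = \frac{y}{1+y}$. Hence for any $c > 0$ and $x \in (0,1)$ we have the equivalences $\Phi(x) \le c \iff x \le \frac{c}{1+c}$ and $\Phi(x) \ge c \iff x \ge \frac{c}{1+c}$, and these equivalences commute with probabilities and with taking unions over $i$. Part (a) is verbatim \Cref{thm:gensparse0}(a). For part (b), apply the first equivalence with $x = \p(\T_R,Q_R)$ and $c = \inp{\frac{\beta}{R+1}}^{\frac{12\eps_1}{\eps_3}}$ inside \Cref{thm:gensparse0}(b), and simplify $\frac{c}{1+c} = \frac{\beta^{12\eps_1/\eps_3}}{\beta^{12\eps_1/\eps_3} + (R+1)^{12\eps_1/\eps_3}}$. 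For part (c), apply the second equivalence with $c = \inp{\frac{R+1}{\beta}}^{\frac{12\eps_1}{\eps_3}}$ under the union bound over $i < R$ already present in \Cref{thm:gensparse0}(c), obtaining $\frac{c}{1+c} = \frac{(R+1)^{12\eps_1/\eps_3}}{\beta^{12\eps_1/\eps_3} + (R+1)^{12\eps_1/\eps_3}}$. For part (d), observe that the hypothesis $\p(\T_i,Q_i) \ge 1 - \frac{\beta^{12\eps_1/\eps_3}}{1 + \beta^{12\eps_1/\eps_3}}$ is, again by the second equivalence with $c = \inp{\frac1\beta}^{\frac{12\eps_1}{\eps_3}}$, exactly $\Phi\inp{\p(\T_i,Q_i)} \ge \inp{\frac1\beta}^{\frac{12\eps_1}{\eps_3}} = \inp{\frac1\beta}^{\frac{12\eps_1}{\eps_3}}\Phi(\pstar)$, so the conclusion follows from \Cref{thm:gensparse0}(d).

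Since each step is a mechanical substitution or a one-line algebraic inversion of a monotone function, I expect no real obstacle. The only point needing care is bookkeeping the direction of each inequality through the increasing map $\Phi$ — a ``does not stop too early'' bound on $\Phi$ turns into an upper bound on $\p$, whereas the ``does not stop too late'' and ``will halt'' bounds turn into lower bounds on $\p$ — and then verifying that the closed forms $\frac{c}{1+c}$ reproduce exactly the fractions written in the statement.
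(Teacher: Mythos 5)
Your proposal is correct and matches the paper's intent exactly: the paper states \cref{cor:gensparse0} as an immediate specialization of \cref{thm:gensparse0} to $\pstar=1/2$ (where $\Phi(\pstar)=1$), and your inversion $\Phi(x)\le c \iff x\le \frac{c}{1+c}$ (resp.\ $\ge$) reproduces each stated fraction, including the part (d) hypothesis via $\frac{1}{1+\beta^{12\eps_1/\eps_3}} = 1-\frac{\beta^{12\eps_1/\eps_3}}{1+\beta^{12\eps_1/\eps_3}}$. No further argument is needed.
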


\subsection{Sparse vector for online private queries}
Next we show that one could replace the exact percentile oracles $p(\T_i,Q_i(\Data))$ with unbiased estimators $\pti$.
Assuming that we have unlimited access to the randomized queries $\set{Q_i(\Data)}$, we consider the following natural unbiased estimator for $p(\T_i,Q_i(\Data))$:
given iid samples $q_{i,1}, \cdots, q_{i,N}$, where for each $j$, $q_{i,j} \sim Q_i(\Data)$, we define $\pti \defeq \frac{1}{N} \sum_{j=1}^N \knuth{q_{i,j} \ge \T_i} $,
where $\knuth{q_{i,j} \ge \T_i}$ is the Iverson bracket defined by
\[\knuth{a \ge b} \defeq
\begin{cases}
	1, &\hbox{ if $a \ge b$,} \\
	0, &\hbox{ otherwise}.
\end{cases}
\]

Since $\pti$ is now a random function of the dataset, the usual Lipschitzness is not well-defined, unlike for the function $\p(\T_i,Q_i(\Data))$.
One approach of defining ``Lipschitzness'' for such a random function would be to consider the \emph{earth mover distance}.
This is what we will do next.

Let $\pti'$ be the analogous unbiased estimator for $p(\T_i,Q_i(\Data'))$ on a neighboring dataset $\Data'$.
By $\eps_1$-DP of $Q_i$, we have that
\[
	\E \pti = p(\T_i,Q_i(\Data)) \le e^{\eps_1} p(\T_i,Q_i(\Data')) = e^{\eps_1} \E \pti'.
\]
In order to adapt~\cref{alg:gensparse0}, ideally we would like a probabilistic version of  $\pti \le e^{\eps_1} \pti'$ to be true:
if there is a coupling between $\pti$ and $\pti'$ such that $\abs{\ln \pti - \ln \pti'} \le \eps_1$, then we can replace $p(\T_i,Q_i(\Data))$ with $\pti$ in~\cref{alg:gensparse0}.
This turns out to be too much to ask for in such a general setting.
We show in~\cref{lem:trivial-coupling} that a slightly weaker statement in indeed true.
This is the key lemma that leads us to~\cref{alg:gensparse1}.

\begin{lemma}
	Let $\set{X_1, \cdots, X_n}$ and $\set{Y_1, \cdots, Y_n}$ be two sequences of independent $\set{0,1}$ random variables, and let $X = \sum_{i=1}^n X_i$, $Y= \sum_{i=1}^n Y_i$.
	For any fixed $\eps_1 \in (0,1), \eps_0 \in (0,1)$, $\delta_0 \in (0,1)$,
let $C=2(e^{\eps_0+\eps_1} +1 + e^{\eps_0/2}) < 21$.

If $\E X \le e^{\eps_1} \E Y $, then under the trivial (independent) coupling between $X$ and $Y$,
	\[
		\Pr\inb{ X \ge e^{\eps_1+\eps_0} \cdot Y +  \frac{C}{\eps_0} \cdot \ln \frac{2}{\delta_0}} \le \delta_0.
	\]
	Equivalently, if we let $\Delta \defeq \frac{C\ln \frac{2}{\delta_0}}{\eps_0 \inp{e^{\eps_0+\eps_1} - 1}} = O\inp{\frac{1}{\eps_0^2} \ln\frac{1}{\delta_0}}$, then
	\[
	  \Pr\inb{ X+ \Delta  \ge e^{\eps_1+\eps_0} \cdot \inp{ Y +  \Delta}} \le \delta_0.
	\]
	\label{lem:trivial-coupling}
\end{lemma}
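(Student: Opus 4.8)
The plan is to prove the first displayed inequality; the "equivalently" reformulation is just algebra (solve $e^{\eps_1+\eps_0}(Y+\Delta) - (X+\Delta) \geq 0$ for the slack term, using $e^{\eps_1+\eps_0}-1$ to absorb the additive $\Delta$). So I focus on showing
\[
  \Pr\!\inb{X - e^{\eps_1+\eps_0} Y \geq \tfrac{C}{\eps_0}\ln\tfrac{2}{\delta_0}} \leq \delta_0
\]
under the independent coupling, given $\E X \leq e^{\eps_1}\E Y$. The natural tool is an exponential moment (Chernoff/Bernstein) bound applied to the single random variable $Z \defeq X - e^{\eps_1+\eps_0} Y = \sum_i (X_i - e^{\eps_1+\eps_0} Y_i)$, which is a sum of $n$ independent terms, each taking values in $[-e^{\eps_1+\eps_0}, 1]$, hence bounded in a range of width $\leq e^{\eps_1+\eps_0}+1$. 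First I would bound $\E Z = \E X - e^{\eps_1+\eps_0}\E Y \leq (1 - e^{\eps_0})e^{\eps_1}\E Y \leq 0$, so $Z$ has nonpositive mean and we're bounding an upper tail above the mean — good. The deviation we need to rule out is $t = \tfrac{C}{\eps_0}\ln\tfrac{2}{\delta_0}$, which should be "large" relative to the variance because of the $1/\eps_0$ factor, so I expect the Poisson/large-deviation regime of Bernstein rather than the sub-Gaussian regime to govern the bound; that is where the $1/\eps_0$ (not $1/\eps_0^2$, but wait — the $\Delta$ reformulation does have $1/\eps_0^2$, coming from dividing by $e^{\eps_0+\eps_1}-1 \approx \eps_0$) shows up.

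Concretely, I would compute the moment generating function $\E e^{\lambda(X_i - e^{\eps_1+\eps_0}Y_i)} = (1 - p_i + p_i e^{\lambda})(1 - r_i + r_i e^{-\lambda e^{\eps_1+\eps_0}})$ where $p_i = \E X_i$, $r_i = \E Y_i$, then use $1 + x \leq e^x$ to get $\E e^{\lambda Z} \leq \exp\!\big((e^{\lambda}-1)\E X + (e^{-\lambda e^{\eps_1+\eps_0}}-1)e^{\eps_1+\eps_0}\E Y\big)$. Picking a suitable $\lambda = \Theta(\eps_0)$ (the sign/size of $\lambda$ chosen so that the linear-in-$\E Y$ part of the exponent, after using $\E X \leq e^{\eps_1}\E Y$, becomes nonpositive — this is exactly where the constant $C = 2(e^{\eps_0+\eps_1}+1+e^{\eps_0/2})$ is calibrated), the exponent involving $\E Y$ can be made $\leq 0$, leaving $\E e^{\lambda Z} \leq 1$, and then Markov gives $\Pr[Z \geq t] \leq e^{-\lambda t}$. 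Choosing $t = \tfrac{C}{\eps_0}\ln\tfrac 2{\delta_0}$ with $\lambda \asymp \eps_0/C$ yields the claimed $\delta_0$ bound (possibly with the factor $2$ inside the log providing slack for rounding the constants). Alternatively — and perhaps cleaner — one can invoke a ready-made Bernstein-type inequality for bounded independent summands with the variance proxy $\mathrm{Var}(Z) \leq (e^{\eps_1+\eps_0}+1)\,\E(X + e^{\eps_1+\eps_0}Y)$ and range $b = e^{\eps_1+\eps_0}+1$, and then simplify in the "$t \gg$ variance" regime to get $\Pr[Z \geq \E Z + t] \leq \exp(-\Omega(\eps_0 t / b))$.

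The main obstacle I anticipate is bookkeeping the constants so that the precise value $C = 2(e^{\eps_0+\eps_1}+1+e^{\eps_0/2})$ (and the claim $C < 21$, which uses $\eps_0,\eps_1 < 1$ so each exponential is $< e^2 < 7.4$, giving $2(e^2 + 1 + e^{1/2}) < 2(7.39 + 1 + 1.65) < 21$) actually drops out of the optimization, rather than merely some $O(1/\eps_0)$ bound. The delicate point is that we cannot freely assume $\E X = e^{\eps_1}\E Y$ — only $\leq$ — so in the MGF exponent the $\E X$ term must be dominated using $e^{\lambda}-1 > 0$ and the hypothesis, which forces $\lambda > 0$, and then one must verify the residual coefficient of $\E Y$, namely $(e^{\lambda}-1)e^{\eps_1} + (e^{-\lambda e^{\eps_1+\eps_0}}-1)e^{\eps_1+\eps_0}$, is $\leq 0$ for the chosen $\lambda$; expanding to second order in $\lambda$ and bounding remainder terms by $C/2$-type quantities is the calculation that pins down $C$. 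Everything else — independence, the range of each summand, the final Markov step, and the $\Delta$-rewrite — is routine.
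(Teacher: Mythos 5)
Your route is genuinely different from the paper's, and (after one correction, noted below) it does work, exact constant included. The paper never forms the combined statistic $Z = X - e^{\eps_1+\eps_0}Y$: it proves two separate one-sided multiplicative Chernoff bounds, an upper-tail bound $\Pr\big[X \ge e^{\eps}\E X + \tfrac{e^{\eps}+1}{\eps}\ln\tfrac1\delta\big]\le\delta$ and a lower-tail bound $\Pr\big[Y \le e^{-\eps}\E Y - \tfrac{1}{\eps}\ln\tfrac1\delta\big]\le\delta$, then sets $\eps=\eps_0/2$, $\delta=\delta_0/2$, takes a union bound over the two bad events, and chains the good events through the hypothesis $\E X\le e^{\eps_1}\E Y$. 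In that argument the constant $C=2(e^{\eps_0+\eps_1}+1+e^{\eps_0/2})$ is literally the sum of the two slack coefficients, so the bookkeeping you flag as the delicate point is trivial there; the modular upper-tail lemma is also reused elsewhere in the paper (for the non-concentration events $\F_3$ in the sparse-vector analyses), which is an extra payoff of that decomposition. Your single-MGF argument buys a one-shot proof with no union bound, and the calibration does come out: with the correct exponent the condition to verify is $(e^{\lambda}-1)e^{\eps_1}+e^{-\lambda e^{\eps_1+\eps_0}}-1\le 0$; using $e^{\lambda}-1\le \lambda+\lambda^2$ on $[0,1]$ and $e^{-u}\le 1-u+u^2/2$, this reduces to $\lambda \le \dfrac{e^{\eps_0}-1}{1+e^{\eps_1+2\eps_0}/2}$, which is satisfied by $\lambda=\eps_0/C$ since $1+e^{\eps_1+2\eps_0}/2 \le 1+2e^{\eps_0+\eps_1}\le C$; Markov then gives $\Pr\big[Z\ge \tfrac{C}{\eps_0}\ln\tfrac{2}{\delta_0}\big]\le e^{-\lambda t}=\delta_0/2\le\delta_0$, and your $\Delta$-rewrite and the observation $\E Z\le 0$ are exactly right.

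The one concrete slip: from $1+x\le e^x$ you get $\E e^{-\lambda e^{\eps_1+\eps_0}Y}\le \exp\big((e^{-\lambda e^{\eps_1+\eps_0}}-1)\,\E Y\big)$, \emph{not} $\exp\big((e^{-\lambda e^{\eps_1+\eps_0}}-1)\,e^{\eps_1+\eps_0}\E Y\big)$. Because that term is negative, the extra factor $e^{\eps_1+\eps_0}$ makes your stated MGF bound strictly smaller than what the step yields, so it is not a valid upper bound; correspondingly, the ``residual coefficient'' condition you propose to verify, $(e^{\lambda}-1)e^{\eps_1}+(e^{-\lambda e^{\eps_1+\eps_0}}-1)e^{\eps_1+\eps_0}\le 0$, is weaker than the one actually needed and would mislead the choice of $\lambda$ (it permits $\lambda$ up to a constant rather than $\Theta(\eps_0)$, and the resulting ``proof'' would not be valid). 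Dropping that spurious factor and running the computation as in the previous paragraph repairs the argument completely; the hypothesis $\E X\le e^{\eps_1}\E Y$ is indeed used only against the positive coefficient $e^{\lambda}-1$, which is why $\lambda>0$ is forced, just as you anticipated.
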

We defer the proof of this lemma to~\Cref{sec:proof-trivial-coupling}.
Now we are ready to describe the extended version of the AboveThreshold algorithm.
We now consider a potential function $\Pnd(x) = \frac{N x + \Delta }{N (1-x) + \Delta}$.
As an intuition, we will see that thanks to~\cref{lem:trivial-coupling}, if $Q_i$ is $\eps_1$-DP,
then for suitable choices of $\eps_0$ and $\Delta$, there exists a coupling in which, with high probability, $\ln \Pnd\inp{\pti}$ is $2(\eps_0+\eps_1)$-Lipschitz.

\begin{algorithm}
	\flushleft Input: $T, \delta, \eps_0, \eps_1, \eps_3, \beta, \pstar$, a stream of thresholds $\set{\T_i}$ and randomized queries $\set{Q_i(\Data)}$.

	Set $S = 2(\eps_1 + \eps_0)$, $\Delta =  \frac{C\ln\frac{8T}{\delta}}{\eps_0 \inp{e^{\eps_0+\eps_1} - 1}}$, and $N =  \frac{e^{\eps_0}\Delta}{\min\set{\pstar,1-\pstar}} \inp{\frac{T+1}{\beta}}^{\frac{6S}{\eps_3}} $

  Sample $\nu \sim \Lap\inp{\frac{2S}{\eps_3}}$

  For $i=1, \cdots, T$:
  \begin{itemize}
	  \item draw iid samples $\set{q_{i,1}, \cdots, q_{i,N}} \sim Q_i^N$
	  \item let $\pti \defeq \frac{1}{N} \sum_{j=1}^N \knuth{q_{i,j} \ge \T_i}$
    \item sample $\xi_i \sim \Lap\inp{\frac{4S}{\eps_3}}$
    \item if $\exp(\xi_i) \cdot \Pnd(\pti) \ge \exp\inp{\nu } \cdot \Pnd(\pstar)  $:
	    output $a_i = \top$ and halt
    \item otherwise output $a_i=\bot$
  \end{itemize}
  \caption{The ExtendedAboveThreshold algorithm.}
  \label{alg:gensparse1}
\end{algorithm}

\begin{theorem}
  For any fixed $\eps_0 \in (0,1)$, $\delta \in (0,1)$, $\beta \in (0,1)$ and an integer $T>1$,
let $S = 2(\eps_1 + \eps_0)$, $C=2(e^{\eps_0+\eps_1} +1 + e^{\eps_0/2}) < 21$, and $\Delta = \frac{C\ln\frac{8T}{\delta}}{\eps_0 \inp{e^{\eps_0+\eps_1} - 1}} = O\inp{\frac{1}{\eps_0^2} \ln\frac{T}{\delta}}$.
If for every $i$, $Q_i$ is $\eps_1$-DP,
then: 
\begin{enumerate}[(a)]
  \item \cref{alg:gensparse1} with the above parameters is $(\eps_3,\delta)$-DP.

    \item Conditional on \cref{alg:gensparse1} reporting the $R$-th query $Q_R$ is ``above threshold'',
	    we have that $\forall \beta \in (0,1),
	    \Pr\inb{\Pnd\inp{\p(\T_R, Q_R)} \le \inp{\frac{\beta}{R+1}}^{\frac{6S}{\eps_3}}  \cdot e^{-\eps_0} \cdot \Pnd(\pstar)} \le \beta + \delta/2
      $.
      In other words, the algorithm does not stop too early. Moreover,
      \[
	      \Pr\inb{\p(\T_R, Q_R) \le \frac{1}{2}e^{-\eps_0} \inp{\frac{\beta}{R+1}}^{\frac{6S}{\eps_3}}\pstar} \le \beta + \delta/2.
    \]
    \item Conditional on \cref{alg:gensparse1} reporting the $R$-th query $Q_R$ is ``above threshold'',
	    we have that $\forall \beta \in (0,1),
	    \Pr\inb{\exists i < R : \Pnd\inp{\p(\T_i, Q_i)} \ge \inp{\frac{R+1}{\beta}}^{\frac{6S}{\eps_3}}  \cdot e^{\eps_0} \cdot \Pnd(\pstar)} \le \beta + \delta/2
      $.
      In other words, the algorithm does not stop too late. Moreover,
      \[
	      \Pr\inb{\p(\T_R, Q_R) \ge 1 - \inp{1 + \frac{e^{\eps_0}}{2(1-\pstar)}\inp{\frac{R+1}{\beta}}^{\frac{6S}{\eps_3}} }^{-1} } \le \beta + \delta/2.
      \]
    \item $\forall \beta \in (0,1)$, if for some $i$,
	    $
	    \Pnd\inp{\p(\T_i, Q_i)} \ge e^{\eps_0}\inp{\frac{1}{\beta}}^{\frac{6S}{\eps_3}}  \Pnd(\pstar),
	$
	then
\[
	\Pr\inb{a_i = \top | \forall j<i, a_j = \bot} \le \beta+ \frac{\delta}{4T}.
\]
	In other words, on a query that is way above the threshold the algorithm will likely halt.
%
  \end{enumerate}
  \label{thm:gensparse1}
\end{theorem}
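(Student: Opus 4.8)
The plan is to follow the proof template of~\Cref{thm:gensparse0}: \Cref{alg:gensparse1} is just the AboveThreshold algorithm run on the ``queries'' $\ln\Pnd(\pti)$ against the fixed threshold $\ln\Pnd(\pstar)$, with Laplace noise of scale $2S/\eps_3$ on the threshold and $4S/\eps_3$ on each query, where $S=2(\eps_1+\eps_0)$. The only genuine difference from~\Cref{thm:gensparse0} is that the query values $\ln\Pnd(\pti)$ are now \emph{random} --- they depend on the $N$ fresh samples drawn at step $i$ --- rather than deterministic functions of the dataset. The one new ingredient needed to bridge this gap is~\Cref{lem:trivial-coupling}, which I will use to show that on neighbouring datasets these random queries admit a coupling under which they are all simultaneously $S$-Lipschitz, except on an event of probability $\le\delta$.

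For part (a), fix neighbours $\Data\sim\Data'$. I would write $a_i=N\pti+\Delta$ and $b_i=N(1-\pti)+\Delta$, so that $\ln\Pnd(\pti)=\ln a_i-\ln b_i$, and let $a_i',b_i'$ be the analogues for $\Data'$. Both $N\pti=\sum_j\knuth{q_{i,j}\ge\T_i}$ and $N(1-\pti)=\sum_j\knuth{q_{i,j}<\T_i}$ are sums of $N$ independent Bernoullis whose means are within a factor $e^{\pm\eps_1}$ of the corresponding means on $\Data'$, by $\eps_1$-DP of $Q_i$ applied to the event $\set{q\ge\T_i}$ and to its complement. I will couple $\pti,\pti'$ by the trivial (independent) coupling of their defining samples and couple all Laplace noises identically across the two runs; then~\Cref{lem:trivial-coupling}, applied four times per index $i$ --- to $(a_i,a_i')$, $(a_i',a_i)$, $(b_i,b_i')$, $(b_i',b_i)$, each with $\delta_0=\delta/(4T)$, which is exactly the value that makes the algorithm's $\Delta$ match the $\Delta$ of the lemma --- together with a union bound shows that the event $B$ that $\abs{\ln\Pnd(\pti)-\ln\Pnd(\pti')}>S$ for some $i\le T$ has probability $\le 4T\cdot\delta/(4T)=\delta$. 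I then invoke the standard sparse-vector privacy bound in the form: for query vectors $\vec u,\vec u'$ with $\norm[\infty]{\vec u-\vec u'}\le S$ and any measurable $\mathcal S$, $\Pr_{\nu,\xi}[\mathrm{SV}(\vec u)\in\mathcal S]\le e^{\eps_3}\Pr_{\nu,\xi}[\mathrm{SV}(\vec u')\in\mathcal S]$. Conditioning on the coupled pair of realized query vectors and using that the noises are independent of the samples, on $B^c$ I may apply this bound, while $B$ contributes at most $\delta$; this yields $\Pr[\M(\Data)\in\mathcal S]\le e^{\eps_3}\Pr[\M(\Data')\in\mathcal S]+\delta$.

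Parts (b)--(d) concern accuracy on a single dataset $\Data$, so no coupling is needed; here the plan is to combine the sparse-vector accuracy bounds of~\Cref{thm:gensparse0}(b)--(d) with concentration of $\pti$ about its mean $\p(\T_i,Q_i(\Data))$. The accuracy analysis of sparse vector only uses that $\nu$ and the $\xi_i$ are independent of the queries, so it goes through verbatim with the random queries $\ln\Pnd(\pti)$, giving, conditional on halting at $R$, that $\Pnd(\widetilde{p_R})\ge(\beta/(R+1))^{6S/\eps_3}\Pnd(\pstar)$ (and the corresponding ``not too late'' and ``way above threshold'' statements) except with probability $\beta$. Separately, a multiplicative Chernoff bound --- equivalently~\Cref{lem:trivial-coupling} with DP parameter $0$ --- together with $\Delta=\Theta(\eps_0^{-2}\ln\tfrac{T}{\delta})$ shows that for each fixed $i$, $\Pnd(\pti)$ and $\Pnd(\p(\T_i,Q_i(\Data)))$ agree up to a factor $e^{\pm\eps_0}$ except with probability $\le\delta/(4T)$; a union bound over $i\le T$ then upgrades each of the three accuracy statements to one about $\p(\T_i,Q_i)$ at the cost of one extra $\eps_0$ in the exponent's multiplier and an extra additive $\delta/2$ (or $\delta/(4T)$ in part (d)). The ``Moreover'' inequalities follow by unwinding $\Pnd(x)=\frac{Nx+\Delta}{N(1-x)+\Delta}$ and using that $N=\tfrac{e^{\eps_0}\Delta}{\min\set{\pstar,1-\pstar}}(\tfrac{T+1}{\beta})^{6S/\eps_3}$ makes $\Delta/N$ negligible next to the relevant thresholds. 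The hard part will be part (a): the sparse-vector privacy argument itself is classical, but adapting it to \emph{random} queries requires producing a single coupling of the two runs under which all $T$ empirical log-potentials are $S$-Lipschitz with combined failure probability only $\delta$ --- and this is precisely where~\Cref{lem:trivial-coupling}, split across the $\set{\ge\T_i}$ and $\set{<\T_i}$ counts, and the independence of the Laplace noise from the samples, do the work.
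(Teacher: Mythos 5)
Your proposal is correct and follows essentially the same route as the paper's proof: for part (a) you couple the empirical estimates via \cref{lem:trivial-coupling} with $\delta_0=\delta/(4T)$ to get a probability-$(1-\delta)$ event on which all $\ln\Pnd(\pti)$ are simultaneously within $S$ of their counterparts, and then apply the sparse-vector privacy argument (which the paper carries out explicitly via the noise-shifting coupling and its \cref{lem:couplingDP}, whereas you invoke it as the standard fixed-vector bound and integrate over the good event directly); for (b)--(d) your split into Laplace-tail events plus multiplicative concentration of $\pti$ about $\p(\T_i,Q_i)$, followed by unwinding $\Pnd$ with the stated choice of $N$, is exactly the paper's $\F_1,\F_2,\F_3$ argument.
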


Before proving the theorem, we state the following sufficient condition for establishing $(\eps,\delta)$-DP.

\begin{lemma}
	Let $X$ and $Y$ be two random variables that share the same sample space and $\sigma$-algebra,
	If there exists constants $\delta>0, \eps>0$, and for any event $A$, there exists a joint event $\G \defeq \G(X,Y)$ on $X$ and $Y$ such that $\Pr[\G] \ge 1-\delta$,
	and
	\[
		e^{-\eps} \Pr[Y\in A | \G] \le \Pr[X\in A | \G] \le e^{\eps}\Pr[Y\in A | \G],
	\]
	then, $X$ and $Y$ also satisfies that 
	\[
		e^{-\eps}\inp{ \Pr[Y\in A ] - \delta} \le \Pr[X\in A] \le e^{\eps}\Pr[Y\in A] + \delta.
	\]
	\label{lem:couplingDP}
\end{lemma}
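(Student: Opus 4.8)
The plan is to decompose $\Pr[X\in A]$ according to whether the good joint event $\G$ occurs, and to bound the two pieces separately. Fix an arbitrary event $A$ and let $\G = \G(X,Y)$ be the joint event promised by the hypothesis, so that $\Pr[\G]\ge 1-\delta$. We may assume $\delta<1$, since otherwise the conclusion is trivial ($\Pr[X\in A]\le 1\le e^{\eps}\Pr[Y\in A]+\delta$, and the claimed lower bound is a non-positive quantity). In particular $\Pr[\G]\ge 1-\delta>0$, so all the conditional probabilities below are well defined.

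For the upper bound I would write $\Pr[X\in A] = \Pr[X\in A,\,\G] + \Pr[X\in A,\,\G^c]$. The second term is at most $\Pr[\G^c]\le \delta$. For the first term, $\Pr[X\in A,\,\G] = \Pr[X\in A\mid \G]\,\Pr[\G] \le e^{\eps}\,\Pr[Y\in A\mid \G]\,\Pr[\G] = e^{\eps}\,\Pr[Y\in A,\,\G] \le e^{\eps}\,\Pr[Y\in A]$, where the middle inequality is exactly the hypothesis. Adding the two bounds gives $\Pr[X\in A]\le e^{\eps}\Pr[Y\in A]+\delta$.

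For the lower bound I would simply discard the $\G^c$ contribution: $\Pr[X\in A]\ge \Pr[X\in A,\,\G] = \Pr[X\in A\mid \G]\,\Pr[\G] \ge e^{-\eps}\,\Pr[Y\in A\mid \G]\,\Pr[\G] = e^{-\eps}\,\Pr[Y\in A,\,\G]$. Since $\Pr[Y\in A,\,\G]\ge \Pr[Y\in A] - \Pr[\G^c]\ge \Pr[Y\in A]-\delta$, this yields $\Pr[X\in A]\ge e^{-\eps}\inp{\Pr[Y\in A]-\delta}$, completing the proof.

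There is essentially no obstacle here; the argument is a two-line conditioning computation. The only point worth flagging is that $\G$ is permitted to depend on the event $A$, but since we establish the two-sided inequality separately for each fixed $A$ this causes no difficulty — we never need a single event $\G$ that works simultaneously for all $A$.
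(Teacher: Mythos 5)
Your proposal is correct and follows essentially the same argument as the paper: decompose over $\G$ and $\G^c$, bound $\Pr[X\in A,\G^c]\le\delta$ for the upper bound, and for the lower bound drop the $\G^c$ term and use $\Pr[Y\in A,\G]\ge\Pr[Y\in A]-\delta$. The added remarks (handling $\delta\ge 1$, and noting $\G$ may depend on $A$) are fine but not needed beyond what the paper does.
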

Informally, in order to show $(\eps,\delta)$-DP, it suffices to construct a coupling where, except with probability $\delta$, the two neighboring distributions satisfy $\eps$-DP.
It is worth noting that $X$ and $Y$ need not be independent. Thus one could optimize $\delta$ by constructing a coupling between $X$ and $Y$ that maximizes $\Pr[\G]$.
In addition, we note that the design of $\G$ and the coupling between $X$ and $Y$ can be dependent on the event $A$.
\begin{proof}
	For the first inequality,
	\begin{align*}
		\Pr[X \in A] \ge& \Pr[X\in A | \G] \cdot \Pr[\G] \\
		\ge& e^{-\eps}\Pr[Y \in A | \G] \cdot \Pr[\G] \\
		=& e^{-\eps}\inp{\Pr[Y \in A] - \Pr\inb{Y \in A, \overline{\G}}}\\
		\ge& e^{-\eps}\inp{\Pr[Y \in A] - \Pr\inb{\overline{\G}}} \\
		\ge& e^{-\eps}\inp{ \Pr[Y \in A ] - \delta}.
	\end{align*}

	For the second inequality,
	\begin{align*}
		\Pr[X \in A] =& \Pr[X \in A , \G]  + \Pr\inb{X \in A ,\overline{G}}\\
		\le& \Pr[X \in A | \G] \cdot \Pr[\G] + \Pr\inb{\overline{G}}\\
		\le& e^{\eps}\Pr[Y \in A | \G]\cdot  \Pr[\G] + \delta\\
		\le& e^{\eps} \Pr[Y \in A ] + \delta.
	\end{align*}
\end{proof}
%
Finally we prove~\cref{thm:gensparse1}.

{\noindent\emph{Proof of~\cref{thm:gensparse1}. }}
  For part (a), we follow the standard analysis of sparse vector.
  Fix any two neighboring datasets $\Data$ and $\Data'$.
  By~\cref{lem:couplingDP}, in order to show $(\eps_3,\delta)$-DP, it suffices to find a conditioning event $\G$, and a coupling between
  the output distribution of~\cref{alg:gensparse1} running on $\Data$ and $\Data'$, such that they are $\eps_3$-close except with probability $\delta$.
  Observe that in order to obtain the same output, it suffices if we can couple all the noisy tests of the form
  $e^{\xi_i}\cdot \Pnd\inp{\pti} \ge e^{\nu} \Pnd(\pstar)$.
  These tests depend only on two types of randomness: the perturbations to the current percentile (in the form of $\xi_i$),
  and the perturbations to the desired percentile (in the form of $\nu$).
  We denote these randomness by $\set{\xi_i}$ and $\nu$ when running on dataset $\Data$ ,
  and by $\set{\xi_i'}$ and $\nu'$ when running on $\Data'$.

  We consider the event that $a_R=\top$ and  $\forall i<R, a_i=\bot$.
  Let $\Phi* \defeq \Pnd(\pstar)$, and
  \begin{align*}
	  \Phi_i \defeq \Pnd\inp{\pti}\quad\quad&\hbox{ and }\quad\quad
	  \Phi_i' \defeq \Pnd\inp{\pti'}, \\
	  g \defeq \max_{i<R} \set{e^{\xi_i}\cdot \Phi_i} \quad\quad&\hbox{ and }\quad\quad
  g' \defeq \max_{i<R} \set{e^{\xi_i}\cdot \Phi_i'}.
  \end{align*}
  Now we are ready to specify the coupling.
  Given $\set{\xi_i}$ and $\nu$, we let
  $\nu' = \nu + \ln \frac{g'}{g}$, and $\xi_i' =
  \begin{cases}
    \xi_i, &\hbox{ if $i<R$}\\
    \xi_R + \ln \frac{g'}{g} + \ln \frac{\Phi_R}{\Phi_R'}, &\hbox{ if $i=R$}
  \end{cases}
  $.
  Then, it is not hard to check that under this coupling,
  \begin{align*}
	  g<e^{\nu} \Phi_* &\iff g' < e^{\nu'} \Phi_*, \\
	  e^{\xi_R} \Phi_R \ge e^{\nu} \Phi_* &\iff e^{\xi_R'} \Phi_R \ge e^{\nu'} \Phi_*.
  \end{align*}

  In the following we will abuse notation, and write $\Pr_{\Lap}[\xi_R]$ to denote the probability density function of the Laplace distribution.
  Then, let $a_i$ be the $i$-th output of the algorithm running on dataset $\Data$,
  and $a_i'$ be that of $\Data'$.
  \begin{align*}
	  \frac{\Pr[a_R=\top]}{\Pr[a_R'=\top]}
	  =& \frac{\Pr\inb{g<e^{\nu} \Phi_* \wedge  e^{\xi_R} \Phi_R \ge e^{\nu} \Phi_* }}{\Pr\inb{g'<e^{\nu'} \Phi_* \wedge  e^{\xi_R'} \Phi_R \ge e^{\nu'} \Phi_* }}\\
	  =&\frac{\Integrate{\Integrate{\Pr\inb{g<e^{\nu} \Phi_* \wedge  e^{\xi_R} \Phi_R \ge e^{\nu} \Phi_* \mid \xi_R, \nu} \cdot \Pr_{\Lap}[\xi_R] \cdot \Pr_{\Lap}[\nu]}{\nu,\R}}{\xi_R,\R}}{\Integrate{\Integrate{\Pr\inb{g'<e^{\nu'} \Phi_* \wedge  e^{\xi_R'} \Phi_R \ge e^{\nu'} \Phi_* \mid \xi_R', \nu' } \cdot \Pr_{\Lap}[\xi_R'] \cdot \Pr_{\Lap}[\nu']}{\nu',\R}}{\xi_R',\R}}  \\
	  \le& \sup_{\xi,\nu} \frac{\Pr_{\Lap}[\xi_R] \cdot \Pr_{\Lap}[\nu]}{\Pr_{\Lap}[\xi_R'] \cdot \Pr_{\Lap}[\nu']}, \quad\hbox{ by the coupling between $\xi_R,\xi_R'$ and $\nu,\nu'$}.
  \end{align*}

  Therefore, it remains to bound $\abs{\xi_R - \xi_R'}$ and $\abs{\nu - \nu'}$, which depends on the randomness involved in the probabilistic queries $\Phi_i$ and $\Phi_i'$.
  Thus we need to couple $\Phi_i$ and $\Phi_i'$.
  For the given $\eps_1,\eps_0$ (as specified in the theorem statement), we let $\delta_0 = \delta/T$, $X_1 = N\pti$,  and $Y_1=N\pti'$.
  Recall that $e^{-\eps_1} \E Y_1 \le \E X_1 \le e^{\eps_1} \E Y_1 $, then by~\cref{lem:trivial-coupling}, $X_1$ and $Y_1$ under the trivial coupling satisfies:
  \begin{align*}
	  \Pr\inb{ X_1 + \Delta \ge e^{\eps_1+\eps_0} \cdot \inp{Y_1 +  \Delta}} \le \frac{\delta}{4T},\\
	  \Pr\inb{ Y_1 + \Delta \ge e^{\eps_1+\eps_0} \cdot \inp{X_1 +  \Delta}} \le \frac{\delta}{4T}.
  \end{align*}
  Similarly if we let $X_2=N(1-\pti)$ and $Y_2 = N (1-\pti')$, then under the trivial coupling,
  \begin{align*}
	  \Pr\inb{ X_2 + \Delta \ge e^{\eps_1+\eps_0} \cdot \inp{Y_2 +  \Delta}} \le \frac{\delta}{4T},\\
	  \Pr\inb{ Y_2 + \Delta \ge e^{\eps_1+\eps_0} \cdot \inp{X_2 +  \Delta}} \le \frac{\delta}{4T}.
  \end{align*}
  We consider the following conditioning event:
  \begin{align*}
	  \G \defeq \set{\forall i \in [R],\quad \abs{\ln \inp{\frac{N \pti' +  \Delta}{N\pti + \Delta} }} \le  \eps_1+\eps_0 \quad\hbox{ and }\quad \abs{\ln \inp{\frac{N (1-\pti') +  \Delta}{N(1-\pti) + \Delta} }} \le  \eps_1+\eps_0}.
  \end{align*}
  By a union bound, we have $\Pr\inb{ \G } \ge 1 - \delta$.
  Conditional on $\G$, by triangle inequality we have:
  \begin{align*}
	  \forall i\in [R],\quad \abs{\ln \frac{\Phi_i}{\Phi_i'}} \le \abs{\ln \inp{\frac{N \pti' +  \Delta}{N\pti + \Delta} }} + \abs{\ln \inp{\frac{N (1-\pti') +  \Delta}{N(1-\pti) + \Delta} }} \le  2(\eps_1+\eps_0) = S.
  \end{align*}
  In other words, conditional on $\G$,
  \begin{align*}
	  \abs{\nu - \nu'} =&\abs{\ln \frac{g'}{g}} \le \max_{i<R} \abs{\ln \frac{\Phi_i}{\Phi_i'}} \le S \\
	  \abs{\xi_R - \xi_R'} \le&  \abs{\ln\frac{g'}{g}} + \abs{\ln \frac{\Phi_R}{\Phi_R'}}\le 2 \max_{i\le R} \abs{\ln \frac{\Phi_i}{\Phi_i'}} \le 2S.
  \end{align*}

  Now we are ready to bound
  \begin{align*}
	  \frac{\Pr[a_R=\top \mid \G ]}{\Pr[a_R'=\top \mid \G]}
	  \le  \frac{\Pr_{\Lap}[\xi_R] \cdot \Pr_{\Lap}[\nu]}{\Pr_{\Lap}[\xi_R \pm 2S] \cdot \Pr_{\Lap}[\nu \pm S]}
	  \le \exp(\eps_3).
  \end{align*}
    where the last inequality uses the probability density function of the two Laplace distributions.

    Finally consider the event that $R=T$ and $a_i=\bot$ for all $i \in [T]$, by a similar argument we have
  \begin{align*}
	  \frac{\Pr[a_T=\bot \mid \G ]}{\Pr[a_T'=\bot \mid \G]}
	  =& \frac{\Pr\inb{g<e^{\nu} \Phi_* \mid \G  }}{\Pr\inb{g'<e^{\nu'} \Phi_* \mid \G}}
	  \le  \frac{\Pr_{\Lap}[\xi_R] }{\Pr_{\Lap}[\xi_R \pm 2S] }
	  \le \exp(\eps_3).
  \end{align*}
    Since our choice of $R$ is arbitrary, this shows that conditioned on $\G$, we have $\eps_3$-DP for the output of our algorithm.
    Since $\Pr[\G] \ge 1-\delta$, by~\cref{lem:couplingDP} this concludes $(\eps_3,\delta)$-DP for the output unconditionally.

	\bigskip

	For part (b), we consider the events of non-concentration:
	\begin{align*}
	  \F_1 &\defeq \set{\nu : \abs{\nu} \ge \frac{2S}{\eps_3}\ln\frac{R+1}{\beta}}\\
	  \F_2 &\defeq \set{\xi_1, \cdots, \xi_{R}:  \exists i \in [R], \abs{\xi_i} \ge \frac{4S}{\eps_3}\ln\frac{R+1}{\beta}}\\
	  \F_3 &\defeq \set{\pt_1, \cdots, \pt_{R}:  \exists i \in [T],  \Pnd(\pti) > e^{\eps_0} \cdot \Pnd\inp{\E \pti}}.
	\end{align*}
	Then similar to part (a) we have
	\[
		\Pr[\F_1 \cup \F_2 \cup \F_3] \le \Pr[\F_1] + \Pr[\F_2] + \Pr[\F_3] \le \frac{\beta}{R+1} + R\frac{\beta}{R+1} + 2R\frac{\delta}{4T}\le \beta + \delta/2,
	\]
	where the bounds for $\F_1$ and $\F_2$ follows directly from CDF of the Laplace distribution, and the bound for $\F_3$ follows from a concentration bound (see~\cref{lem:concentrationX}).
	Therefore, conditional on avoiding $\F_1 \cup \F_2$,
	if the algorithm stops at the $k$-th iteration, we have that
	\begin{align*}
		\exp(\xi_k) \cdot \Pnd(\ptk) \ge \exp(\nu) \cdot \Pnd(\pstar)
		\implies  \Pnd(\ptk) \ge \inp{\frac{\beta}{R+1}}^{\frac{6S}{\eps_3}} \cdot \Pnd(\pstar).
	\end{align*}
	Next, conditioning further on avoiding $\F_3$, we have that
	\begin{align*}
		&\Pnd(\ptk) \le e^{\eps_0} \Pnd\inp{ \E \ptk}\\
		\implies &
		\Pnd\inp{ \E \ptk} \ge e^{-\eps_0} \cdot \Pnd(\ptk) \ge e^{-\eps_0}\inp{\frac{\beta}{R+1}}^{\frac{6S}{\eps_3}} \cdot \Pnd(\pstar).
	\end{align*}

	Let $N \ge  \frac{e^{\eps_0}\Delta}{\pstar}\inp{\frac{T+1}{\beta}}^{\frac{6S}{\eps_3}} $,
	then we have
	\begin{align*}
		&\Pnd(\pstar) = \frac{\pstar + \Delta/N}{1- \pstar + \Delta/N}\ge 2 \pstar, \\
		\implies&\Pnd\inp{ \E \ptk}\ge 2e^{-\eps_0} \inp{\frac{\beta}{R+1}}^{\frac{6S}{\eps_3}}\pstar, \\
		\implies&\p(\T_k, Q_k) = \E \ptk \ge \frac{e^{-\eps_0} \inp{\frac{\beta}{R+1}}^{\frac{6S}{\eps_3}}\pstar}{1 + e^{-\eps_0} \inp{\frac{\beta}{R+1}}^{\frac{6S}{\eps_3}}\pstar} \ge \frac{1}{2}e^{-\eps_0} \inp{\frac{\beta}{R+1}}^{\frac{6S}{\eps_3}}\pstar. 
	\end{align*}


	This concludes the proof.

	\bigskip

	For part (c), it will be similar to part (b),
	except that we consider
	\[
	  \F_3 \defeq \set{\pt_1, \cdots, \pt_{R}:  \exists i \in [R],  \Pnd(\pti) < e^{-\eps_0} \cdot \Pnd\inp{\E \pti}}.
  \]
  As before we still have
	\[
		\Pr[\F_1 \cup \F_2 \cup \F_3] \le \Pr[\F_1] + \Pr[\F_2] + \Pr[\F_3] \le \frac{\beta}{R+1} + R\frac{\beta}{R+1} + 2R\frac{\delta}{4R}\le \beta + \delta/2.
	\]

	Then, conditioning on avoiding $\F_1,\F_2,\F_3$, we have that
	\begin{align*}
		\Pnd\inp{ \E \ptk} \le e^{\eps_0} \cdot \Pnd(\ptk) \le e^{\eps_0}\inp{\frac{R+1}{\beta}}^{\frac{6S}{\eps_3}} \cdot \Pnd(\pstar).
	\end{align*}

	Let $N \ge \frac{e^{\eps_0}\Delta}{1-\pstar}\inp{\frac{T+1}{\beta}}^{\frac{6S}{\eps_3}} $,
	then we have
	\begin{align*}
		&\Pnd(\pstar) = \frac{\pstar + \Delta/N}{1- \pstar + \Delta/N}\le \frac{1}{2(1-\pstar)}, \\
		\implies&\Pnd\inp{ \E \ptk}\le \frac{e^{\eps_0}}{2(1-\pstar)}\inp{\frac{R+1}{\beta}}^{\frac{6S}{\eps_3}} ,\\
		\implies&\p(\T_k, Q_k) = \E \ptk \le 1 - \inp{1 + \frac{e^{\eps_0}}{2(1-\pstar)}\inp{\frac{R+1}{\beta}}^{\frac{6S}{\eps_3}} }^{-1}.
	\end{align*}

	\bigskip


	For part (d), 
	observe that the test $\exp(\xi_i) \cdot \Pnd(\pti) \ge \exp\inp{\nu } \cdot \Pnd(\pstar)  $ will pass if
	$\xi_i \ge -\frac{4S}{\eps_3} \ln \frac{1}{\beta}$,
	$\nu \le \frac{2S}{\eps_3} \ln \frac{1}{\beta}$,
	and
	$\Pnd(\pti) \ge e^{-\eps_0/2} \Pnd(\E\pti)$.
	Similar to part (b) and (c), we get that this will happen except with probability $\beta + \frac{\delta}{4T}$.

	\qed

\subsection{A more efficient sparse vector for a one-sided guarantee}
In this subsection we consider searching for the unknown ``good'' threshold $\T$ for~\cref{alg:thresholding} in a more efficient yet private way.
The idea is that, instead of trying to tackle adversarily chosen randomized online queries, here we design better queries for our algorithm.

Specifically, let
$Q(\Data)$ be a distribution dependent on dataset $\Data$, and let $q\sim Q$.
Let $p(\T, Q) \defeq \Pr_{q\sim Q} [q \ge \T]$.
Then, given $\pstar \in (0,1)$, our goal is to find $\Tstar \defeq \max\set{\T: p(\T, Q) \ge \pstar}$ in a differentially private way.

Since $\Tstar$ can be very sensitive for neighboring datasets (see~\Cref{sec:percentileDP} for an example for $\pstar=1/2$),
outputting $\Tstar$ directly would not be private.
The relaxed goal is to find, with high probability, a private threshold $\Tt$ so that:

\begin{center}
$\Tt$ is almost as large as $\Tstar$, and $p(\Tt,Q)$ is not much smaller than $\pstar$.
\end{center}

It is worth noting that, due to the one-sided nature of our goal (instead of asking $p(\Tt,Q)$ to be close to $\pstar$, we only want $p(\Tt,Q)$ to be not much smaller than $\pstar$),
we find it much more convenient to shift the target by a constant factor: from $\pstar$ to a smaller target $\approx \beta^{\frac{6\eps_1}{\eps_3}} \cdot \pstar$.
Such a tradeoff enables us to find a $\Tt$ that is closer to $\Tstar$, at the cost of a potentially smaller $p(\Tt,Q)$.
In the settings that we consider, a higher $\Tt$ allows for better ``quality'' of the selected candidate, while a larger $\pstar$ is usually only for smaller computational cost.
\begin{algorithm}
	\flushleft Input: $R, \delta, \eps_0, \eps_1, \eps_3, \beta, \pstar$, and sampling access to $Q(\Data)$.

	Set $S = \eps_1 + \eps_0$, $\Delta =  \frac{C\ln\frac{4R}{\delta}}{\eps_0 \inp{e^{\eps_0+\eps_1} - 1}}$, $N =  \frac{3\Delta e^{\eps_0/2}}{\pstar} \cdot \beta^{\frac{- 12 S}{\eps_3}} (R+1)^{\frac{6S}{\eps_3}} $, and $\Lambda = \frac{\eps_0}{2} + \frac{6S}{\eps_3} \ln \frac{1}{\beta}$

  Sample $\nu \sim \Lap\inp{\frac{2S}{\eps_3}}$, and draw iid samples $\set{q_1, \cdots, q_N} \sim Q^N$ 

  For $i=1, \cdots, R $:
  \begin{itemize}
    \item let $\T_i = 1 - \frac{i-1}{R-1}$
    \item let $\pti \defeq \frac{1}{N} \sum_{j=1}^N \knuth{q_i \ge \T_i}$
    \item sample $\xi_i \sim \Lap\inp{\frac{4S}{\eps_3}}$
    \item if $\exp(\xi_i) \cdot (N \pti + \Delta) \ge \exp\inp{\nu - \Lambda} \cdot (N \pstar + \Delta)  $:
      \begin{itemize}
	\item output $\T_i$ and 
	  halt
      \end{itemize}
  \end{itemize}
  \caption{The FindPercentileThreshold algorithm.}
  \label{alg:sparse1}
\end{algorithm}

\begin{theorem}
  Let $Q$ be a $\eps_1$-DP distribution.
  For any fixed $\eps_0 \in (0,1)$, $\delta \in (0,1)$, $\beta \in (0,1)$ and an integer $R>1$,
let $S = \eps_1 + \eps_0$, $C=2(e^{\eps_0+\eps_1} +1 + e^{\eps_0/2}) < 21$, and $\Delta = \frac{C\ln\frac{4R}{\delta}}{\eps_0 \inp{e^{\eps_0+\eps_1} - 1}} = O\inp{\frac{1}{\eps_0^2} \ln\frac{R}{\delta}}$.
Then the following holds for the output $\Tt$ of~\cref{alg:sparse1} with the above parameters:
\begin{enumerate}[(a)]
  \item $\Tt$ is $(\eps_3,\delta)$-DP.
    \item
the algorithm does not stop too early:
      \[
	      \Pr\inb{p(\Tt, Q) \le \frac{e^{-\eps_0/2}}{3}\inp{\frac{\beta^2 }{R+1}}^{\frac{6S}{\eps_3}}  \cdot \pstar} \le \beta + \delta/2.
    \]

    \item $Pr\inb{\Tt \le \Tstar - \frac{1}{R}} \le 2\beta + \frac{\delta}{2R}$.
      In other words, the algorithm does not stop too late.
  \end{enumerate}
  \label{thm:sparse1}
\end{theorem}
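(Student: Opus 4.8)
The plan is to follow the structure of the proof of~\cref{thm:gensparse1}, specialized to the one-sided setting of~\cref{alg:sparse1} where only the numerator $N\pti+\Delta$ of the potential $\Pnd(\pti)$ is tracked; this halves the effective Lipschitz-in-log constant to $S=\eps_1+\eps_0$, which is why~\cref{alg:sparse1} uses the smaller noise scales $2S/\eps_3$ and $4S/\eps_3$ while still attaining $\eps_3$. For part~(a), I would fix neighbors $\Data,\Data'$, couple the single draws $\set{q_1,\dots,q_N}\sim Q(\Data)^N$ and $\set{q_1',\dots,q_N'}\sim Q(\Data')^N$ independently, and note that for each fixed $\T_i$ the counts $N\pti=\sum_j\knuth{q_j\ge\T_i}$ and $N\pti'=\sum_j\knuth{q_j'\ge\T_i}$ are independent sums of Bernoullis with means within a factor $e^{\eps_1}$; then~\cref{lem:trivial-coupling} in both directions with $\delta_0=\delta/(2R)$ (exactly the choice baked into $\Delta=\frac{C\ln(4R/\delta)}{\eps_0(e^{\eps_0+\eps_1}-1)}$), plus a union bound over $i\in[R]$, makes the event $\G=\set{\forall i\in[R]:\abs{\ln\frac{N\pti+\Delta}{N\pti'+\Delta}}\le S}$ have probability $\ge1-\delta$. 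Conditioned on $\G$ I would run the usual sparse-vector noise coupling: if the $\Data$-run halts at iteration $k$, put $\nu'=\nu+\ln(g'/g)$, $\xi_i'=\xi_i$ for $i<k$, $\xi_k'=\xi_k+\ln(g'/g)+\ln(\Phi_k/\Phi_k')$ with $\Phi_i=N\pti+\Delta$, $g=\max_{i<k}e^{\xi_i}\Phi_i$ and $g'$ its $\Data'$-analogue; the constant shift $\Lambda$ in the test plays no role in this coupling, so the $\Data'$-run produces the same output, and on $\G$ one has $\abs{\nu-\nu'}\le S$, $\abs{\xi_k-\xi_k'}\le2S$ (with only $\nu$ shifted, by $\le S$, when the run never halts). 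The Laplace density ratio at the coupled points is then $\le e^{\eps_3/2}\cdot e^{\eps_3/2}=e^{\eps_3}$, giving $\eps_3$-DP conditioned on $\G$, and~\cref{lem:couplingDP} lifts this to $(\eps_3,\delta)$-DP.

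For part~(b) I would introduce the failure events $\F_1=\set{\abs\nu\ge\tfrac{2S}{\eps_3}\ln\tfrac{R+1}\beta}$, $\F_2=\set{\exists i\in[R]:\abs{\xi_i}\ge\tfrac{4S}{\eps_3}\ln\tfrac{R+1}\beta}$ and $\F_3=\set{\exists i\in[R]:N\pti+\Delta>e^{\eps_0/2}(Np(\T_i,Q)+\Delta)}$, bounding $\Pr[\F_1\cup\F_2\cup\F_3]\le\tfrac\beta{R+1}+\tfrac{R\beta}{R+1}+\tfrac\delta2=\beta+\tfrac\delta2$ via the Laplace CDF and~\cref{lem:concentrationX} (which is where the choice of $\Delta$ is used). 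Off this event, if the output is $\Tt=\T_k$, the halting inequality $e^{\xi_k}(N\ptk+\Delta)\ge e^{\nu-\Lambda}(N\pstar+\Delta)$ together with the bounds on $\nu,\xi_k$ and $\Lambda=\tfrac{\eps_0}2+\tfrac{6S}{\eps_3}\ln\tfrac1\beta$ gives $N\ptk+\Delta\ge e^{-\eps_0/2}\inp{\tfrac{\beta^2}{R+1}}^{6S/\eps_3}(N\pstar+\Delta)$; applying $\overline{\F_3}$ a second time gives $Np(\T_k,Q)+\Delta\ge e^{-\eps_0}\inp{\tfrac{\beta^2}{R+1}}^{6S/\eps_3}(N\pstar+\Delta)$, and then dropping $+\Delta$ on the right and substituting $N=\tfrac{3\Delta e^{\eps_0/2}}{\pstar}\inp{\tfrac{\beta^2}{R+1}}^{-6S/\eps_3}$ (so $Np(\T_k,Q)$ dominates the additive $\Delta$) yields $p(\T_k,Q)\ge\tfrac{e^{-\eps_0/2}}3\inp{\tfrac{\beta^2}{R+1}}^{6S/\eps_3}\pstar$.

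For part~(c) I may assume $\Tstar>1/R$, since otherwise $\Tt\ge\T_R=0\ge\Tstar-\tfrac1R$ always. Let $k^\dagger$ be the first index with $\T_{k^\dagger}\le\Tstar$; by the resolution of the grid $\set{\T_i}$ we have $\T_{k^\dagger}>\Tstar-\tfrac1R$, and since $p(\cdot,Q)$ is non-increasing and $p(\Tstar,Q)\ge\pstar$, also $p(\T_{k^\dagger},Q)\ge\pstar$. If the algorithm halts at any iteration $\le k^\dagger$ then $\Tt\ge\T_{k^\dagger}>\Tstar-\tfrac1R$, so the bad event is contained in the event that the iteration-$k^\dagger$ test fails. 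But that test passes whenever $\xi_{k^\dagger}\ge-\tfrac{4S}{\eps_3}\ln\tfrac1\beta$, $\nu\le\tfrac{2S}{\eps_3}\ln\tfrac1\beta$ and $N\pt_{k^\dagger}+\Delta\ge e^{-\eps_0/2}(Np(\T_{k^\dagger},Q)+\Delta)\ge e^{-\eps_0/2}(N\pstar+\Delta)$: these force $e^{\xi_{k^\dagger}}(N\pt_{k^\dagger}+\Delta)\ge e^{-\eps_0/2-\frac{4S}{\eps_3}\ln\frac1\beta}(N\pstar+\Delta)\ge e^{\frac{2S}{\eps_3}\ln\frac1\beta-\Lambda}(N\pstar+\Delta)\ge e^{\nu-\Lambda}(N\pstar+\Delta)$, the middle step being precisely the definition of $\Lambda$. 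The first two conditions fail with probability $\tfrac\beta2$ each by the Laplace CDF, the third with probability $\le\tfrac\delta{2R}$ by~\cref{lem:concentrationX}, so the bad event has probability $\le\beta+\tfrac\delta{2R}\le2\beta+\tfrac\delta{2R}$.

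I expect the main obstacle to be part~(a): making the probabilistic coupling of the empirical estimators $\pti,\pti'$ supplied by~\cref{lem:trivial-coupling} coexist with the deterministic noise-shifting coupling of the sparse-vector argument inside a single conditioning event $\G$, and checking that $\G$ simultaneously covers all $R$ iterations with total failure probability $\delta$ under the stated $\Delta$. Parts~(b) and~(c) should then be routine once one observes that $\Lambda$ is calibrated to absorb exactly the $\tfrac{6S}{\eps_3}\ln\tfrac1\beta$ of Laplace slack together with the $\eps_0/2$ of empirical-versus-true slack, and that $N$ is chosen large enough that the smoothing term $\Delta$ only costs a constant factor.
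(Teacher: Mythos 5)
Your proposal is correct and follows essentially the same route as the paper's own (sketched) proof: part (a) is the \cref{thm:gensparse1} coupling argument specialized to the one-sided potential $Nx+\Delta$ with the reused samples handled by a union bound and the shift $\Lambda$ absorbed as a change of target (your calibration $\delta_0=\delta/(2R)$ matching $\Delta=\frac{C\ln(4R/\delta)}{\eps_0(e^{\eps_0+\eps_1}-1)}$ is exactly right), and parts (b)--(c) use the same failure events $\F_1,\F_2,\F_3$ and the same role of $\Lambda$ and $N$. The only blemishes are at the same level of gloss as the paper's own sketch: in (b) your final inequality $Np(\T_k,Q)\ge(3e^{-\eps_0/2}-1)\Delta\ge\Delta$ needs $\eps_0\le 2\ln\tfrac32$ unless one tracks the $e^{\pm\eps_0/2}$ factors slightly more carefully (the paper silently drops one such factor), and in (c) the grid spacing of \cref{alg:sparse1} is $\tfrac1{R-1}$ rather than $\tfrac1R$, an off-by-one the paper's proof also does not address.
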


\begin{proof} (Sketch)
	For part (a), this basically follows from the same proof of~\cref{thm:gensparse1} part (a), except the following changes:
	\begin{itemize}
		\item We consider $\Pnd(x) = N x + \Delta$. It is not hard to see that the proof only relies on the fact that $\Pnd(x)$ is monotone, and $\Pnd\inp{\pti}$ can be coupled with $\Pnd\inp{\pti'}$ multiplicatively.
		\item Here we can re-use randomness, due to the fact that we essentially have the same distribution, and only need to change $\T_i$. It is worth noting that we did not require independence of the $\set{\pti}$ since we only used union bound.
		\item We have also shifted the target of $\Pnd(\pstar)$ multiplicatively. However it does not affect privacy, since one can view such a shift as considering a different $\pstar$ to begin with.
	\end{itemize}
	\bigskip

	For part (b), similarly we consider the events of non-concentration:
	\begin{align*}
	  \F_1 &\defeq \set{\nu : \abs{\nu} \ge \frac{2S}{\eps_3}\ln\frac{R+1}{\beta}}\\
	  \F_2 &\defeq \set{\xi_1, \cdots, \xi_{R}:  \exists i \in [R], \abs{\xi_i} \ge \frac{4S}{\eps_3}\ln\frac{R+1}{\beta}}\\
	  \F_3 &\defeq \set{\pt_1, \cdots, \pt_{R}:  \exists i \in [R],  N \pti + \Delta > e^{\eps_0/2} \inp{ N p(\T_i,Q) + \Delta }}.
	\end{align*}
	Then we have
	\[
		\Pr[\F_1 \cup \F_2 \cup \F_3] \le \Pr[\F_1] + \Pr[\F_2] + \Pr[\F_3] \le \frac{\beta}{R+1} + R\frac{\beta}{R+1} + R\frac{\delta}{2R}\le \beta + \delta/2,
	\]
	where the bounds for $\F_1$ and $\F_2$ follows directly from CDF of the Laplace distribution, and the bound for $\F_3$ follows from a concentration bound (see~\cref{lem:concentrationX}).
	Therefore, conditional on avoiding $\F_1 \cup \F_2$,
	if the algorithm stops at the $k$-th iteration, we have that
	\begin{align*}
	  &\exp(\xi_k) \cdot (N \ptk + \Delta) \ge \exp(\nu - \Lambda) \cdot (N \pstar + \Delta)\\
	  \implies &\inp{\frac{R+1}{\beta}}^{\frac{4S}{\eps_3}} \cdot (N \ptk + \Delta) \ge \inp{\frac{\beta}{R+1}}^{\frac{2S}{\eps_3}} \cdot \beta^{\frac{6S}{\eps_3}} \cdot e^{-\eps_0/2} \cdot (N \pstar + \Delta) \\
	  \implies & N\ptk \ge \inp{\frac{1}{R+1}}^{\frac{6S}{\eps_3}} \beta^{\frac{12S}{\eps_3}}\cdot N \pstar - \Delta.
	\end{align*}
	Set $N =  \frac{3\Delta e^{\eps_0/2}}{\pstar} \cdot \beta^{\frac{- 12 S}{\eps_3}} (R+1)^{\frac{6S}{\eps_3}} $,
	then we have
	\begin{align}
	N\ptk \ge 2\Delta e^{\eps_0/2} = \frac{2}{3}\inp{\frac{1}{R+1}}^{\frac{6S}{\eps_3}} \beta^{\frac{12S}{\eps_3}} \cdot N \pstar .
		\label{eq:nptk}
	\end{align}

	Next, conditioning further on avoiding $\F_3$, we have that
	\begin{align*}
		&N \ptk + \Delta \le e^{\eps_0/2} \inp{ N p(\Tt,Q) + \Delta }\\
		\implies &
		N p(\Tt,Q) \ge e^{-\eps_0/2} N \ptk - \Delta \ge \Delta= \frac{e^{-\eps_0/2}}{3}\inp{\frac{1}{R+1}}^{\frac{6S}{\eps_3}} \beta^{\frac{12S}{\eps_3}} \cdot N \pstar \\
		\implies &
		p(\Tt,Q) \ge \frac{e^{-\eps_0/2}}{3}\inp{\frac{1}{R+1}}^{\frac{6S}{\eps_3}} \beta^{\frac{12S}{\eps_3}} \cdot \pstar.
	\end{align*}

	This concludes the proof.

	\bigskip


	For part (c), as soon as $\T_i \le \Tstar$, we have $p(\T_i, Q) \ge \pstar$.
	Therefore the test $\exp(\xi_i) \cdot (N \pti + \Delta) \ge \exp\inp{\nu - \Lambda} \cdot (N \pstar + \Delta)  $ will pass if
	$\xi_i \ge -\frac{4S}{\eps_3} \ln \frac{1}{\beta}$,
	$\nu \le \frac{2S}{\eps_3} \ln \frac{1}{\beta}$,
	and
	$N \pti + \Delta \ge e^{-\eps_0/2} \inp{ N p(\T_i,Q) + \Delta }$.
	Similar to part (b), we get that this will happen except with probability $2\beta + \frac{\delta}{2R}$.
	In other words, the probability of not halting after the first iteration with $\T_i \le \Tstar$ is at most $2\beta + \frac{\delta}{2R}$.
\end{proof}

Finally, by combining~\cref{thm:sparse1} and~\cref{thm:thresholdingDP}, we get the following:
\begin{theorem}
	\label{thm:main}
Fix any $\eps_1>0, \eps_0 \in [0,1], \beta > 0, R \in \mathbb{N}$.
Suppose that there are $\eps_1$-DP algorithms $\M_1, \ldots, \M_K$ and let $\tau^*(D) = \max_i \Median(\M_i(D))$. There is an algorithm $\M$ that on any dataset either outputs $\perp$, or selects an $i$ and a sample $x$ from $\M_i(D)$ such that
(a) $\M$ is $(2\eps_1+\eps_0, \delta)$-DP,
(b) Except with probability $\beta+\delta/R$, $x$ has quality at least  $\tau^* - \frac{1}{R}$,
(c) The number of calls $\widetilde{T}$ that the algorithm makes to any $\M_i(D)$ satisfies
\begin{align*}
	\E \widetilde{T} \le O\inp{K\inp{\frac{R+1}{\beta^2}}^{6+\frac{12\eps_1}{\eps_0}} \inp{\frac{\ln\frac{R}{\delta}}{\eps_0^2} + \frac{1}{\beta}} }
	\;\;;\;\;
	\widetilde{T} \le O\inp{K\inp{\frac{R+1}{\beta^2}}^{6+\frac{12\eps_1}{\eps_0}} \inp{\frac{\ln\frac{R}{\delta}}{\eps_0^2} + \frac{\ln \frac{1}{\eps_0}}{\beta}} }.
\end{align*}
  Furthermore,
	$\Pr\inb{ \M \hbox{outputs $\bot$} } \le \beta+\delta$.
\end{theorem}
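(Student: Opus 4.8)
The plan is to chain the two building blocks already in hand: first invoke~\cref{alg:sparse1} (FindPercentileThreshold) to privately guess a good threshold $\Tt$, then feed $\Tt$ into~\cref{alg:thresholding} to actually draw a sample whose score is at least $\Tt$. I would instantiate the abstract ``candidate distribution'' $Q(\Data)$ as the uniform candidate sampler from~\Cref{sec:priv_select}: sample $i\sim\mathrm{Uniform}[K]$, draw $m\sim\M_i(\Data)$, and output $\inp{(i,m),q_i(m)}$. Since each $\M_i$ is pure $\eps_1$-DP and a uniform mixture of pure $\eps_1$-DP mechanisms is pure $\eps_1$-DP, $Q$ is $\eps_1$-DP, and a single draw from $Q$ is exactly one oracle call to one of the $\M_i$. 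Fix $\pstar=\frac{1}{2K}$. The bridge from medians to percentiles of $Q$ is the elementary fact that for every $\T<\tau^*(\Data)$ some $\M_i$ has $\Pr_{q\sim\M_i(\Data)}[q\ge\T]\ge\frac12$, so $p(\T,Q)=\frac1K\sum_i\Pr_{q\sim\M_i(\Data)}[q\ge\T]\ge\frac1{2K}=\pstar$; hence $\Tstar\defeq\max\set{\T:p(\T,Q)\ge\pstar}\ge\tau^*(\Data)$.

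In Phase~1 I would run~\cref{alg:sparse1} on $Q$ with the grid $\T_i=1-\frac{i-1}{R-1}$, percentile $\pstar=\frac1{2K}$, output-privacy parameter $\eps_3=\eps_0/2$, and internal slack parameter $\eps_0/2$, so that its $S=\eps_1+\eps_0/2$ gives exponent $\tfrac{6S}{\eps_3}=6+\tfrac{12\eps_1}{\eps_0}$; it makes exactly $N=O\!\inp{K\,\tfrac{\ln(R/\delta)}{\eps_0^2}\inp{\tfrac{R+1}{\beta^2}}^{6+12\eps_1/\eps_0}}$ oracle calls and returns a threshold $\Tt$ that, by~\cref{thm:sparse1}, is $(\eps_0/2,\delta)$-DP (part~(a)), satisfies $\Tt\ge\Tstar-\frac1R\ge\tau^*(\Data)-\frac1R$ except with probability $O(\beta)+\tfrac{\delta}{2R}$ (part~(c)), and satisfies $p(\Tt,Q)\ge p_{\min}$ except with probability $O(\beta)+\tfrac\delta2$, for an explicit $p_{\min}=\Theta\!\inp{\tfrac1K\inp{\tfrac{\beta^2}{R+1}}^{6+12\eps_1/\eps_0}}$ (part~(b)). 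In Phase~2 I would run~\cref{alg:thresholding} on $Q$ with threshold $\Tt$, slack $\eps_0/2$, stopping probability $\gamma=\Theta(\beta\,p_{\min})$, and $T=\ceil{\tfrac1\gamma\ln\tfrac4{\eps_0}}$ (which clears the requirement $T\ge\max\set{\tfrac1\gamma\ln\tfrac2{\eps_0/2},\,1+\tfrac1{e\gamma}}$ since $\gamma$ is tiny), outputting whatever Phase~2 outputs: either $\bot$, or a pair $((i,m),q)$ with $q\ge\Tt$, from which we report $i$ and the sample $m$.

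It remains to read off the four conclusions. For privacy~(a) the key point is that the factor-of-two loss is paid only once and the phases compose additively: \cref{alg:sparse1} is $(\eps_0/2,\delta)$-DP \emph{regardless} of $\eps_1$ (the $\eps_1$-DP-ness of $Q$ enters only through~\cref{lem:trivial-coupling}, affecting $N$ and the exponent but not $\eps_3$), while for \emph{every fixed} threshold $\tau$, \cref{thm:thresholdingDP}(b) says~\cref{alg:thresholding} run on the pure $\eps_1$-DP $Q$ is pure $(2\eps_1+\eps_0/2)$-DP; since Phase~2's parameters $\gamma,T$ depend only on the data-independent $p_{\min}$ and not on the realized $\Tt$, adaptive composition makes $(\Tt,\text{Phase-2 output})$ be $(2\eps_1+\eps_0,\delta)$-DP, and post-processing (discarding $\Tt$) preserves this. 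For utility~(b): on the event $\Tt\ge\tau^*(\Data)-\frac1R$ (which fails with probability $\le\beta+\delta/R$ after fixing the internal constants), \cref{thm:thresholdingDP}(a) says the Phase-2 output, whenever it is not $\bot$, is distributed as $Q(\Data)$ conditioned on score $\ge\Tt$, hence has quality $\ge\Tt\ge\tau^*(\Data)-\frac1R$. For the $\bot$ bound: on the event $p(\Tt,Q)\ge p_{\min}$, \cref{thm:thresholdingDP}(e) gives $\Pr[\bot\mid\Tt]\le\tfrac{(1+\eps_0/4)\gamma}{p(\Tt,Q)}\le\tfrac{(1+\eps_0/4)\gamma}{p_{\min}}=O(\beta)$ by our choice of $\gamma$, so $\Pr[\bot]\le\beta+\delta$ after adding the failure probability of that event. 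For the running time~(c): Phase~1 uses $N$ calls deterministically, and Phase~2 uses $\widetilde{T}_2\le T$ calls deterministically with $\E[\widetilde{T}_2\mid\Tt]\le\min\set{1/p(\Tt,Q),\,1/\gamma}\le1/\gamma$ by~\cref{thm:thresholdingDP}(d); substituting $1/\gamma=\Theta(1/(\beta p_{\min}))=\Theta\!\inp{\tfrac K\beta\inp{\tfrac{R+1}{\beta^2}}^{6+12\eps_1/\eps_0}}$ and $T=\Theta\!\inp{\tfrac K\beta\ln\tfrac1{\eps_0}\inp{\tfrac{R+1}{\beta^2}}^{6+12\eps_1/\eps_0}}$, and adding $N$, yields the stated $\E\widetilde{T}$ and deterministic $\widetilde{T}$ bounds (using $\ln\tfrac1{\eps_0}\le\tfrac1{\eps_0^2}$ to fold the $\tfrac{\ln(1/\eps_0)}\beta$ term under the $\tfrac{\ln(R/\delta)}{\eps_0^2}$ term for the expectation bound).

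The main obstacle is not any single deep argument but threading the interacting budgets consistently: the privacy split must come out to exactly $2\eps_1+\eps_0$ with $\eps_3=\eps_0/2$ in Phase~1 and slack $\eps_0/2$ in Phase~2, while simultaneously the \emph{internal} slack parameter of~\cref{alg:sparse1} is forced (also to $\eps_0/2$) by the need for the exponent $\tfrac{6S}{\eps_3}$ to equal $6+\tfrac{12\eps_1}{\eps_0}$; the three failure events (Phase~1 stopping too late, Phase~1 stopping too early, Phase~2 returning $\bot$) must be combined---rescaling $\beta$ by an absolute constant where needed---to give quality failure $\le\beta+\delta/R$ and $\Pr[\bot]\le\beta+\delta$; and the validity of adaptive composition relies on Phase~2's parameters being functions of the fixed $p_{\min}$ rather than of the realized $\Tt$. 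Once these choices are pinned down, the rest is direct substitution into~\cref{thm:sparse1} and~\cref{thm:thresholdingDP}.
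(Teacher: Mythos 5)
Your proposal is correct and takes essentially the same route as the paper, whose proof of this theorem is exactly the combination of \cref{thm:sparse1} and \cref{thm:thresholdingDP} with $Q$ the uniform mixture over the $K$ candidates, $\pstar = \Theta(1/K)$ (the paper uses the resulting $p_1 = \frac{1}{12K}\big(\frac{\beta^2}{R+1}\big)^{6+12\eps_1/\eps_0}$), and $\gamma = \beta p_1$. The only difference is bookkeeping: the paper's one-line proof sets $\eps_3=\eps_0$, whereas you split the budget as $\eps_3=\eps_0/2$ with slack $\eps_0/2$ in the thresholding phase, which is the cleaner accounting since it makes the total privacy come out to exactly $2\eps_1+\eps_0$ and reproduces the stated exponent $6+\frac{12\eps_1}{\eps_0}$.
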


Here we set $p_1 =\frac{1}{12K}\cdot \inp{\frac{\beta^2}{R+1}}^{6+\frac{12\eps_1}{\eps_0}}$, and $\gamma = p_1 \beta$, $\eps_3=\eps_0$.

\section{Applications}
\label{sec:applications}
\subsection{Hyperparameter selection}
Suppose that we are given $K$ choices of hyperparameters,
and for each choice $i \in [K]$, there is a differentially private learning algorithm $\M_i$.
Given a training dataset $\Data_1$,
$\M_i(\Data_1)$ is a randomized mechanism that returns a model,
which we often denote as $m$.
Next, for a validation dataset $\Data_2$, we let $\widetilde{q_i}(m,\Data_2)$ be the validation score of model $m$  and hyperparameter $i$.
Then the goal of hyperparameter selection is to find a pair $(m,i_*)$, that approximately maximizes the validation score.

It is worth noting that the dependencies on the validation set are only through the scoring functions $\widetilde{q_i}$, which are usually \emph{counting queries} and thus have small sensitivity.
This is the setting we will consider. Therefore, we let $q_i \defeq \widetilde{q_i} + \Lap\inp{\frac{1}{n \eps_2} }$, where $n$ is the size of the validation set.
    Then, we define $Q_i(\Data_1,\Data_2)$ to be the distribution of $q_i(m, \Data_2)$ when $m\sim \M_i(\Data_1)$.
    Finally we let $Q$ be the distribution of $Q_i$ when we draw $i$ uniformly from $[K]$.


    Then, in order to apply~\cref{thm:main} or~\cref{thm:maxRandDP}, 
  it remains to verify that $Q$ is differentially private with respect to both datasets.
  \begin{lemma}
    The distribution $Q(\Data_1,\Data_2)$ defined as above is always $\eps_2$-DP for the validation set $\Data_2$. Moreover:

    if $\set{\M_i}_{i=1}^K$ are $\eps_1$-DP learning algorithms, then $Q(\Data_1,\Data_2)$ is $\eps_1$-DP for the training set $\Data_1$;

    if $\set{\M_i}_{i=1}^K$ are $(\eps_1,\delta_1)$-DP learning algorithms, then $Q(\Data_1,\Data_2)$ is $(\eps_1,\delta_1)$-DP for $\Data_1$.
  \end{lemma}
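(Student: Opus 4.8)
The plan is to exploit the fact that the two datasets enter $Q$ through disjoint channels: the training set $\Data_1$ influences only the sampled model $m \sim \M_i(\Data_1)$, whereas the validation set $\Data_2$ influences only the evaluation of the noisy score $q_i(m,\Data_2)$. Each channel can then be handled by a standard post-processing or mixture argument, after which the three claims follow by combining over $i$.

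First I would establish privacy in $\Data_2$. Fix $i \in [K]$ and a model $m$; since $\widetilde{q_i}(m,\cdot)$ is a (normalized) counting query on a validation set of size $n$, it has sensitivity at most $1/n$, so $q_i(m,\cdot) = \widetilde{q_i}(m,\cdot) + \Lap(\tfrac{1}{n\eps_2})$ is $\eps_2$-DP in $\Data_2$ by the Laplace mechanism. For fixed $\Data_1$, the distribution $Q_i(\Data_1,\Data_2)$ is a mixture of the laws of $q_i(m,\Data_2)$ with mixing weights equal to the law of $\M_i(\Data_1)$, which does not depend on $\Data_2$; and $Q(\Data_1,\Data_2)$ is the uniform mixture over $i$ of these. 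Since a convex mixture of $\eps_2$-DP distributions with dataset-independent weights is again $\eps_2$-DP (add the two-sided inequalities over the components, exactly as in the remark that $\eps_1$-DP of all $\M_i$ implies $\eps_1$-DP of $Q$), it follows that $Q(\Data_1,\Data_2)$ is $\eps_2$-DP in $\Data_2$.

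Next I would establish privacy in $\Data_1$. Fix $\Data_2$. The map $m \mapsto q_i(m,\Data_2)$ --- which merely adds an independent $\Lap(\tfrac{1}{n\eps_2})$ noise to $\widetilde{q_i}(m,\Data_2)$ --- is a randomized function of $m$ that does not depend on $\Data_1$, so $Q_i(\cdot,\Data_2)$ is obtained from $\M_i(\cdot)$ purely by post-processing. Hence if each $\M_i$ is $\eps_1$-DP (resp.\ $(\eps_1,\delta_1)$-DP) then each $Q_i(\cdot,\Data_2)$ is $\eps_1$-DP (resp.\ $(\eps_1,\delta_1)$-DP) in $\Data_1$, and the uniform mixture $Q(\cdot,\Data_2)$ over $i$ inherits the same guarantee by the same mixture argument (now carrying an additive $\delta_1$ in the approximate case). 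The only thing to be careful about throughout is the bookkeeping of independence --- that the Laplace noises and the uniform choice of $i$ are drawn independently of both datasets, that $m$ depends on $\Data_1$ alone, and that the score evaluation depends on $\Data_2$ alone; once this is spelled out there is no real obstacle, both claims reducing to post-processing and to mixture-preservation of (approximate) differential privacy.
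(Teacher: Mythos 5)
Your proof is correct and follows essentially the same route as the paper's: privacy in $\Data_2$ by conditioning on the model $m$ (whose law is independent of $\Data_2$) and invoking the Laplace mechanism for the sensitivity-$1/n$ score, and privacy in $\Data_1$ by observing that $Q_i(\cdot,\Data_2)$ is a post-processing of $\M_i(\cdot)$, with the uniform mixture over $i$ preserving both pure and approximate DP. The paper merely writes out the Laplace step more explicitly via the shifted noise $\nu$ with $\abs{\nu-\xi}\le 1/n$; there is no substantive difference.
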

  \begin{proof}
    First for $\Data_2$, notice that for neighboring $\Data_2$ and $\Data_2'$, $\widetilde{q_i}$ changes by at most $1/n$, thus $\forall m \in \supp\set{\M_i}$ and $\forall t, \xi\in \R$, there exists $\nu: \abs{\nu - \xi} \le 1/n$ such that
    \begin{align*}
      \Pr\inb{q_i(m,\Data_2') = t} &=\Pr\inb{\widetilde{q_i}(m,\Data_2') + \xi = t} = \Pr[\xi = t - \widetilde{q_i}(m,\Data_2')], \\
      \Pr\inb{q_i(m,\Data_2) = t} &=\Pr\inb{\widetilde{q_i}(m,\Data_2) + \nu = t}= \Pr[\nu = t - \widetilde{q_i}(m,\Data_2)].
    \end{align*}
    It is worth noting that this holds for every $m$ in the support.
    Then $\eps_2$-DP for $\Data_2$ follows from the fact that $\xi$ and $\nu$ follow the same $\Lap\inp{\frac{1}{n\eps_2}}$ distribution and $\abs{\xi-\nu}\le 1/n$.

    Then for $\Data_1$, note that the dependency of $Q_i$ on $\Data_1$ is only through $\M_i(\Data)$, which is $\eps_1$-DP. Thus for every $i$, $Q_i(\Data_1,\Data_2)$ is $\eps_1$-DP for $\Data_1$, thus $Q$ is also $\eps_1$-DP for $\Data_1$.

    Similarly if $\M_i(\Data)$ is $(\eps_1,\delta_1)$-DP,
    we have that for every $i$, $Q_i(\Data_1,\Data_2)$ is $(\eps_1,\delta_1)$-DP for $\Data_1$, thus $Q$ is also $(\eps_1,\delta_1)$-DP for $\Data_1$.
  \end{proof}

\subsection{Adaptive Data Analsis Beyond Low Sensitivity Queries}

Our results immediately have applications to designing differentially private algorithms where interemediate steps select the best amongst various private options. Since DP allows us to prove generalization bounds, these results have implications for adaptive data analysis too.

As an example, consider a data analysis algorithm which as an intermediate step runs $k$-means clustering (or rank-$k$ PCA). Often in practice, one tries several values of $k$ and picks the best one according to some criteria (see e.g. Garg and Kalai~\cite{GargK18}). While there are differentially private variants of the base problem of $k$-means, naively selecting the best would require us to account for the privacy cost of computing all the $k$-means objectives, for different value of $k$. Theorem~\ref{thm:main} allows us to select the best of these without any asymptotic overhead in privacy cost.

\subsection{Generalizations of the Exponential Mechanism}

The exponential mechanism solves the selection problem when the score functions are Lipschitz. Several variants of the Exponential Mechanism have been proposed in previous work. We next show that several of these can be derived as corollaries of our main result, by defining appropriate private variants of the score function.
\begin{theorem}
	Let $\{q_i(\cdot)\}_{i=1}^K$ be a set of score functions mapping datasets to reals. Let $i^\star(D) = \argmax_{i} q_i(D)$ and $q^\star(D) = \max_i q_{i}(D)$.
	\begin{description}
		\item[Exponential Mechanism] Suppose that each $q_i$ has sensitivity at most $s$. Then there is an $\eps$-DP mechanism that outputs an $i$ such that $q_i(D) \geq q^\star(D) - O(s\log \frac K \beta / \eps)$ except with probability $\beta$.
		\item[Generalized Exponential Mechanism~\cite{RaskhodnikovaS16}] Suppose that $q_i$ has sensitivity at most $s_i$. Then there is an $(\eps,\delta)$-DP mechanism that outputs an $i$ such that $q_i(D) \geq q^\star(D) - O(s_{i^\star}\log \frac K \beta / \eps)$ except with probability $\beta$.
		\item[Margin-based Mechanism $\mathcal{A}_{dist}$~\cite{SmithT13, BeimelNS13}] Suppose that each $q_i$ has sensitivity at most $s$. There is an $\ed$-DP mechanism $\M$ that outputs $i^\star$ except with probability $\beta$ whenever $q^\star \geq q^{i} + \Omega(s\log \frac 1 {\beta\delta} / \eps)$ for all $i\neq i^\star$.
		\item[Generalized Smooth Sensitivity Exponential Mechanism] Suppose that $q_i$ has $\left(\eps/(4\ln \frac 2 \delta)\right)$-smoothed sensitivity at most $s_i$. Then there is an $\ed$-DP mechanism that outputs an $i$ such that $q_i(D) \geq q^\star(D) - O(s_{i^\star}\log \frac K \beta / \eps)$ except with probability $\beta$.

	\end{description}
\end{theorem}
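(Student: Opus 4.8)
The plan is to derive each of the four mechanisms as an instance of the selection framework of \Cref{sec:priv_select}, by turning the deterministic score $q_i$ into a \emph{randomized, differentially private} score mechanism $\M_i$ and feeding the uniform mixture $Q$ (first draw $i\sim\mathrm{Uniform}[K]$, then a sample from $\M_i(D)$) into one of our guarantees: \cref{thm:maxRandDP} (or the hard-stopping \cref{thm:maxRandDPstop}) when pure $\eps$-DP is wanted, and \cref{thm:maxRandDPapprox}, \cref{thm:thresholdingDP}, or \cref{thm:main} when $(\eps,\delta)$-DP suffices. Privacy is then immediate: once each $\M_i$ is $\eps_1$-DP (resp.\ $(\eps_1,\delta_1)$-DP), the mixture $Q$ inherits the same guarantee, and the chosen theorem outputs something that is $2\eps_1+\eps_0$- or $3\eps_1$-DP (up to a negligible additive correction to $\eps$, and the introduction of $\delta$, when the $\M_i$ are only approximately DP), so we pick $\eps_1$ (and $\eps_0$) so that this composes to exactly $\eps$. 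All the work is in the utility analysis, which always has two ingredients: (i) the utility bound for \cref{alg:maxRand} --- because $\M_{i^\star}$ alone puts $\Theta(1/K)$ probability mass at noisy scores above a target level $z$, we get $\Qp(D)\ge z$ for every $p<\tfrac1{2K}$, so with stopping rate $\gamma=\Theta(\beta/K)$ the returned sample has \emph{noisy} score $\ge z$ except with probability $\beta/2$; and (ii) a tail bound together with a union bound over the $K$ candidates relating each candidate's \emph{noisy} score to its \emph{true} score $q_i(D)$.

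For the \textbf{Exponential Mechanism}, take $\M_i(D)=q_i(D)+\Lap(s/\eps_1)$ with $\eps_1=\eps/3$; this is $\eps_1$-DP, so \cref{thm:maxRandDP} makes the output $\eps$-DP. Since $\Pr[\Lap(s/\eps_1)\ge 0]=\tfrac12$ we have $\Pr_{q\sim Q(D)}[q\ge q^\star(D)]\ge\tfrac1{2K}$, hence the returned sample has noisy score $\ge q^\star(D)$ except with probability $\beta/2$; and $\Pr[\Lap(s/\eps_1)>\tfrac s{\eps_1}\ln\tfrac{2K}\beta]=\tfrac\beta{4K}$, so a union bound gives, except with probability $\beta/2$, that \emph{every} candidate's noisy score exceeds its true score by at most $\tfrac s{\eps_1}\ln\tfrac{2K}\beta$. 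Combining the two events, the selected $i$ satisfies $q_i(D)\ge q^\star(D)-O(s\log\tfrac K\beta/\eps)$.

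The \textbf{generalized} and \textbf{smooth-sensitivity} variants use the same template with one extra twist: a \emph{deterministic penalty}. Define $\M_i(D)=q_i(D)-c\,s_i+Z_i$, where $Z_i$ is calibrated to $s_i$ (for GEM, $Z_i\sim\Lap(s_i/\eps_1)$; for the smooth version, $Z_i$ is the Nissim--Raskhodnikova--Smith noise calibrated to the $(\eps_1/(4\ln\tfrac2\delta))$-smoothed sensitivity bound $s_i$, so that $\M_i$ is $(\eps_1,\delta_1)$-DP), and $c=\Theta(\tfrac1{\eps_1}\log\tfrac K\beta)$. The penalty $c\,s_i$ is a constant, so it does not affect the DP guarantee of $\M_i$, and \cref{thm:maxRandDPapprox} (or \cref{thm:thresholdingDP}(c) / \cref{thm:main}) still yields $(\eps,\delta)$-DP. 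For utility, $\M_{i^\star}$ still contributes $\tfrac1{2K}$ mass above $z=q^\star(D)-c\,s_{i^\star}$, so the returned sample has noisy score $\ge q^\star(D)-c\,s_{i^\star}$; and the tail bound for $Z_i$ at scale $s_i$ shows $\Pr[Z_i>c\,s_i]$ is $O(\beta/K)$ \emph{uniformly in $i$} --- the penalty exactly cancels the scale-$s_i$ tail --- so a union bound gives noisy score $\le q_i(D)$ for all $i$ with probability $\ge 1-\beta/2$. Hence the selected $i$ obeys $q_i(D)\ge q^\star(D)-c\,s_{i^\star}=q^\star(D)-O(s_{i^\star}\log\tfrac K\beta/\eps)$, with the dependence on the sensitivity of the maximizer as claimed. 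For the \textbf{margin-based mechanism $\mathcal{A}_{dist}$}, instead take $\M_i(D)=q_i(D)+Z_i$ with $Z_i$ a Laplace of scale $s/\eps_1$ \emph{truncated} to $[-\Lambda,\Lambda]$, $\Lambda=O(\tfrac s{\eps_1}\log\tfrac1{\delta_1})$; truncation costs only $\delta_1$ in privacy, so the composed guarantee is $(\eps,\delta)$-DP. Now $\abs{Z_i}\le\Lambda$ deterministically, so whenever $q^\star(D)>q_i(D)+2\Lambda=q_i(D)+\Omega(s\log\tfrac1{\beta\delta}/\eps)$ for all $i\neq i^\star$, the sample from $i^\star$ has strictly the largest noisy score among \emph{all} candidates; since $i^\star$ appears among the drawn samples except with probability $\beta$ (taking $\gamma$ small), \cref{alg:maxRand} returns exactly $i^\star$.

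The step I expect to be the main obstacle is the penalty term $c\,s_i$ and the argument for why it is needed: without it, the naive ``add $\Lap(s_i/\eps_1)$ and keep the best-scoring sample'' rule can be fooled by a low-scoring candidate with large sensitivity whose noise happens to dominate, so the resulting error would scale with the sensitivity of the \emph{output} rather than with $s_{i^\star}$; the penalty is exactly what restores the correct dependence (and, in the smooth case, one must also verify the analogous uniform cancellation for the heavier-tailed NRS noise rather than Laplace). A secondary, more routine obstacle is the $(\eps,\delta)$ bookkeeping when the $\M_i$ are only approximately DP (truncated Laplace, or the smooth-sensitivity noise), which requires choosing $\eps_1,\eps_0,\gamma,\delta_1,\delta_2$ so that the parameters composed through \cref{thm:maxRandDPapprox} (or \cref{thm:thresholdingDP}(c) / \cref{thm:main}) land exactly at the target $(\eps,\delta)$.
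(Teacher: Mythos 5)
Your constructions are the same as the paper's: Laplace noise fed into \cref{alg:maxRand}/\cref{thm:maxRandDP} for the basic exponential mechanism, a sensitivity-proportional downward shift for the generalized and smooth variants, and truncated Laplace plus \cref{thm:maxRandDPapprox} for $\mathcal{A}_{dist}$. Items 1--3 are essentially right (and your rescaling $\eps_1=\eps/3$ is more careful than the paper's bookkeeping), but your utility union bound is stated over the $K$ candidates, while \cref{alg:maxRand} with $\gamma=\Theta(\beta/K)$ draws \emph{fresh} noise $\widetilde{T}=\widetilde{O}(K/\beta)$ times and outputs the maximum over all of those draws; a single unlucky re-draw of a poor candidate can win, and the event ``every candidate's noisy score exceeds its true score by at most $\frac{s}{\eps_1}\ln\frac{2K}{\beta}$'' with per-candidate budget $\beta/(4K)$ does not cover repeated sampling. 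The paper instead bounds the number of runs by $\widetilde{O}(K/\beta)$ except with probability $\beta$ and controls the largest Laplace variable actually drawn; doing the union bound over draws only changes $\log\frac{K}{\beta}$ to $\log\frac{K}{\beta^2}=O(\log\frac{K}{\beta})$, so your conclusion survives, but the same correction must be folded into the constant in your shift $c$ (the paper's choice is $c=2\log(K/\beta)/\eps$).

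The genuine gap is in the fourth item. You assert that the penalty $c\,s_i$ ``is a constant, so it does not affect the DP guarantee.'' That is valid for the generalized exponential mechanism, where $s_i$ is the global sensitivity, a data-independent number. In the smooth-sensitivity setting, $s_i$ bounds the $\eta$-smooth sensitivity \emph{at the input dataset}; if it were a bound valid for all datasets it would just be a global sensitivity bound and the item would collapse to the second one. So the mechanism must actually subtract $c\cdot S^{*}_i(D)$, where $S^{*}_i$ is the (data-dependent) smooth upper-bound function, and this shift \emph{does} interact with privacy: one has to verify that the shifted score $q_i - c\,S^{*}_i$ still has a smooth sensitivity bound of order $s_i$, which requires $c\,\eta=O(1)$, i.e.\ $\eta \lesssim \eps/\log(K/\beta)$. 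This is precisely why the hypothesis pins the smoothing parameter to $\eps/(4\ln\frac{2}{\delta})$ and why the paper additionally takes $\delta<\beta/K$, and the paper flags exactly this check (``$2s_i$ is a smooth upper bound on the sensitivity of the shifted score''). Your writeup never performs this verification and never uses the specific smoothing parameter except to hand it to the NRS noise, so the privacy claim for the smooth variant is unsupported as written; the obstacle you anticipated (the NRS noise tail) is the easy part, while the data-dependence of the penalty is the step that carries the content.
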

\begin{proof}
	For the first part, let $\M_i(D) = \left(i, q_i(D) + Lap(\frac{s}{\eps})\right)$. Then applying~\cref{thm:maxRandDP}, we get an outcome with score at least $q^\star(D)$ except with probability $\beta$ (by setting $\gamma = \beta/K$). Since the number of runs of any $q_i$ is at most $\tilde{O}(K/\beta)$ (except with probability $\beta$), the largest of the Laplacian r.v.'s is bounded by $O(\frac{s \log (K/\beta)}{\eps})$. This implies that $q_i(D)$ where $i$ is the option chosen by the algorithm is at lest $q^\star(D) - O(\frac{s \log (K/\beta)}{\eps})$.

  The second part is similar, except that we set $\M_i(D) = \left(i, q_i(D) -\frac{2s_i\log K/\beta}{\eps} + Lap(\frac{s_i}{\eps})\right)$. This shift ensures the realized score is no larger than $q_i(D)$ for all calls to $\M_i(D)$. Now the median of $\M_{i^\star}$ is at least $q^\star(D) - -\frac{2s_{i^\star}\log K/\beta}{\eps}$, which implies the claim.

  For the third part, consider the truncated Laplace distribution $\mathrm{TLap}^{(T)}(\lambda)$ that samples from the Laplace distribution with parameter $\lambda$, conditioned on the output being in $[-T\lambda, T\lambda]$. It can be checked~\cite{trunclap} that the mechanism $\M_i(D) = q_i(D) + \mathrm{TLap}^{(\log \frac 1 \delta)}(\frac{s}{\eps})$ satisfies $\ed$-DP when $\eps<1, \delta<1/4$. The claim follows by applying~\cref{thm:maxRandDPapprox}.

	The fourth part is similar to the Generalized exponential mechanism, except that we add noise from smooth-sensitivity-scaled Laplacian distribution using the Smoothed Sensitivity framework of~\cite[Cor. 2.4]{NissimRS07}. As long as $\eta > \eps / \log (K/\beta)$ (which is ensured when we set $\eta = \eps / 4\ln \frac 2 \delta)$ with $\delta < \beta/K$), it can be verified that the $2s_i$ is a smooth upper bound on the sensitivity of $q_i(D) -\frac{2s_{i^\star}\log K/\beta}{\eps}$. The claim follows by a simple computation.
\end{proof}

\subsection{Private Amplification}
Gupta et al.~\cite{GLMRT} study the question of private amplification: given a DP algorithm that gets a certain utility in expectation, can we convert it into one that gets close to that utility with high probabilty? Their motivation came from combinatorial optimization problems, where they showed appoximation algorithms with certain guarantees in expectation. Using Markov's inequality, one can convert the expectation guarantee to one that ensures a utility bound with some probability $p$. Applying our results, one gets an algorithm that ensures that utility with high probability. This improves on the private amplification theorem proven in~\cite{GLMRT}.

\section{Conclusions}
We have presented new differentially private algorithms for selecting the best amongst several differentially private algorithms. Our algorithm is near-optimal in terms of privacy overhead, computational cost and utility loss. We have shown how it applies to hyperparameter search and adaptive data analysis. We leave open the question of improving the constants in the run time of our threshold finding algorithm.

While random search is a surprisingly effective way to do hyperparameter optimization in machine learning~\cite{hyperband}, there are more complex adaptive algorithms that often do better. Our work says that random search- or grid search-based hyperparameter tuning can be made differentially private essentially for free. It is natural to ask if we can make the various adaptive algorithms differentially private.
\bibliographystyle{abbrv}
\bibliography{refs}

\appendix
\section{Deferred Proofs}

\subsection{Proof of~\cref{thm:thresholdingDP}}
\label{sec:proof-thresholding}
We restate~\cref{thm:thresholdingDP} here for convenience.
\begin{theorem}
  Fix any $\eps_1>0, \eps_0 \in [0,1], \delta_1>0, \gamma \in [0,1]$.
  Let $T$ be any integer such that $T\ge \max\set{\frac{1}{\gamma}\ln \frac{2}{\eps_0 } , 1+\frac{1}{e\gamma} }$, 
  Then~\cref{alg:thresholding} with these parameters satisfies the following:
  \begin{enumerate}[(a)]
	  \item Let $\Aout(\Data)$ be the output of~\cref{alg:thresholding}, then 
		  \[\Pr[\Aout(\Data) = (x,q)] \propto \Pr_{(\xt,\qt) \sim Q(\Data)}[(\xt,\qt) = (x,q)].\]
	  \item If $Q$ is $\eps_1$-DP, then the output is $(2\eps_1 + \eps_0)$-DP.
	  \item If $Q$ is $(\eps_1,\delta_1)$-DP, then the output is $\inp{2\eps_1 + \eps_0, \; 3 e^{2\eps_1 + \eps_0}\cdot\frac{\delta_1}{\gamma}}$-DP.
    \item Let $\widetilde{T}$ be the number of iterations of the algorithm, and let $p_1 = \Pr_{q \sim Q(\Data)} \inb{q \ge \T}$, then 
	    \[\E \widetilde{T} \le \frac{1}{p_1(1-\gamma) + \gamma} \le \min\set{\frac{1}{p_1}, \frac{1}{\gamma}}.\]
    \item
  Furthermore, 
  $\Pr\inb{ \hbox{output $\bot$} } \le \frac{(1-p_1) (1+ \eps_0/2)}{p_1}\gamma$.
  \end{enumerate}
\end{theorem}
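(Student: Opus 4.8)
The plan is to write down the exact law of the output of \cref{alg:thresholding} as a function of the single scalar $p_1 = \Pr_{q\sim Q(D)}[q\ge\tau]$, and then read all five parts off that formula. Let $r = (1-p_1)(1-\gamma)$ be the probability that one iteration ``continues'' (it neither halts on an above-threshold draw nor on the $\gamma$-coin). The algorithm reaches iteration $j$ iff the first $j-1$ iterations all continue, an event of probability $r^{j-1}$; hence $\widetilde T$ is a geometric variable truncated at $T$ and $\E\widetilde T = \sum_{j=1}^{T} r^{j-1} = \frac{1-r^T}{1-r} \le \frac{1}{1-r} = \frac{1}{p_1(1-\gamma)+\gamma}$, which is (d) (the two further bounds use $p_1(1-\gamma)+\gamma \ge \max\{p_1,\gamma\}$). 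Conditioning on the stopping iteration and on ``this iteration's draw is $\ge\tau$'' gives $\Pr[\Aout(D)=(x,q)] = \frac{1-r^T}{1-r}\cdot\Pr_{(\xt,\qt)\sim Q(D)}[(\xt,\qt)=(x,q)]$ for $q\ge\tau$; the prefactor $\frac{1-r^T}{1-r}$, though it depends on $D$, does not depend on the particular output $(x,q)$ — that is (a). Summing over $(x,q)$ and complementing yields
\[ \Pr[\Aout(D)=\bot] = \frac{(1-p_1)\gamma + p_1 r^T}{p_1(1-\gamma)+\gamma}. \]
The truncation term $p_1 r^T = (1-p_1)(p_1(1-p_1)^{T-1})(1-\gamma)^T$ is where the two hypotheses on $T$ enter: $\max_x x(1-x)^{T-1} \le \frac{1}{e(T-1)}\le\gamma$ uses $T\ge 1+\frac1{e\gamma}$, while $(1-\gamma)^T\le e^{-\gamma T}\le \frac{\eps_0}{2}$ uses $T\ge\frac1\gamma\ln\frac2{\eps_0}$, so $p_1 r^T \le \frac{\eps_0}{2}(1-p_1)\gamma$. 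Substituting and using $p_1(1-\gamma)+\gamma\ge p_1$ gives (e).

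For (b), $\eps_1$-DP of $Q$ applied to the singleton $\{(x,q)\}$, to $\{q\ge\tau\}$, and to its complement makes $\Pr_{Q(D)}[(\xt,\qt)=(x,q)]$, $p_1$ and $1-p_1$ all $e^{\pm\eps_1}$-close across neighbours. For an above-threshold output the density ratio factors as $\frac{p}{p'}\cdot\frac{1-r^T}{1-r'^T}\cdot\frac{1-r'}{1-r}$ (primes denoting $D'$ quantities); the first and third factors are $\le e^{\eps_1}$ (the third because $1-r = p_1(1-\gamma)+\gamma$ is $e^{\pm\eps_1}$-close), and the middle is $\le \frac1{1-(1-\gamma)^T}\le\frac1{1-\eps_0/2}\le e^{\eps_0}$, for a product of $e^{2\eps_1+\eps_0}$. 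The one delicate atom is $\bot$: I would first show $\Pr[\Aout(D)=\bot]\le e^{\eps_0}\phi(p_1)$, where $\phi(p):=\frac{(1-p)\gamma}{p+(1-p)\gamma}$ is the failure probability of the never-truncated process; via the closed form above and $1-\phi(p)=\frac{p}{p+(1-p)\gamma}$ this is \emph{exactly} the already-proven bound $p_1 r^T\le\frac{\eps_0}{2}(1-p_1)\gamma$. Then $\phi(p_1)\le e^{2\eps_1}\phi(p_1')$ (numerator $e^{\eps_1}$-close, denominator $e^{\eps_1}$-close from below) and $\phi(p_1')\le\Pr[\Aout(D')=\bot]$ trivially, chaining to $e^{2\eps_1+\eps_0}$. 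Since $\{\bot\}$ and the above-threshold pairs exhaust the range, these pointwise ratio bounds integrate to $(2\eps_1+\eps_0)$-DP.

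For (c) I would rerun the (b) computation with $Q$ only $(\eps_1,\delta_1)$-DP, so each use of DP on an event now costs an extra additive $\delta_1$; using the folklore reduction (the same one used later for \cref{thm:maxRandDPapprox}) that one $(\eps_1,\delta_1)$-DP draw can be coupled to an $\eps_1$-DP draw outside an event of probability $\le\delta_1$, together with (d), a Wald/union-bound over the random iteration count bounds the total bad probability by $O(\delta_1/\gamma)$; transporting it through the multiplicative factor $e^{2\eps_1+\eps_0}$ and accounting for the three events of $Q$ on which DP was invoked yields $\delta = 3e^{2\eps_1+\eps_0}\delta_1/\gamma$. I expect the genuine obstacle to be the $\bot$ atom in (b): the crude decomposition $\Pr[\Aout(D)=\bot]\le\phi(p_1)+(\text{truncation term})$ leaves an additive slack that cannot be absorbed for pure DP, so one is forced to bound the finite-horizon failure probability \emph{multiplicatively} by the infinite-horizon $\phi(p_1)$ — which is precisely what the two otherwise opaque lower bounds on $T$ buy. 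The $\delta$-bookkeeping over a random number of rounds in (c) is the secondary point needing care.
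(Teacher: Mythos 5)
Parts (a), (b), (d) and (e) of your proposal are correct and essentially the paper's argument: the same closed-form law of the output as a function of $p_1$, the same three-factor ratio for above-threshold outputs, and the same AM--GM plus $(1-\gamma)^T\le\eps_0/2$ use of the two hypotheses on $T$. Your handling of the $\bot$ atom — comparing the truncated failure probability multiplicatively to the infinite-horizon $\phi(p_1)=\frac{(1-p_1)\gamma}{p_1(1-\gamma)+\gamma}$ and then chaining $\phi(p_1)\le e^{2\eps_1}\phi(p_1')\le e^{2\eps_1}\Pr[\Aout(\Data')=\bot]$ — is a slightly tidier packaging of the ratio bound the paper computes directly, but the substance is identical.

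Part (c) is a genuine gap. The route you sketch — apply the conditioning lemma ``used later for \cref{thm:maxRandDPapprox}'' to each draw and then a Wald/union bound over the executed iterations — cannot produce the stated parameters. That lemma (\cref{lem:ed_to_cond_e}) gives, per draw, pure closeness only at privacy level $\eps_1+\sqrt{2\delta_1}$ outside an event of probability about $\sqrt{\delta_1}$; chasing your argument through therefore yields at best something like $(2\eps_1+2\sqrt{2\delta_1}+\eps_0,\,O(\sqrt{\delta_1}/\gamma))$-DP, not the claimed $(2\eps_1+\eps_0,\;3e^{2\eps_1+\eps_0}\delta_1/\gamma)$-DP — this degradation is exactly why \cref{thm:maxRandDPapprox} itself carries $\sqrt{2\delta_1}$ terms. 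Moreover, ``accounting for the three events of $Q$ on which DP was invoked'' is numerology rather than a derivation of the constant $3$, and you cannot simply ``rerun (b) with an extra additive $\delta_1$'' at the level of individual outputs, since the additive slack would accumulate over the (possibly uncountably many) above-threshold points. The paper's proof avoids coupling altogether: it works with the exact output law, applies the $(\eps_1,\delta_1)$ inequalities at the level of events (namely $p\le e^{\eps_1}p'+\delta_1$, $p_1\ge \max\{0,p_1'-\delta_1\}e^{-\eps_1}$, $1-p_1\le e^{\eps_1}(1-p_1')+\delta_1$), and bounds $\Pr[\Aout(\Data)\in F]-e^{2\eps_1+\eps_0}\Pr[\Aout(\Data')\in F]$ separately for events $F$ of above-threshold outputs (getting $2e^{2\eps_1}\delta_1/\gamma$) and for the atom $\bot$ (getting $e^{2\eps_1+\eps_0}\delta_1/\gamma$), the $1/\gamma$ arising from lower-bounding the denominators $p_1'(1-\gamma)+\gamma$ and $\max\{0,p_1'-\delta_1\}(1-\gamma)+\gamma$ by $\gamma$. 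To repair your part (c) you would need either this direct event-level computation or a decomposition of $(\eps_1,\delta_1)$-closeness into pure $\eps_1$-closeness plus total-variation error $O(\delta_1)$ (a strictly stronger tool than the conditioning lemma you cite), combined with a per-executed-iteration accounting of that TV error.
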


\begin{proof}
	
	For part (a),
  let $ p(x,q) \defeq \Pr_{(\xt,\qt) \sim Q(\Data)}[(\xt,\qt) = (x,q)]$.
  Given a threshold $\T$, we let $p_1 = \Pr_{q \sim Q(\Data)} \inb{q \ge \T}$, and $p_1' = \Pr_{q \sim Q(\Data')} \inb{q \ge \T}$.
  Then we have
  \begin{align*}
	  \Pr\inb{ \Aout(\Data) = (x,q) } =& \sum_{j=1}^{T}\Pr\inb{ \Aout(\Data) = (x,q) \wedge \hbox{ stops after $j$ steps} } \\
	  =&\sum_{j=1}^T \inp{(1-p_1)(1-\gamma)}^{j-1} \cdot p(x,q) \\
	  =& p(x,q) \cdot \frac{1 -\inp{(1-p_1)(1-\gamma)}^T }{1-(1-p_1)(1-\gamma)}. 
  \end{align*}
  Note that $p_1$ only depends on $\T$ and not on $(x,q)$, and $\gamma$ is a constant, therefore we have $\Pr\inb{ \Aout(\Data) = (x,q) } \propto p(x,q)$.

  For part (b),
  since $Q$ is $\eps_1$-DP, we have that $p_1$ is $\eps_1$-close to $p_1'$, and $1-p_1$ is also $\eps_1$-close to $1-p_1'$.
  Let $ p'(x,q) \defeq \Pr_{(\xt,\qt) \sim Q(\Data')}[(\xt,\qt) = (x,q)]$, then we also have $p(x,q)$ is $\eps_1$-close to $p'(x,q)$.
  \begin{align*}
	  \frac{\Pr\inb{ \Aout(\Data) = (x,q) } }{\Pr\inb{ \Aout(\Data') = (x,q) }} =&\frac{p(x,q)}{p'(x,q)} \cdot  \frac{1 -\inp{(1-p_1)(1-\gamma)}^T }{1-\inp{(1-p_1')(1-\gamma)}^T} \cdot \frac{1 -(1-p_1')(1-\gamma) }{1-(1-p_1)(1-\gamma)} \\
    \le & \exp(\eps_1) \cdot \frac{1}{1-(1-\gamma)^T} \cdot \frac{p_1'(1-\gamma) + \gamma}{p_1(1-\gamma) + \gamma} \\
    \le & \exp(\eps_1) \cdot \frac{1}{1- \eps_0/2} \cdot \exp(\eps_1), \quad \hbox{ if $T \ge \frac{1}{\gamma}\ln \frac{2}{\eps_0} $} \\
    \le & \exp(2\eps_1 + \eps_0), \quad \hbox{ if $\eps_0 \le 1$}.
  \end{align*}

    Next we consider the event of outputting $\bot$ on dataset $\Data$.
  \begin{align*}
	  \Pr\inb{ \Aout(\Data) = \bot } 
    =& \inp{\sum_{j=1}^{T}\Pr\inb{ \Aout(\Data) = \bot \wedge \hbox{ stops after $j$ steps} }} +  \Pr\inb{ \hbox{not stopping after $T$ steps}}\\
    =&\inp{\sum_{j=1}^T \inp{(1-p_1)(1-\gamma)}^{j-1} \cdot (1-p_1) \gamma} + \inp{(1-p_1)(1-\gamma)}^T  \\
    =& (1-p_1)\gamma \cdot \frac{1 -\inp{(1-p_1)(1-\gamma)}^T }{1-(1-p_1)(1-\gamma)}+ \inp{(1-p_1)(1-\gamma)}^T  \\
    =&  \frac{(1-p_1)\gamma -(1-p_1)^{T+1}(1-\gamma)^T\gamma +  (1-p_1)^T(1-\gamma)^T - (1-p_1)^{T+1}(1-\gamma)^{T+1}}{1-(1-p_1)(1-\gamma)} \\
    =&  \frac{(1-p_1)\gamma  +  p_1(1-p_1)^T(1-\gamma)^T }{1-(1-p_1)(1-\gamma)}\\
    =&  (1-p_1)\gamma \cdot \frac{1 +  \frac{p_1}{\gamma}(1-p_1)^{T-1}(1-\gamma)^T }{p_1(1-\gamma) + \gamma}.
  \end{align*}
  Similarly, we have,

  \begin{align*}
  \frac{\Pr\inb{ \Aout(\Data) = \bot }  }{\Pr\inb{ \Aout(\Data') = \bot } } =&\frac{1-p_1}{1-p_1'} \cdot  \frac{1 +\frac{p_1}{\gamma} (1-p_1)^{T-1}(1-\gamma)^T }{1+ \frac{p_1'}{\gamma}(1-p_1')^{T-1}(1-\gamma)^T} \cdot \frac{p_1'(1-\gamma) + \gamma}{p_1(1-\gamma) + \gamma} \\
  \le& \exp(\eps_1) \cdot \inp{ 1 + p_1(1-p_1)^{T-1} \cdot \frac{1}{\gamma}(1-\gamma)^T } \cdot \exp(\eps_1) \\
  \overset{(\dagger)}{\le}& \exp(\eps_1) \cdot \inp{ 1 + \frac{1}{e (T-1)  \gamma} \cdot (1-\gamma)^T} \cdot \exp(\eps_1), \quad \hbox{ by AM-GM inequality} \\
  \le& \exp(2\eps_1) \cdot \inp{1+ \eps_0/2}, \quad \hbox{ if $T \ge \max\set{\frac{1}{\gamma}\ln \frac{2}{\eps_0 } , 1+\frac{1}{e\gamma} }$} \\
  \le&\exp(2\eps_1 + \eps_0),
  \end{align*}
  where $(\dagger)$ follows from AM-GM inequality: recall that $T>1$ is an integer, and $0 \le p_1\le 1$, then
  \[
    (T-1) p_1 (1-p_1)^{T-1} \le \inp{\frac{(T-1)p_1 + (T-1) (1-p_1)}{ T}}^T  = \inp{1 - \frac{1}{T}}^T \le e^{-1}.
  \]

  \bigskip

  This concludes part (b). For part (c), it is worth noting that the privacy does not degrade as we increase $T$ (the number of iterations).

  We consider any event $E$ on the output of~\cref{alg:thresholding}. Note that $E$ will be a set of tuples $(x,q)$, and possibly contain $\bot$.
  Let $\Aout(\Data)$ be the output of~\cref{alg:thresholding} on dataset $\Data$, and $\Aout(\Data')$ be the output on a neighboring dataset $\Data'$.
  If $\bot \in E$, then clearly $\Pr\inb{\Aout(\Data) \in E} = \Pr\inb{\Aout(\Data) \in E \setminus\set{\bot}} + \Pr\inb{\Aout(\Data) = \bot}$.
  In the following we will bound the two terms separately.
  For the first term, we consider any event $F$ that does not contain $\bot$.
  Let 
  \begin{align*}
	  p \defeq \Pr_{\inp{x,q} \sim Q(\Data)} \inb{(x,q) \in F }  \quad \quad \hbox{ and } &\quad \quad p' \defeq \Pr_{\inp{i,m,q} \sim Q(\Data')} \inb{(x,q) \in F }, \\
  p_1 \defeq \Pr_{q \sim Q(\Data)} \inb{q \ge \T}  \quad \quad \hbox{ and } & \quad \quad p_1' \defeq \Pr_{q \sim Q(\Data')} \inb{q \ge \T}.
  \end{align*}
  If $Q$ is $(\eps_1,\delta_1)$-DP, then we know that $p \le e^{\eps_1} p' + \delta_1$, and $p_1' \le e^{\eps_1} p_1 + \delta_1$, or equivalently that $p_1 \ge \max\set{0,  p_1' - \delta}e^{-\eps_1}$.
  Also notice that $p \le p_1$ and $p' \le p_1'$.
  Then, by calculations in part (a), we have the following upperbound:
  \begin{align*}
	  \Pr\inb{\Aout(\Data) \in F } =& p \cdot \frac{1 -\inp{(1-p_1)(1-\gamma)}^T }{1-(1-p_1)(1-\gamma)} \\
    \le & \frac{p }{p_1(1-\gamma) + \gamma} \\
    \le &  \frac{e^{\eps_1} \cdot  p' + \delta_1 }{\max\set{0, p_1'-\delta_1} \cdot e^{-\eps_1} (1-\gamma) + \gamma} \\
    \le & \frac{e^{2\eps_1} \cdot  p' + e^{\eps_1}\delta_1  }{\max\set{0,p_1'-\delta_1}\cdot  (1-\gamma) + \gamma} .
  \end{align*}
  Furthermore we have the following lowerbound:
  \begin{align*}
    \Pr\inb{ \Aout(\Data') \in F } =& p'\frac{1 -\inp{(1-p_1')(1-\gamma)}^T }{1-(1-p_1')(1-\gamma)} \\
    \ge & p' \frac{1-(1-\gamma)^T }{p_1'(1-\gamma) + \gamma} \\
    \ge & p' \frac{1-\eps_0/2 }{p_1'(1-\gamma) + \gamma}, \quad\hbox{ if $T\ge \frac{1}{\gamma}\ln \frac{2}{\eps_0}$}\\
    \ge & \frac{e^{-\eps_0} \cdot p' }{p_1'(1-\gamma) + \gamma}, \quad\hbox{ if $\eps_0 \le 1$}\\
  \end{align*}
  Then for an event $F = E\setminus\set{\bot}$ (that is, $F$ does not contain $\bot$), we have
  \begin{align*}
	  &\Pr\inb{ \Aout(\Data) \in F } - e^{2 \eps_1 + \eps_0} \cdot \Pr\inb{ \Aout(\Data') \in F } \\
  \le &e^{2\eps_1}p' \cdot \inp{\frac{1 }{\max\set{0,p_1'-\delta_1}\cdot (1-\gamma) + \gamma} - \frac{1 }{p_1'(1-\gamma) + \gamma} }+ \frac{e^{\eps_1} \delta_1}{\max\set{0,p_1'-\delta_1} (1-\gamma) + \gamma} \\
  \le & \frac{e^{2\eps_1} \delta_1 p' (1-\gamma) + e^{\eps_1}\delta_1 \inp{p_1'(1-\gamma) + \gamma }}{\inp{ \max\set{0,p_1'-\delta_1} (1-\gamma) + \gamma}  \inp{p_1'(1-\gamma) + \gamma }} \\
  \le & \frac{e^{2\eps_1} \delta_1 \inp{ 2p_1'(1-\gamma) + \gamma }}{\inp{ \max\set{0,p_1'-\delta_1} (1-\gamma) + \gamma}  \inp{p_1'(1-\gamma) + \gamma }}, \quad\hbox{by $p' \le p_1'$}\\
  \le &\frac{2e^{2\eps_1} \delta_1 }{\gamma}.
  \end{align*}

  Next, notice that we also have $1-p_1 \le e^{\eps_1} (1-p_1') + \delta_1$, then for the output $\bot$ we can upperbound
  \begin{align*}
	  \Pr\inb{ \Aout(\Data) = \bot } 
    =&  (1-p_1)\gamma \cdot \frac{1 +  \frac{p_1}{\gamma}(1-p_1)^{T-1}(1-\gamma)^T }{p_1(1-\gamma) + \gamma}\\
    \le& (1-p_1)\gamma \cdot \frac{1 +  \frac{1}{e(T-1)\gamma}(1-\gamma)^T }{p_1(1-\gamma) + \gamma}  , \quad \hbox{ by AM-GM inequality} \\
    \le& \inp{e^{\eps_1}(1-p_1') + \delta_1} \gamma\cdot \frac{1 +  \frac{1}{e(T-1)\gamma}(1-\gamma)^T }{e^{-\eps_1}\max\set{0,p_1'-\delta_1}(1-\gamma) + \gamma}, \quad \hbox{ by $(\eps_1,\delta_1)$-DP}  \\
    \le& \inp{e^{\eps_1}(1-p_1') + \delta_1} \gamma\cdot \frac{1 +  \eps_0/2 }{e^{-\eps_1}\max\set{0,p_1'-\delta_1}(1-\gamma) + \gamma}, \quad \hbox{ by the choice of $T$} \\
    \le&  \frac{e^{2\eps_1 + \eps_0}(1-p_1')\gamma + e^{\eps_1 + \eps_0} \delta_1\gamma }{\max\set{0,p_1'-\delta_1}(1-\gamma) + \gamma}.
  \end{align*}
  And we lowerbound
  \begin{align*}
  \Pr\inb{ \Aout(\Data') = \bot } 
    =&  (1-p_1')\gamma \cdot \frac{1 +  \frac{p_1'}{\gamma}(1-p_1')^{T-1}(1-\gamma)^T }{p_1'(1-\gamma) + \gamma}\\
    \ge&  (1-p_1')\gamma \cdot \frac{1 }{p_1'(1-\gamma) + \gamma}\\
  \end{align*}
  Therefore we have 
  \begin{align*}
    &\Pr\inb{ \Aout(\Data) = \bot } - e^{2\eps_1 + \eps_0} \cdot \Pr\inb{ \Aout(\Data') = \bot }\\
    \le & e^{2\eps_1 + \eps_0} (1-p_1')\gamma \inp{\frac{1 }{\max\set{0,p_1'-\delta_1} (1-\gamma) + \gamma} - \frac{1 }{p_1'(1-\gamma) + \gamma} } + \frac{ e^{\eps_1 + \eps_0} \delta_1\gamma }{\max\set{0,p_1'-\delta_1}(1-\gamma) + \gamma} \\
    = &  \frac{e^{2\eps_1 + \eps_0} \delta_1\gamma \cdot (1-p_1')(1-\gamma)}{\inp{\max\set{0,p_1'-\delta_1} (1-\gamma) + \gamma}\inp{p_1'(1-\gamma) + \gamma}}  + \frac{ e^{\eps_1 + \eps_0} \delta_1\gamma }{\max\set{0,p_1'-\delta_1}(1-\gamma) + \gamma} \\
    \le &  \frac{e^{2\eps_1 + \eps_0} \delta_1 \gamma  \cdot \inp{(1-p_1')(1-\gamma) + p_1'(1-\gamma) + \gamma}}{\inp{\max\set{0,p_1'-\delta_1} (1-\gamma) + \gamma}\inp{p_1'(1-\gamma) + \gamma}} \\ 
    \le & \frac{e^{2\eps_1 + \eps_0}\cdot \delta_1}{\gamma}.
  \end{align*}

  Finally, for an event $E$ that contains $\bot$, we let $F = E\setminus\set{\bot}$, and then
  \begin{align*}
	  &\Pr\inb{ \Aout(\Data) \in E } - e^{2 \eps_1 + \eps_0} \cdot \Pr\inb{ \Aout(\Data') \in E } \\
  =&\Pr\inb{ \Aout(\Data) \in F } + \Pr\inb{\Aout(\Data) = \bot} - e^{2 \eps_1 + \eps_0} \cdot \inp{\Pr\inb{ \Aout(\Data') \in F }+ \Pr\inb{\Aout(\Data') = \bot}} \\
  \le &\frac{2e^{2\eps_1} \delta_1 }{\gamma} + \frac{e^{2\eps_1 + \eps_0}\cdot \delta_1}{\gamma} \\
  \le &3e^{2\eps_1 + \eps_0}\cdot\frac{\delta_1}{\gamma}.
  \end{align*}

%

  \bigskip

  For part (d), notice that in each iteration, in order to not halt, $q$ has to be below $\T$, and the $\gamma$-biased coin test did not pass.
  In other words, for each iteration, $\Pr[\hbox{halting in any iteration}] = 1 - (1-p_1)(1-\gamma) = p_1 (1-\gamma) + \gamma$.
  Therefore this can be stochastically dominated by a geometric distribution (which corresponds to setting $T=\infty$), with expected number of trials being at most $\frac{1}{p_1(1-\gamma)+\gamma}$.

  \bigskip

  For part (e), by direct calculations,
  \begin{align*}
    \Pr\inb{\hbox{output $\bot$}} =& (1-p_1)\gamma \cdot \frac{1 +  \frac{p_1}{\gamma}(1-p_1)^{T-1}(1-\gamma)^T }{p_1(1-\gamma) + \gamma} \\
    \le& (1-p_1)\gamma \cdot \frac{1+ \frac{1}{e (T-1) \gamma} (1-\gamma)^T}{p_1} , \quad \hbox{ by AM-GM inequality} \\
    \le& \frac{(1-p_1) (1+ \eps_0/2)}{p_1}\gamma.
  \end{align*}
\end{proof}

\subsection{Proof of~\cref{thm:maxRandDPstop}}
\label{sec:proof-maxRandDPstop}

	For convenience we restate~\cref{thm:maxRandDPstop}.
\begin{theorem}
	Fix any $\eps_0 \in (0, 1/2), \gamma \in [0,1], \delta_2 > 0$ and let $T=\left\lceil \frac{1}{\gamma}  \inp{\ln \frac{2(1+\gamma)^2}{\eps_0 \gamma^2} + \ln \ln \frac{2(1+\gamma)^2}{\eps_0 \gamma^2}}\right\rceil$. Consider a variant of~\cref{alg:maxRand} that outputs the highest scored candidate from $S$ if $j$ reaches $T$.
	If $Q$ is $\eps_1$-DP, then the output of this algorithm is $(3\eps_1 + 3\eps_0)$-DP. 
\end{theorem}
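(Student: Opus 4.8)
The plan is to adapt the proof of~\cref{thm:maxRandDP} given above. Fix neighbouring datasets $\Data,\Data'$ and an output $(x,q)$, keep the notation $p,p_0,p_1$ (and $p',p_0',p_1'$) from that proof, and write $r \defeq 1-\gamma$. Redoing the same computation but with the geometric sum over $\abs{S}$ truncated at $T$ — the event $\abs{S}=T$ now carries probability $r^{T-1}$ rather than $r^{T-1}\gamma$ — I expect to arrive at
\[
	\Pr\inb{\Aout(\Data) = (x,q)} = \frac{\gamma p}{\inp{p_0 r+\gamma}\inp{p_1 r+\gamma}} + r^T\inp{G(p_0) - G(p_1)},\qquad G(p)\defeq\frac{p(1-p)^T}{pr+\gamma}.
\]
The first term is \emph{exactly} the quantity whose neighbouring ratio the proof of~\cref{thm:maxRandDP} already bounds by $e^{3\eps_1}$; so the only new work is to show the correction $r^T(G(p_0)-G(p_1))$ is small relative to the first term, uniformly in $(x,q)$.

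For that I would bound $\abs{G(p_0)-G(p_1)}\le (p_1-p_0)\sup_{p\in[0,1]}\abs{G'(p)} = p\cdot\sup\abs{G'}$, compute
\[
	G'(p) = \frac{(1-p)^{T-1}\inp{\gamma\inp{1-p(T+1)} - rTp^2}}{(pr+\gamma)^2},
\]
and estimate $\sup\abs{G'}$ using the elementary facts $(1-p)^{T-1}p(T+1)\le 1$, $(1-p)^{T-1}p^2\le \tfrac{4}{(T+1)^2}$, $pr+\gamma\ge\gamma$, and $T\ge 1/\gamma$, which give $\sup_p\abs{G'(p)}\le C/\gamma$ for a small absolute constant $C$. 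Hence the correction term is at most $Cpr^T/\gamma$ in absolute value, while the first term is at least $\gamma p$ (its denominator is $\le (r+\gamma)^2=1$); thus $\Pr\inb{\Aout(\Data)=(x,q)}$ equals the first term times a factor in $[1-\eta,1+\eta]$ with $\eta\defeq Cr^T/\gamma^2$.

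Finally, the prescribed $T$ is calibrated so that $\eta\le\eps_0$: with $M\defeq\frac{2(1+\gamma)^2}{\eps_0\gamma^2}$, the choice $\gamma T\ge \ln M+\ln\ln M$ gives $(1-\gamma)^T\le e^{-\gamma T}\le \frac{1}{M\ln M}$, and since $\eps_0\le\tfrac12$ and $\gamma\le1$ one has $M\ge 16$, so $2(1+\gamma)^2\ln M\ge C$ and therefore $\eta = \frac{C\eps_0}{2(1+\gamma)^2\ln M}\le\eps_0$ — the $\ln\ln$ term in $T$ is exactly what supplies the extra logarithmic factor that absorbs $C$, and the $\gamma^2$ inside $M$ accounts for the two factors of $1/\gamma$ in $\eta$ (one from $\sup\abs{G'}=O(1/\gamma)$, one from the lower bound $\gamma p$ on the first term). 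Combining, for every $(x,q)$,
\[
	\frac{\Pr\inb{\Aout(\Data) = (x,q)}}{\Pr\inb{\Aout(\Data') = (x,q)}} \le \frac{\gamma p/\inp{(p_0 r+\gamma)(p_1 r+\gamma)}}{\gamma p'/\inp{(p_0' r+\gamma)(p_1' r+\gamma)}}\cdot\frac{1+\eta}{1-\eta} \le e^{3\eps_1}\cdot e^{3\eps_0},
\]
using $\tfrac{1+\eta}{1-\eta}\le\tfrac{1+\eps_0}{1-\eps_0}\le e^{3\eps_0}$ for $\eps_0\le\tfrac12$. Since this variant always outputs the top element of $S$ (it never outputs $\bot$), integrating this pointwise bound over an arbitrary output event yields $(3\eps_1+3\eps_0)$-DP. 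The main obstacle is getting the truncated-sum identity and the $\sup\abs{G'}=O(1/\gamma)$ estimate right with clean enough constants to match the stated $T$; everything else is a mechanical reprise of the $T=\infty$ analysis in~\cref{thm:maxRandDP}.
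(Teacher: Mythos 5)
Your proposal is correct and follows essentially the same route as the paper's own proof of~\cref{thm:maxRandDPstop}: compute the truncated algorithm's output probability exactly, show that the choice of $T$ makes it equal to the infinite-horizon expression $\frac{\gamma p}{(p_0(1-\gamma)+\gamma)(p_1(1-\gamma)+\gamma)}$ up to a multiplicative $(1\pm\eps_0)$ factor, and then reuse the three-factor ratio bound from~\cref{thm:maxRandDP} together with $\frac{1+\eps_0}{1-\eps_0}\le e^{3\eps_0}$. The only difference is bookkeeping---you package the truncation error as $r^T\inp{G(p_0)-G(p_1)}$ and bound it via $\sup_p\abs{G'(p)}=O(1/\gamma)$, whereas the paper bounds $a^T-b^T$ and $p_1a^T-p_0b^T$ directly---and your constant chase is fine since the true lower bound on $2(1+\gamma)^2\ln M$ over the stated parameter range comfortably exceeds the crude $2\ln 16$ you quote.
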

\begin{proof}
	Similar to the proof of~\cref{thm:maxRandDP}: let $\Aout(\Data)$ be the output of~\cref{alg:maxRand} on $\Data$,  then we have
  \begin{align*}
	  &\Pr\inb{ \Aout(\Data) = (x,q) } \\
	  =& \sum_{j=1}^{T}\Pr\inb{ \Aout(\Data) = (x,q) \wedge \abs{S} = j } \\
	  =&\sum_{j=1}^T \Pr\inb{\abs{S} = j} \cdot \Pr\inb{\hbox{$\max S \le q$, and $(x,q) \in S$ }\mid \abs{S} = j} \\
	  =&\sum_{j=1}^T \inp{1-\gamma}^{j-1} \gamma \cdot \Pr\inb{\hbox{$\max S \le q$, and $(x,q) \in S$ }\mid \abs{S} = j} \\
	  & \quad + \inp{1-\gamma}^{T} \cdot \Pr\inb{\hbox{$\max S \le q$, and $(x,q) \in S$ }\mid \abs{S} = T}.
  \end{align*}
  Then, observe that
  \begin{align*}
	  \Pr\inb{\hbox{$\max S \le q$} \mid \abs{S}=j} = (1-p_0)^j,
  \end{align*}
  and
  \begin{align*}
	  \Pr\inb{ (x,q) \in S \mid \hbox{$\max S \le q$, and $\abs{S}=j$}} = 1 - \inp{1-\frac{p}{1-p_0}}^j= 1 - \inp{\frac{1-p_1}{1-p_0}}^j.
  \end{align*}
  Together we have
  \begin{align}
	  \begin{split}
    \Pr\inb{ \Aout(\Data) = (x,q) }
		  =  & \sum_{j=1}^T \inp{1-\gamma}^{j-1} \gamma \cdot  \inp{(1-p_0)^j - (1-p_1)^j} \\&\;\;+ \inp{1-\gamma}^{T} \inp{(1-p_0)^T - (1-p_1)^T}.
	  \end{split}
	  \label{eq:prob_aout}
  \end{align}
  We denote $a \defeq (1-\gamma)(1-p_0)$, $b \defeq  (1-\gamma)(1-p_1)$, then
  \begin{align*}
    &\Pr\inb{ \Aout(\Data) = (x,q) }\\
    =&\frac{\gamma(1-p_0)\inp{1 - a^T}}{1 - a} - \frac{\gamma(1-p_1)\inp{1 - b^T}}{1 - b} + a^T - b^T\\
    =&\frac{\gamma p + \inp{(1-p_1)(1-p_0)\gamma^2 - p_0 p_1 \gamma} (b^T - a^T) - \gamma(p_1 a^T - p_0 b^T)}{\inp{(1-p_0)\gamma+p_0}\inp{(1-p_1)\gamma+p_1}} + a^T - b^T .
  \end{align*}
  Observe that $a-b = (1-\gamma)p$, and we have
  \[
	  a^T - b^T = \sum_{i=0}^{T-1} a^{T-i} b^i - a^{T-i-1} b^{i+1} = (a-b)\sum_{i=0}^{T-1} a^{T-i-1} b^i \le T (a-b) a^{T-1}.  
	  \]
 Using the upper bound on $a$, this also implies that
	\[
		a^T - b^T \le T p (1-\gamma)^{T}.
		\]
  Furthermore,
  \begin{align*}
	  \abs{p_1 a^T - p_0 b^T} \le& \abs{p_1 a^T - p_0 a^T} + \abs{p_0 a^T - p_0 b^T} \\
	  \le& p a^T  + T p  (1-\gamma)^T \cdot p_0 (1-p_0)^{T-1} \\
	  \le& p (1-\gamma)^T  + \frac{T}{e (T-1)} p  (1-\gamma)^T, \hbox{ by AM-GM inequality}  \\
	  \le& 2 p (1-\gamma)^{T}.
  \end{align*}
  And we also have
  \begin{align*}
	  \inp{(1-p_0)\gamma+p_0}\inp{(1-p_1)\gamma+p_1} &\le (1+\gamma)^2\\
	  \abs{(1-p_1)(1-p_0)\gamma^2 - p_0 p_1 \gamma} &\le \gamma(1+\gamma).
  \end{align*}

  Now, if $T \ge \frac{1}{\gamma}  \inp{\ln \frac{2(1+\gamma)^2}{\eps_0 \gamma^2} + \ln \ln \frac{2(1+\gamma)^2}{\eps_0 \gamma^2}}$, then we have $T (1-\gamma)^T \le \frac{\eps_0\gamma}{(1+\gamma)^2}$.
  Therefore we can upperbound
  \begin{align*}
    &\Pr\inb{ \Aout(\Data) = (x,q) }\\
    =&\frac{\gamma(1-p_0)\inp{1 - a^T}}{1 - a} - \frac{\gamma(1-p_1)\inp{1 - b^T}}{1 - b} + a^T - b^T\\
    \le&\frac{\gamma(1-p_0)}{1 - a} - \frac{\gamma(1-p_1)}{1 - b} + \frac{\eps_0 \gamma p}{(1+\gamma)^2}\\
    \le&\frac{\gamma p (1+ \eps_0) }{\inp{(1-p_0)\gamma+p_0}\inp{(1-p_1)\gamma+p_1}} .
  \end{align*}
The first inequality above is a consequence of upper bounding the sum of the first $T$ terms in~\cref{eq:prob_aout} by the sum to infinity.
  Then we lowerbound
  \begin{align*}
    &\Pr\inb{ \Aout(\Data) = (x,q) }\\
    =&\frac{\gamma p + \inp{(1-p_1)(1-p_0)\gamma^2 - p_0 p_1 \gamma} (b^T - a^T) - \gamma(p_1 a^T - p_0 b^T)}{\inp{(1-p_0)\gamma+p_0}\inp{(1-p_1)\gamma+p_1}} + a^T - b^T \\
    \ge&\frac{\gamma p - \frac{\eps_0\gamma^2}{1+\gamma}p - \frac{2\gamma^2\eps_0}{T (1+\gamma)^2}p}{\inp{(1-p_0)\gamma+p_0}\inp{(1-p_1)\gamma+p_1}} \\
    \ge&\frac{\gamma p - \frac{ \gamma + 3\gamma^2 }{(1+\gamma)^2}\eps_0 \gamma p}{\inp{(1-p_0)\gamma+p_0}\inp{(1-p_1)\gamma+p_1}} \\
    \ge&\frac{\gamma p (1- \eps_0) }{\inp{(1-p_0)\gamma+p_0}\inp{(1-p_1)\gamma+p_1}}.
  \end{align*}

  Finally, 
  \begin{align*}
    \frac{\Pr\inb{ \Aout(\Data) = (x,q) } }{\Pr\inb{ \Aout(\Data') = (x,q) } }
    \le \frac{p}{p'}\cdot \frac{p_0'(1-\gamma) + \gamma}{p_0(1-\gamma) + \gamma} \cdot \frac{p_1'(1-\gamma) + \gamma}{p_1(1-\gamma) + \gamma} \cdot \frac{1+\eps_0}{1-\eps_0}
    \le  \exp(3\eps_1 + 3 \eps_0) .
  \end{align*}
\end{proof}

\subsection{Proof of~\cref{lem:trivial-coupling}}
\label{sec:proof-trivial-coupling}

We re-state~\cref{lem:trivial-coupling} below for convenience.
\begin{lemma}
	Let $\set{X_1, \cdots, X_n}$ and $\set{Y_1, \cdots, Y_n}$ be two sequences of independent $\set{0,1}$ random variables, and let $X = \sum_{i=1}^n X_i$, $Y= \sum_{i=1}^n Y_i$.
	For any fixed $\eps_1 \in (0,1), \eps_0 \in (0,1)$, $\delta_0 \in (0,1)$,
let $C=2(e^{\eps_0+\eps_1} +1 + e^{\eps_0/2}) < 21$.

If $\E X \le e^{\eps_1} \E Y $, then under the trivial (independent) coupling between $X$ and $Y$,
	\[
		\Pr\inb{ X \ge e^{\eps_1+\eps_0} \cdot Y +  \frac{C}{\eps_0} \cdot \ln \frac{2}{\delta_0}} \le \delta_0.
	\]
	Equivalently, if we let $\Delta \defeq \frac{C\ln \frac{2}{\delta_0}}{\eps_0 \inp{e^{\eps_0+\eps_1} - 1}} = O\inp{\frac{1}{\eps_0^2} \ln\frac{1}{\delta_0}}$, then
	\[
	  \Pr\inb{ X+ \Delta  \ge e^{\eps_1+\eps_0} \cdot \inp{ Y +  \Delta}} \le \delta_0.
	\]
\end{lemma}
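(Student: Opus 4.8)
The plan is to prove the first displayed inequality, $\Pr[X \ge e^{\eps_1+\eps_0}Y + t] \le \delta_0$ with $t := \frac{C}{\eps_0}\ln\frac{2}{\delta_0}$, by a Chernoff-type exponential moment bound; the second displayed inequality is then the same statement rewritten, since $\Delta = t/(e^{\eps_0+\eps_1}-1)$ is exactly the shift for which $X+\Delta \ge e^{\eps_0+\eps_1}(Y+\Delta)$ is equivalent to $X \ge e^{\eps_0+\eps_1}Y + t$ (a one-line algebraic rearrangement). Write $c := e^{\eps_0+\eps_1}$, $\mu_X := \E X = \sum_i \Pr[X_i = 1]$ and $\mu_Y := \E Y$, so the hypothesis reads $\mu_X \le e^{\eps_1}\mu_Y$ (if $\mu_Y = 0$ then all variables are $0$ a.s.\ and there is nothing to prove).

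First I would apply the exponential Markov inequality: for any $\lambda > 0$,
\[
\Pr[X - cY \ge t] \;\le\; e^{-\lambda t}\,\E\!\left[e^{\lambda X}\right]\,\E\!\left[e^{-\lambda c Y}\right],
\]
where the factorization of the expectation is precisely where the trivial (independent) coupling between $X$ and $Y$ is used. Using independence across coordinates and the elementary bound $1 + p(e^{s}-1) \le \exp\!\big(p(e^{s}-1)\big)$ for each Bernoulli (valid for all real $s$), this yields
\[
\Pr[X - cY \ge t] \;\le\; \exp\!\Big(\mu_X(e^{\lambda}-1) + \mu_Y(e^{-\lambda c}-1) - \lambda t\Big).
\]
Since $e^{\lambda} - 1 \ge 0$, I can replace $\mu_X$ by $e^{\eps_1}\mu_Y$ in the first term, so it suffices to choose $\lambda > 0$ for which the bracket $e^{\eps_1}(e^{\lambda}-1) + (e^{-\lambda c}-1)$ is nonpositive; then the whole exponent is at most $-\lambda t$, and $e^{-\lambda t} \le \delta_0$ whenever $\lambda t \ge \ln(1/\delta_0)$.

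The key step is the choice $\lambda := \eps_0 / C$ with $C = 2(e^{\eps_0+\eps_1} + 1 + e^{\eps_0/2})$. To check nonpositivity of the bracket I would use the two elementary inequalities $e^{\lambda} - 1 \le \lambda e^{\lambda}$ and $1 - e^{-\lambda c} \ge \lambda c\, e^{-\lambda c}$ (both following from $\frac{d}{du}(\text{difference}) \ge 0$ on $[0,\infty)$); the desired $e^{\eps_1}(e^{\lambda}-1) \le 1 - e^{-\lambda c}$ then reduces, after dividing by $\lambda$ and using $c = e^{\eps_0+\eps_1}$, to $\lambda(1+c) \le \eps_0$, which holds since $1 + c \le C$. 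Finally, with this $\lambda$ and $t = \frac{C}{\eps_0}\ln\frac{2}{\delta_0}$ one has $\lambda t = \ln\frac{2}{\delta_0} \ge \ln\frac{1}{\delta_0}$, so the tail probability is at most $e^{-\lambda t} = \delta_0/2 \le \delta_0$; the bound $C < 21$ comes from plugging $\eps_0, \eps_1 < 1$ into the definition of $C$. The main obstacle is purely the constant bookkeeping: $\lambda$ must be small enough that the bracket is nonpositive (forcing $\lambda \lesssim \eps_0/c$) yet large enough that $\lambda t \ge \ln(1/\delta_0)$ with the advertised $t$, and one must keep the slack in the elementary inequalities under control so that the stated $C$ works and stays below $21$ — but there is no conceptual difficulty beyond the Chernoff computation itself.
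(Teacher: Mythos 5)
Your proof is correct, but it takes a genuinely different route from the paper. You bound $\Pr[X - cY \ge t]$ (with $c = e^{\eps_0+\eps_1}$) by a single Chernoff/MGF computation, using the independence of $X$ and $Y$ to factor $\E[e^{\lambda(X-cY)}] = \E[e^{\lambda X}]\,\E[e^{-\lambda cY}]$, and then choosing $\lambda = \eps_0/C$ so that the combined exponent $e^{\eps_1}(e^\lambda-1) + (e^{-\lambda c}-1)$ is nonpositive; I checked the reduction via $e^\lambda - 1 \le \lambda e^\lambda$ and $1 - e^{-\lambda c} \ge \lambda c\,e^{-\lambda c}$ to the condition $\lambda(1+c) \le \eps_0$, and it is valid since $C \ge 1 + e^{\eps_0+\eps_1}$; the rearrangement to the second displayed form via $\Delta(e^{\eps_0+\eps_1}-1) = t$ is also right. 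The paper instead proves two separate one-sided concentration bounds — an upper-tail bound for $X$ about $e^{\eps_0/2}\E X$ and a lower-tail bound for $Y$ about $e^{-\eps_0/2}\E Y$ (its Lemmas on concentration of $X$ and $Y$) — and chains them through the hypothesis $\E X \le e^{\eps_1}\E Y$ with a union bound at level $\delta_0/2$ each. Your approach is more direct and slightly tighter: it avoids the union bound (you actually get $\delta_0/2$), and it exposes that the constant only needs to satisfy $C \ge 1 + e^{\eps_0+\eps_1}$, so the stated $C$ has slack. The paper's approach buys modularity: its two concentration lemmas are reused later (e.g., to control the non-concentration events $\F_3$ in the analyses of the ExtendedAboveThreshold and FindPercentileThreshold algorithms), which a single fused bound on $X - cY$ would not provide.
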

Before proving this lemma, it will be useful to show the following concentration bounds.
\begin{lemma}
	Let $X$ be a sum of independent $\set{0,1}$ random variables: $X = \sum_{i=1}^n X_i$ as defined in~\cref{lem:trivial-coupling}, then $\forall \eps \in (0,1), \delta \in (0,1)$,
	\begin{align}
		\Pr\inb{ X \ge e^{\eps} \E X + \frac{e^{\eps} + 1}{\eps}\ln \frac{1}{\delta} } \le \delta.
		\label{eq:concentrationX}
	\end{align}
	\label{lem:concentrationX}
\end{lemma}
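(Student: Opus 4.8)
The plan is to run the standard exponential-moment (Chernoff) argument and then absorb the resulting (optimal-at-$t=\eps$) threshold into the cleaner closed form stated in the lemma. Write $\mu \defeq \E X = \sum_{i=1}^n p_i$, where $p_i \defeq \Pr[X_i = 1]$. By independence, for every $t>0$,
\[
\E e^{tX} = \prod_{i=1}^n \E e^{tX_i} = \prod_{i=1}^n \inp{1 + p_i(e^t-1)} \le \exp\inp{\mu(e^t-1)},
\]
using $1+x \le e^x$. Markov's inequality applied to $e^{tX}$ then gives, for any threshold $a$,
\[
\Pr[X \ge a] \le e^{-ta}\,\E e^{tX} \le \exp\inp{\mu(e^t-1) - ta}.
\]

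Next I would specialize to $t=\eps$, obtaining $\Pr[X\ge a] \le \exp\inp{\mu(e^\eps-1) - \eps a}$, which is at most $\delta$ as soon as $a \ge \frac{\mu(e^\eps-1)}{\eps} + \frac1\eps\ln\frac1\delta$. So it suffices to check that the threshold claimed in the lemma, $a^\star \defeq e^\eps\,\E X + \frac{e^\eps+1}{\eps}\ln\frac1\delta$, dominates this quantity. This follows from two elementary inequalities valid for $\eps\in(0,1)$ (in fact for all $\eps\ge 0$): first, $\frac{e^\eps-1}{\eps}\le e^\eps$, equivalently $1-e^{-\eps}\le\eps$, which holds because $\eps\mapsto \eps - 1 + e^{-\eps}$ vanishes at $0$ and has nonnegative derivative; and second, trivially $\frac1\eps \le \frac{e^\eps+1}{\eps}$. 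Combining these, $\frac{\mu(e^\eps-1)}{\eps} + \frac1\eps\ln\frac1\delta \le e^\eps\mu + \frac{e^\eps+1}{\eps}\ln\frac1\delta = a^\star$, and since $\Pr[X\ge\cdot]$ is non-increasing we conclude $\Pr[X\ge a^\star]\le\delta$, which is exactly \cref{eq:concentrationX}.

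There is no genuine obstacle here: this is a textbook multiplicative Chernoff bound, and the only point requiring (mild) care is the slack inequality $\frac{e^\eps-1}{\eps}\le e^\eps$ that lets us replace the Chernoff-optimal-at-$t=\eps$ threshold by the closed form used later. Choosing $t=\eps$ rather than the true optimizer $t=\ln(a/\mu)$ loses nothing for the downstream application in \cref{lem:trivial-coupling}, where only this form of the tail bound (and its symmetric counterpart) is needed.
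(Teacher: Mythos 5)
Your proof is correct. It is the same underlying technique as the paper's (a multiplicative Chernoff bound plus an elementary inequality to massage the threshold into the stated closed form), but the bookkeeping differs: the paper shifts the variable, applying a packaged Chernoff bound of the form $\Pr\left[Z \ge e^{\eps}\E Z\right] \le \exp\left(-\tfrac{(e^{\eps}-1)^2 \E Z}{e^{\eps}+1}\right)$ to $Z = X + \Delta_1$ with $\Delta_1 = \tfrac{e^{\eps}+1}{(e^{\eps}-1)^2}\ln\tfrac1\delta$, and then rearranges using $e^{\eps}-1 \ge \eps$; you instead bound the moment generating function directly, take the fixed parameter $t=\eps$ in Markov's inequality, and dominate the resulting sufficient threshold $\tfrac{(e^{\eps}-1)\E X}{\eps} + \tfrac1\eps\ln\tfrac1\delta$ by the stated one via $e^{\eps}-1 \le \eps e^{\eps}$. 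Your route is slightly more self-contained (no appeal to a black-box Chernoff form, and no need to interpret the constant shift $\Delta_1$ as extra $[0,1]$-valued summands so that the packaged bound applies), while the paper's shifting device mirrors the way $\Delta$ enters the statement and use of Lemma 4.3; both yield exactly the claimed bound for all $\eps,\delta\in(0,1)$.
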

\begin{proof}

	Let $\Delta_1 = \frac{e^{\eps} + 1}{(e^{\eps} - 1)^2}\ln \frac{1}{\delta}$, then we apply the standard Chernoff bound to $X+\Delta_1$:
	\begin{align*}
		\Pr\inb{X+\Delta_1 \ge e^{\eps} \cdot \E \inp{ X + \Delta_1}}\le& \exp\inp{\frac{-(e^{\eps} -1)^2 \E(X+\Delta_1)}{e^{\eps} + 1}}
		\le\exp\inp{\frac{-(e^{\eps} -1)^2 \Delta_1}{e^{\eps} + 1}}
		=\delta.
	\end{align*}
	By re-arranging, we get that
	\[
		\Pr\inb{ X \ge e^{\eps} \E X + \frac{e^{\eps} + 1}{\eps}\ln \frac{1}{\delta} } \le \Pr\inb{ X \ge e^{\eps} \E X + \frac{e^{\eps} + 1}{e^{\eps} - 1}\ln \frac{1}{\delta} } = \Pr\inb{X\ge e^{\eps} \E X + (e^{\eps} - 1)\Delta_1} \le \delta.
	\]
\end{proof}
\begin{lemma}
	Let $Y$ be a sum of independent $\set{0,1}$ random variables: $Y = \sum_{i=1}^n X_i$ as defined in~\cref{lem:trivial-coupling}, then $\forall \eps \in (0,1), \delta \in (0,1)$,
	\begin{align}
		\Pr\inb{Y \le e^{-\eps} \E Y - \frac{\ln\frac{1}{\delta}}{\eps} } \le \delta.
		\label{eq:concentrationY}
	\end{align}
	\label{lem:concentrationY}
\end{lemma}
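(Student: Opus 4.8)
The plan is to prove this as a standard multiplicative Chernoff lower‑tail estimate, choosing the deviation parameter so that the additive slack comes out to be exactly $\frac{1}{\eps}\ln\frac1\delta$. Write $\mu \defeq \E Y$. First I would dispose of the degenerate case: if $e^{-\eps}\mu - \frac1\eps\ln\frac1\delta \le 0$, then the event $\set{Y \le e^{-\eps}\mu - \frac1\eps\ln\frac1\delta}$ has probability $0$ when the threshold is strictly negative (since $Y\ge 0$), and when the threshold is exactly $0$ it equals $\set{Y=0}$, which has probability at most $\prod_i(1-p_i)\le e^{-\mu}\le\delta$, the last step because in that case $\ln\frac1\delta=\eps e^{-\eps}\mu\le\mu$ (using $\eps e^{-\eps}\le e^{-1}<1$). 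So from now on I assume the threshold lies in $(0,\mu)$; in fact it lies in $(1-e^{-\eps},1)\cdot\mu$ since it is $<e^{-\eps}\mu$.

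In that regime set $\theta \defeq 1 - \frac1\mu\inp{e^{-\eps}\mu - \frac1\eps\ln\frac1\delta} = (1-e^{-\eps}) + \frac{1}{\eps\mu}\ln\frac1\delta \in (0,1)$, so that $(1-\theta)\mu$ is precisely the threshold in the statement. The standard multiplicative Chernoff bound for a sum of independent $\set{0,1}$ variables gives $\Pr\inb{Y \le (1-\theta)\mu} \le \exp\inp{-\theta^2\mu/2}$ for $\theta\in(0,1)$. It therefore suffices to show $\theta^2\mu \ge 2\ln\frac1\delta$.

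For this I would \emph{not} split $\theta$ symmetrically (that is the route taken for the upper tail in~\cref{lem:concentrationX} via a deterministic shift, but mirroring it here with a lower‑tail Chernoff bound loses a constant factor and only yields slack $\approx\frac{2}{\eps}\ln\frac1\delta$). Instead I apply AM--GM to the two summands of $\theta\mu = (1-e^{-\eps})\mu + \frac1\eps\ln\frac1\delta$, obtaining $(\theta\mu)^2 \ge 4(1-e^{-\eps})\mu\cdot\frac1\eps\ln\frac1\delta$, hence $\theta^2\mu \ge 4\,\frac{1-e^{-\eps}}{\eps}\ln\frac1\delta$. I would then close the argument with the elementary inequality $1-e^{-\eps} \ge \eps-\eps^2/2 \ge \eps/2$, valid for $\eps\in(0,1)$, which gives $\theta^2\mu \ge 2\ln\frac1\delta$ and thus $\exp(-\theta^2\mu/2)\le\delta$. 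The only mildly delicate points are the edge case $\mu\lesssim\ln\frac1\delta$ handled at the start, and the observation that the asymmetric AM--GM split (rather than a deterministic shift) is exactly what makes the constant in front of $\frac1\eps\ln\frac1\delta$ equal to $1$; everything else is routine.
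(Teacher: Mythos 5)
Your proof is correct and follows essentially the same route as the paper's: a multiplicative lower-tail Chernoff bound, an AM--GM step to relate the deviation $(1-e^{-\eps})\mu+\frac1\eps\ln\frac1\delta$ to $\sqrt{2\mu\ln\frac1\delta}$, and the elementary estimate $1-e^{-\eps}\ge\eps/2$ for $\eps\in(0,1)$. The only difference is that you treat the degenerate case where the threshold is nonpositive explicitly, which the paper leaves implicit (harmlessly, since the event is then empty or handled trivially).
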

\begin{proof}
	Note that by a direct application of Chernoff bound, it holds that $\forall \delta \in (0,1)$,
	\begin{align*}
		\Pr\inb{ Y \le \inp{1- \sqrt{\frac{2\ln\frac{1}{\delta}}{\E Y}}} \cdot \E Y  } \le \exp\inp{ -\inp{\tfrac{2\ln\frac{1}{\delta}}{\E Y}}\cdot \frac{\E Y}{2}} = \delta.
	\end{align*}

	Then, by AM-GM inequality: $\forall \eps$, we have $\frac{\eps \E Y}{2}  + \frac{\ln\frac{1}{\delta}}{\eps} \ge \sqrt{2\E Y \cdot \ln\frac{1}{\delta}}$. 
	Moreover, by standard estimates: for $\eps \in (0,1)$, we have $1- \eps \le e^{-\eps} \le 1- \eps/2$.
	Therefore, $\forall \eps \in (0,1),\delta \in (0,1)$,
	\begin{align*}
		\Pr\inb{Y \le e^{-\eps} \E Y - \frac{\ln\frac{1}{\delta}}{\eps} } 
		\le &\Pr\inb{ Y \le  \E Y - \inp{\frac{\eps \E Y}{2} + \frac{\ln\frac{1}{\delta}}{\eps} }} \\
		\le &\Pr\inb{ Y \le  \E Y - \sqrt{2\E Y \cdot \ln\frac{1}{\delta}}} , \quad \hbox{ by AM-GM inequality} \\
		= &\Pr\inb{ Y \le \inp{1- \sqrt{\frac{2\ln\frac{1}{\delta}}{\E Y}}} \cdot \E Y  } \\
		\le &\delta.
	\end{align*}

\end{proof}

Finally, we are ready to prove~\cref{lem:trivial-coupling}.

{\noindent Proof of~\cref{lem:trivial-coupling}.}
	For any given $\eps_0, \delta_0$, we set $\eps=\eps_0/2$, and $\delta=\delta_0/2$.
	Then we consider the following events, $G_X$ and $G_Y$ on the probability space of $X$ and $Y$ respectively:
	\begin{align*}
	G_X \defeq& \set{X: X < e^{\eps} \E X + \frac{e^{\eps} + 1}{\eps}\ln \frac{1}{\delta} };\\
	G_Y\defeq& \set{Y: Y > e^{-\eps} \E Y - \frac{\ln\frac{1}{\delta}}{\eps}}.
	\end{align*}
	As discussed in~\cref{lem:concentrationX,lem:concentrationY}, we have
	\[
		\Pr\Big[ \overline{G_X} \cup \overline{G_Y} \Big] \le \Pr\Big[\overline{G_X}\Big] + \Pr\Big[\overline{G_Y}\Big]
		\le \delta + \delta = \delta_0.
	\]

	On the other hand, conditional on $G_X$ and $G_Y$, we must have
	\begin{align*}
		X <& e^{\eps} \E X + \frac{e^{\eps} + 1}{\eps}\ln \frac{1}{\delta} , \quad\hbox{ by $G_X$} \\
		\le& e^{\eps+\eps_1} \E Y + \frac{e^{\eps} + 1}{\eps}\ln \frac{1}{\delta}  \\
		\le& e^{2\eps+\eps_1} \inp{Y + \frac{\ln\frac{1}{\delta}}{\eps}} +\frac{e^{\eps} + 1}{\eps}\ln \frac{1}{\delta}, \quad\hbox{ by $G_Y$} \\
		=& e^{\eps_0+\eps_1} \cdot Y +  \frac{2(e^{\eps_0+\eps_1} +1 + e^{\eps_0/2})  }{\eps_0}\ln\frac{2}{\delta_0} \\
	\end{align*}
	Therefore, let $C=2(e^{\eps_0+\eps_1} +1 + e^{\eps_0/2})$, then
	\[
		\Pr\inb{ X \ge e^{\eps_0+\eps_1} \cdot Y +  \frac{C \cdot \ln\frac{2}{\delta_0}}{\eps_0} } \le 
		\Pr\Big[ \overline{G_X} \cup \overline{G_Y} \Big] \le \delta_0.
	\]
\qed

%

\section{Naive algorithms: tight examples and analysis}
\subsection{Outputting the best candidate}
  \label{sec:naiveDP}
  In this subsection we consider a naive algorithm where, one simply chooses the best candidate (with the highest score, e.g., in the hyperparameter selection setting, among the trained models one outputs the best performing model and its corresponding hyperparameter). 
  
  What is the best $\eps$-DP bound, or $(\eps,\delta)$-DP bound that we can hope for? 
  Basic composition theorem says that if there are $K$ candidates, and each candidate is $\eps$-DP, then, outputting the best of the $K$ candidates is $(K \eps)$-DP. This is actually tight, thanks to the following example.
    \begin{align*}
      \hbox{let $m \sim M_i(d_1)$, and } q_i(m) &= \begin{cases}
	0.9, \hbox{ if $i=0$} \\
	0.8, \hbox{ if $i\neq 0$ and with probability $\frac{1}{2}$ } \\
	0.95,\hbox{ if $i\neq 0$ and with probability $\frac{1}{2}$ } 
    \end{cases} \\
            \hbox{let $m \sim M_i(d_1')$, and } q_i(m) &= \begin{cases}
	0.9, \hbox{ if $i=0$} \\
	0.8, \hbox{ if $i\neq 0$ and with probability $\frac{e^{\eps}}{2}$ } \\
	0.95,\hbox{ if $i\neq 0$ and with probability $\frac{1-e^{\eps}}{2}$ } 
    \end{cases}
    \end{align*}
    Here the probability are with respect to the randomness in the $\eps$-DP candidate $M$.
    Then for neighboring datasets $d_1$ and $d_1'$, we get $K$ samples of the candidates (e.g. for each of the $K$ candidates, we draw a sample), and then we compare the event of choosing $i=0$ as the best hyperparameter.
    It is easy to see that
    $\Pr\inb{i_*(d_1) = 0} = 2^{-K}$ and $\Pr\inb{i_*(d_1') = 0} = \exp\inp{K\eps} \cdot 2^{-K}$, therefore,
  \begin{align*}
    \ln \abs{\frac{\Pr\inb{i_*(d_1') = 0}}{\Pr\inb{i_*(d_1) = 0}}}  = K \eps.
  \end{align*}

  What about $(\eps,\delta)$-DP bound? 
  We show that outputting the maximum cannot do better than $\inp{\Theta(\ln\frac{1}{\delta}) \eps, \delta}$-DP.
  Fix an integer $K$, and $\delta\in (0,1)$.

    \begin{align*}
      \hbox{let $m \sim M_i(d_1)$, and } q_i(m) &= \begin{cases}
	0.9, \hbox{ if $i=0$} \\
	0.8, \hbox{ if $i\neq 0$ and with probability $1-\frac{\ln 1/\delta}{K}$ } \\
	0.95,\hbox{ if $i\neq 0$ and with probability $\frac{\ln 1/\delta}{K}$ } 
    \end{cases} \\
            \hbox{let $m \sim M_i(d_1')$, and } q_i(m) &= \begin{cases}
	0.9, \hbox{ if $i=0$} \\
	0.8, \hbox{ if $i\neq 0$ and with probability $1-\frac{e^{-\eps}\ln 1/\delta}{K}$ } \\
	0.95,\hbox{ if $i\neq 0$ and with probability $\frac{e^{-\eps}\ln 1/\delta}{K}$ } 
    \end{cases}
    \end{align*}
    Again for neighboring datasets $d_1$ and $d_1'$, we get $K$ samples of the candidates (e.g. for each of the $K$ candidates, we draw a sample), and then we compare the event of choosing $i=0$ as the best hyperparameter.
    It is easy to see that
    $\Pr\inb{i_*(d_1) = 0} \approx \delta$ and $\Pr\inb{i_*(d_1') = 0} \approx \delta^{1-\eps}$, therefore,
  \begin{align*}
    \ln \abs{\frac{\Pr\inb{i_*(d_1') = 0} - \delta}{\Pr\inb{i_*(d_1) = 0}}} \approx \ln \inp{\delta^{-\eps} - 1} \approx \inp{\ln 1/\delta} \eps.
  \end{align*}
  \CommentS{
  \begin{theorem}
    Let $M_1, M_2, \cdots, M_K$ be $\eps$-DP candidates.
    Then outputting the candidate with the highest score is $(\Theta(\ln \frac{1}{\delta})\eps, \delta)$-DP for every $\delta>0$.
    And this bound is also tight for every $\delta$.
    \label{thm:eps-delta-DP}
  \end{theorem}
  \begin{proof}
    TBA.
  \end{proof}
  }

\subsection{Thresholding with decreasing thresholds}
In this subsection we consider a natural variant of~\cref{alg:thresholding}: 
in each iteration, instead of halting (and output $\perp$) with probability $\gamma$, 
what if we decrease the threshold? 
In particular, we will try a lower threshold with probability at least $\gamma$ in each step.
Is this good enough, so that we can avoid paying the privacy cost for the different thresholds that we tried along the way?
See~\cref{alg:decremental-thresholding} for formal description.
For simplicity, we consider the special case where we do not stop the algorithm after some finite number of $T$ steps. The algorithm could run forever \emph{in the worst case}. Note that in~\cref{alg:thresholding}, running the algorithm longer only helps in privacy (recall that in~\cref{thm:thresholdingDP}, the larger $T$ is, the smaller $\eps_0$ we can choose).

  \begin{algorithm}
    \flushleft Input: a budget $\gamma \le 1$, an integer $R$ for how many thresholds to try, and the sampling access to $Q(\Data)$.

    Let $\T = 1$;

  While $\T\ge 0$:
  \begin{itemize}
    \item draw $(x,q) \sim Q(\Data)$;
    \item if $q \ge \T$ then output $(m,i)$ and halt; 
    \item flip a $\gamma$-biased coin: with probability $\gamma$, set $\T \gets \T - \frac{1}{R}$.
  \end{itemize}
  \caption{Thresholding with decreasing thresholds.}
  \label{alg:decremental-thresholding}
\end{algorithm}

Note that as soon as $\T=0$, the algorithm will output whichever samples of candidate that it gets, as $q \ge \T$ is trivially true.

Here is an example which shows that trying many thresholds are not free for privacy: if we plan to try $R$ thresholds, then we do have to pay a factor of $R$ in the privacy cost.

Consider $Q(\Data) = \mathrm{Bernoulli}(p)$, that is, $q\sim Q(\Data)$ will be $1$ with probability $p$, and $0$ otherwise. 
Therefore, in order for the algorithm to ouput a candidate with score $0$, the threshold has to be decreased $R$ times, untill $\T=0$. 
And only then~\cref{alg:decremental-thresholding} will output a score $0$ candidate with probability $1-p$.
Let us compute the probability of this event.
Let $a = (1-p) (1-\gamma), b =(1-p)\gamma$.

\begin{align*}
  \Pr\inb{\hbox{output $1$ before the $R$-th decrements on $\Data$}} 
  =& p \sum_{j=0}^{\infty} \sum_{i=0}^{R-1} {j \choose i} a^{j-i} b^i \\
  =&p \sum_{i=0}^{R-1} b^i \sum_{j=i}^\infty {j \choose i} a^{j-i} \\
  =&p \sum_{i=0}^{R-1} b^i \frac{1}{(1-a)^{i+1}} \\
  =& \frac{p}{1-a}\frac{1 - \inp{\frac{b}{1-a}}^R}{1 - \frac{b}{1-a}} \\
  =& 1 - \inp{\frac{b}{1-a}}^R.
\end{align*}
Therefore 
\begin{align*}
  &\Pr\inb{\hbox{output $0$ on $\Data$}} \\
  = &(1-p) \cdot \inp{ 1 -\Pr\inb{\hbox{output $1$ before the $R$-th decrements on $\Data$}}} \\
  = &(1-p)\inp{\frac{b}{1-a}}^R = (1-p) \gamma^R \cdot \inp{\frac{1-p}{p(1-\gamma) + \gamma}}^R. 
\end{align*}

Notice that $\frac{1-p}{p(1-\gamma) + \gamma}$ is monotone in $p$, if $(1-p)$ changes to $e^{\eps} (1-p)$, then the $e^{\eps}$ factor will be amplified $R$ times.


\subsection{Outputting the $p$-th percentile}
\label{sec:percentileDP}
Without loss of generality, we consider $p=\frac{1}{2}$, that is, we output the median candidate.
Also without loss of generality, let us say there are only two models, $m_1$ and $m_2$, and $q(m_1)=0$, $q(m_2)=1$.
Consider the following two distributions of $M_i(\Data)$ and $M_i(\Data')$.

    \begin{align*}
      \forall i, M_i(\Data)&= \begin{cases}
	  m_1, \hbox{ with probability $\frac{1-\eps}{2}$ } \\
	  m_2,\hbox{ with probability $\frac{1+\eps}{2}$ } 
	\end{cases}, \\
    \forall i, M_i(\Data') &= \begin{cases}
	      m_1,\hbox{ with probability $\frac{1+\eps}{2}$ } \\
	      m_2,\hbox{ with probability $\frac{1-\eps}{2}$ } 
	    \end{cases}.
    \end{align*}

    Clearly, the two distributions are $O(\eps)$-close, yet in one distribution, the median is $m_2$, while in the other the median is $m_1$.
    Therefore, the median of the distribution is not private. This is also the case if one takes the median of $N$ samples, assuming $N$ large enough (where we have concentration with high probability).
    This also applies if one pick an index $k$ from $\set{1,2,\cdots, \lceil N/2 \rceil}$ uniformly at random, and then output the $k$-th highest.

    \CommentS{
    Next we consider a variant of this. 
    Suppose that we get $N$ samples from the candidates, instead of outputting the median of these samples, we output the $\frac{N}{2}\exp\inp{\Lap\inp{\Theta(1)}}$-highest sample.
    Or more generally, for every $N$ and $p$, can one design a distribution $\F_N(p)$ for the integers $[N]$ so that, if we pick $k \sim \F_N(p)$ and output the $k$-th highest, then it holds that :
    \begin{enumerate}
      \item the output is $(C \eps,\delta)$-DP for some universal finite constant $C$, independent of $N$, $\delta$ and the distribution of the candidates;
      \item with high probability, the output is competitive with the $p$-th percentile best of the candidates;
      \item $N = O\inp{\polylog{\frac{1}{\delta}}}$.
    \end{enumerate}

    It is obvious that if one pick $k$ uniformly at random from $[N]$, then item (1) and (3) is satisfied, but (2) is not.
    If we output the $\frac{N}{2}\exp\inp{\Lap\inp{\Theta(1)}}$-highest sample, to show item (1) and item (2) we were only able to get $N \approx \frac{1}{\delta}$, which is far from item (3). Or item (2) and item (3) but only getting $\inp{\eps \cdot \ln \frac{1}{\delta} ,\delta}$-DP.


    Conjecture: can we approximate the distribution of the chosen location (in which we round the Laplace to make it discrete) by a binomial?

%
%
    }

\section{Improved analysis of the private amplification algorithm in~\cite{GLMRT}}

Let $\set{Q_i(\Data)}_{i=1}^N$ be a sequence of independent distributions, let $\set{q_i(\Data)}_{i=1}^N$ be the random variables where $q_i \sim Q_i$.

Suppose that for some $\eps_1$, every $q_i$ is $\eps_1$-DP.
For a given $\T \in \R$, let $\qt_i = \min\set{\T, q_i}$.
Denote $\EM_\eps(A_1,\cdots,A_n)$ to be the exponential mechanism (with parameter $\eps$) on the sequence $\set{A_i}$, which is a random variable. 
For any given $\gamma>0$, $n=N+ 1 + \frac{1}{\gamma}$, and $A_i =
\begin{cases}
	\qt_i(\Data), & \hbox{ if $i\le N$} \\
	\T, & \hbox{ otherwise}
\end{cases}$.
Then, we consider the distribution of $\EM_{\eps_2}(A_1, \cdots, A_n)$.
Basically, we will add $1+\frac{1}{\gamma}$ dummy classes with a score $\T$, scale back everything else and then apply exponential mechanism.

\begin{theorem}
	Fix $\eps_2$, $\T$, $\delta$, $\gamma \in (0,\frac{1}{4})$.
  Suppose that for every $i$, $q_i$ is $\eps_1$-differentially private.
  Let $p= \frac{1}{N} \sum_{i=1}^N \E \exp\inp{\eps_2(\widetilde{q_i} - \T)}$.
  Then the following holds.
  \begin{itemize}
    \item Utility: 
      The mechanism outputs a dummy class with probability $\frac{\gamma + 1}{Np\gamma + 1}$, and
      \[
	      \Pr\inb{\EM_{\eps_2} (A_1, \cdots, A_n) \ge \T - \frac{1}{\eps_2} \ln \left( \frac{1}{\delta p} \right) } \ge 1 - \delta,
	\]
	where the randomness is over both the internal randomness of exponential mechanism and the randomness of $\set{q_i}$.

    \item Privacy:
	    $\EM_{\eps_2}(A_1, \cdots, A_n)$ is $(2\eps_1 + 8\gamma)$-DP.

  \end{itemize}
  \label{thm:em}
\end{theorem}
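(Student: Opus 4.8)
The plan is to reduce the whole statement to the behaviour of the exponential mechanism's normalising constant $W := \sum_{i=1}^n e^{\eps_2 A_i} = e^{\eps_2\T}\big(m + \sum_{k=1}^N B_k\big)$, where $m := 1+\tfrac1\gamma$ is the number of dummy classes and $B_k := e^{\eps_2(\qt_k-\T)} \in (0,1]$; recall the mechanism picks class $i$ with probability $e^{\eps_2 A_i}/W$ and outputs the value $A_i$, so its law is supported on $(-\infty,\T]$. The workhorse is the elementary estimate: for $W\ge W_{\min}$ almost surely, $\E[1/W] \le \frac1{\E W}\big(1 + \frac{\mathrm{Var}(W)}{W_{\min}\,\E W}\big)$ (write $\frac{\E W - W}{W} = \frac{\E W - W}{\E W} + \frac{(\E W - W)^2}{W\,\E W}$ and take expectations). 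Since the $B_k$ are independent and $[0,1]$-valued, $\mathrm{Var}(W) = e^{2\eps_2\T}\sum_k\mathrm{Var}(B_k) \le e^{2\eps_2\T}\sum_k\E B_k = e^{2\eps_2\T}Np$, while $W_{\min} = m\,e^{\eps_2\T} \ge \tfrac1\gamma e^{\eps_2\T}$ and $\E W \ge e^{\eps_2\T}(Np+m)$, so $\frac{\mathrm{Var}(W)}{W_{\min}\E W}\le\frac{Np}{m(Np+m)}\le\tfrac1m\le\gamma$. Hence $\frac1{\E W}\le\E[1/W]\le\frac{1+\gamma}{\E W}$: padding with $\gtrsim 1/\gamma$ dummies pins $\E[1/W]$ multiplicatively to $1/\E[W]$ with \emph{no} exceptional event, which is exactly what a final pure-DP guarantee needs. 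The identical bound applies to $W_j(v) := e^{\eps_2 v} + \sum_{k\ne j}e^{\eps_2\qt_k} + m\,e^{\eps_2\T}$, the total weight when real class $j$ is fixed to value $v$, with $\gamma$ replaced by $2\gamma$.

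For privacy I would fix neighbours $\Data,\Data'$ and prove $\mu_\Data(v)\le e^{2\eps_1+8\gamma}\mu_{\Data'}(v)$ for the output density at every $v<\T$, plus the analogue for the atom at $\T$; together (and symmetrically in $\Data,\Data'$) these give the DP bound for arbitrary output events. By independence of the $q_k$'s, the contribution of real class $j$ to $\mu_\Data(v)\,dv$ factors as $\Pr_{q_j\sim Q_j(\Data)}[\qt_j\in dv]\cdot e^{\eps_2 v}\cdot\E_{q_{-j}\sim Q_{-j}(\Data)}[1/W_j(v)]$. The first factor is $e^{\eps_1}$-stable directly from $\eps_1$-DP of $Q_j$ (for $v<\T$, $\{\qt_j\in dv\}=\{q_j\in dv\}$ is an event). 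The third factor is $e^{\eps_1+2\gamma}$-stable: by the lemma it is within $(1+2\gamma)$ of $1/\E W_j(v)$, and $\E W_j(v) = e^{\eps_2 v}+e^{\eps_2\T}\big(m+\sum_{k\ne j}\E_{Q_k(\Data)}[B_k]\big)$ is $e^{\eps_1}$-stable because each $\E_{Q_k(\Data)}[B_k]$, the expectation of a $[0,1]$-valued post-processing of the $\eps_1$-DP $Q_k$, is itself $e^{\eps_1}$-stable and these are combined monotonically. Thus the class-$j$ contribution is $e^{2\eps_1+2\gamma}$-stable and summing over $j$ handles the density. The atom at $\T$ splits into the dummy mass $m\,e^{\eps_2\T}\E_q[1/W]$ ($e^{\eps_1+\gamma}$-stable by the lemma, $\sum_k\E B_k$ again $\eps_1$-stable) and the clipped-real-class mass (bounded exactly like the density). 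Carrying the constants through, with $\gamma<1/4$ so that all the $(1+c\gamma)$ factors can be absorbed into the exponent, yields $(2\eps_1+8\gamma)$-DP.

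For utility, the probability of outputting a dummy is $m\,e^{\eps_2\T}\E_q[1/W] = \tfrac{m}{Np+m}(1+O(\gamma))$, i.e.\ the stated $\tfrac{\gamma+1}{Np\gamma+1}$ up to lower-order terms. For the score guarantee I would take $t = \tfrac1{\eps_2}\ln\tfrac1{\delta p}$ (up to an $O(\tfrac1{\eps_2}\ln(1+\gamma))$ additive correction): at most $N$ real classes have $\qt_j\le\T-t$, each then of weight $\le e^{\eps_2(\T-t)}$, so
\[
\Pr[\text{output}\le\T-t] = \E_q\!\left[\frac{\sum_{j:\,\qt_j\le\T-t} e^{\eps_2\qt_j}}{W}\right] \le N e^{\eps_2(\T-t)}\,\E_q[1/W] \le \frac{N e^{-\eps_2 t}(1+\gamma)}{Np+m} \le \delta(1+\gamma),
\]
again by the lemma; folding the $(1+\gamma)$ into $t$ gives probability $\le\delta$ over both the mechanism's coins and the randomness of the $q_i$'s.

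The one genuine obstacle is the step packaged into the lemma. The exponential-mechanism probabilities all carry $W$ in the denominator, and $W$ depends on \emph{all $N$ coordinates} at once — every $Q_k$ moves when $\Data\to\Data'$ — so a naive coordinate-by-coordinate union bound would cost $N\eps_1$. The remedy, and the whole reason for padding with $\approx 1/\gamma$ dummy classes, is that this forces the deterministic part $m\,e^{\eps_2\T}$ of $W$ to dominate the variance contributed by the $B_k$'s; the second-moment identity then ties $\E[1/W]$ to $(1+O(\gamma))/\E[W]$, and since $\E[W]$ \emph{is} a sum of per-coordinate $\eps_1$-stable quantities, $e^{\eps_1}$-stability of the normaliser is recovered with only an additive $O(\gamma)$ loss and — as nothing is conditioned away — as a pure rather than an approximate statement.
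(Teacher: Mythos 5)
Your argument follows the same overall architecture as the paper's proof: decompose the output law according to which class produced the outcome, factor out the expected reciprocal of the normalizer $W$, use the $\approx 1/\gamma$ dummy classes to pin $\E\left[1/W\right]$ to $1/\E\left[W\right]$ unconditionally, and then exploit that $\E\left[W\right]$ is a monotone combination of per-coordinate $e^{\eps_1}$-stable quantities (with Jensen supplying the reverse direction). The genuinely different ingredient is the core lemma. The paper's~\cref{lem:inv-exp} is a negative-moment computation ($\tfrac{1}{1+x}=\int_0^1 t^x\,\mathrm{d}t$ together with $t^x\le (t-1)x+1$ for $x\in[0,1]$ and independence), which gives $\E\left[1/(1+\sum_k X_k)\right]\le 1/\sum_k \E X_k$, i.e.\ it only gives up an additive constant in the denominator. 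Your replacement is the second-moment identity $\E\left[1/W\right]\le \tfrac{1}{\E W}\left(1+\tfrac{\mathrm{Var}(W)}{W_{\min}\,\E W}\right)$ anchored by the deterministic dummy mass $W_{\min}\ge \tfrac{1}{\gamma}e^{\eps_2\T}$, which costs a multiplicative $(1+O(\gamma))$. For the privacy bullet this difference is immaterial: your per-class factor $e^{2\eps_1}(1+2\gamma)$ and dummy factor $e^{\eps_1}(1+\gamma)$ sit well inside $e^{2\eps_1+8\gamma}$ for $\gamma<1/4$, and your handling of the atom at $\T$ and of the $e^{\eps_1}$-stability of $\E W_j(v)$ (expectations of $[0,1]$-valued post-processings of $\eps_1$-DP mechanisms, combined with nonnegative constants) is sound; this part of the proposal is a correct, somewhat more elementary alternative to the paper's route.

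Where the multiplicative loss bites is the utility bullet, and it is more than a cosmetic ``lower-order'' issue. Your bound $\E\left[1/W\right]\le (1+\gamma)/\E W$ is strictly weaker than what~\cref{lem:inv-exp} delivers, namely $\E\left[1/W\right]\le e^{-\eps_2\T}/\left(Np+\tfrac{1}{\gamma}-1\right)$, exactly in the regime $Np\gg 1/\gamma$. Consequently your tail estimate is $\delta(1+\gamma)\cdot\tfrac{Np}{Np+m}$, which exceeds $\delta$ when $Np$ is large, and your dummy-output probability bound $\tfrac{(1+\gamma)(1+1/\gamma)}{Np+1+1/\gamma}$ exceeds the stated $\tfrac{1+1/\gamma}{Np+1/\gamma}=\tfrac{\gamma+1}{\gamma Np+1}$ whenever $Np>0$. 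The paper's additive-loss lemma lands exactly on the stated constants (using $\tfrac1\gamma-1\ge 0$), whereas your argument, as you concede, only proves the theorem with the threshold lowered by $\tfrac{\ln(1+\gamma)}{\eps_2}$ (equivalently, failure probability $\delta(1+\gamma)$). So: privacy is proved in full by a different and valid route, but to recover the utility claims verbatim you need either the paper's negative-moment lemma or a sharpening of your variance step that trades the multiplicative $(1+\gamma)$ for an additive constant in the denominator.
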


It is worth noting that the privacy on the training set does not depend on $\eps_2$. In other words, one can even set $\eps_2 \to \infty$, which corresponds to sampling uniformly from the classes with a score exceeding some threshold and with one extra dummy class. The theorem says that doing so does not compromise the privacy of the training set at all.

Before we prove the theorem, we introduce a useful lemma similar to that of~\cite[Lemma C.1]{mcgregor2010limits}.
The key changes will be from a Binomial distribution to one that takes value from $[0,1]$.
\begin{lemma}
  Let $\set{X_i}_{i=1}^N$ be a sequence of independent random variables over $[0,1]$, then
  \[
    \frac{1}{1 + \sum_{i=1}^N \E X_i} \le \E \inb{\frac{1}{1 + \sum_{i=1}^N X_i }} \le \frac{1}{\sum_{i=1}^N \E X_i}.
  \]
  \label{lem:inv-exp}
\end{lemma}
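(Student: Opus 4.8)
The plan is to prove the two inequalities separately. The lower bound is just convexity: since $z \mapsto \tfrac{1}{1+z}$ is convex on $[0,\infty)$, Jensen's inequality applied to $S \defeq \sum_{i=1}^N X_i \ge 0$ gives
\[
  \E\left[\frac{1}{1+S}\right] \ge \frac{1}{1+\E S} = \frac{1}{1+\sum_{i=1}^N \E X_i},
\]
and independence is not even needed for this direction.

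For the upper bound I would use the integral representation $\frac{1}{1+s} = \int_0^1 t^{s}\,\mathrm{d}t$, valid for all $s \ge 0$; this is exactly where the ``$+1$'' in the denominator pays off, since it gives the clean range $[0,1]$. Because the integrand is nonnegative, Tonelli's theorem lets me exchange expectation and integral, and then independence factors the expectation:
\[
  \E\left[\frac{1}{1+S}\right] = \int_0^1 \E\!\left[t^{S}\right]\,\mathrm{d}t = \int_0^1 \prod_{i=1}^N \E\!\left[t^{X_i}\right]\,\mathrm{d}t.
\]
Now fix $t \in (0,1)$. Since $x \mapsto t^{x}$ is convex and $X_i \in [0,1]$, linear interpolation between the endpoints $x=0$ and $x=1$ gives the pointwise bound $t^{X_i} \le (1-X_i) + X_i t = 1 - (1-t)X_i$; taking expectations and using $1-u \le e^{-u}$ yields $\E[t^{X_i}] \le 1-(1-t)\E X_i \le \exp(-(1-t)\E X_i)$. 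Writing $\mu \defeq \sum_{i=1}^N \E X_i$ and substituting $u = 1-t$,
\[
  \E\left[\frac{1}{1+S}\right] \le \int_0^1 \exp\!\big(-(1-t)\mu\big)\,\mathrm{d}t = \int_0^1 e^{-u\mu}\,\mathrm{d}u = \frac{1-e^{-\mu}}{\mu} \le \frac{1}{\mu} = \frac{1}{\sum_{i=1}^N \E X_i},
\]
with the case $\mu = 0$ being vacuous (the claimed bound is $+\infty$).

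I do not expect a genuine obstacle here; this is the continuous analogue of the binomial estimate of \cite{mcgregor2010limits}, and the only points requiring care are the interchange of integral and expectation (justified by nonnegativity of $t^S$) and checking that $1-(1-t)\E X_i \ge 0$ so the exponential bound is clean, which is immediate since both factors lie in $[0,1]$. As a remark, the same representation also reproves the lower bound uniformly: Jensen applied to $x\mapsto t^x$ gives $\E[t^{X_i}] \ge t^{\E X_i}$, hence $\E[\tfrac{1}{1+S}] \ge \int_0^1 t^{\mu}\,\mathrm{d}t = \tfrac{1}{1+\mu}$.
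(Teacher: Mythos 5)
Your proposal is correct and follows essentially the same route as the paper's proof: Jensen's inequality for the lower bound, and for the upper bound the negative-moment representation $\tfrac{1}{1+s}=\int_0^1 t^s\,\mathrm{d}t$, factoring by independence, the chord bound $t^x\le 1-(1-t)x$ on $[0,1]$, and the estimate $1-u\le e^{-u}$. Your write-up is in fact slightly cleaner, since you make explicit the final step $\tfrac{1-e^{-\mu}}{\mu}\le\tfrac{1}{\mu}$ (the paper's displayed integral has a sign typo at this point) and you note the interchange of expectation and integral and the degenerate case $\mu=0$.
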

\begin{proof}
  The first inequality follows by Jensen's inequality, since the function $f(X) = \frac{1}{1+X}$ is convex for $X > -1$.

  Next we use a formula for negative moments~\cite{negative_moments}. Note that for every $u, x>0$,
  \[
    \frac{u^{1+x}}{1+x} = \Int{t^x}{t,0,u}. 
  \]

  Setting $u=1$ and taking expectations over $x=\sum X_i$, 
  \begin{align*}
    \E \inb{\frac{1}{1+\sum X_i} } =& \Int{\E\inb{t^{\sum_{i=1}^N X_i}}}{t,0,1} 
    =\Int{\prod_{i=1}^N \E\inb{t^X_i}}{t,0,1}.
  \end{align*}
  
  Next we show that $t^x \le (t-1)x+1$ for $x\in [0,1]$. 

  For any given $t$, consider the following two points: 
  $t^x=
  \begin{cases}
    1, \hbox{ if $x=0$} \\
    t, \hbox{ if $x=1$}
  \end{cases}$. Thus $(t-1)x + 1$ is the line joining these two points.
  Since $t^x = e^{x \ln t}$ is convex as long as $t>0$, and $t^x$ meets $(t-1)x + 1$ at the two points, we get that $t^x \le (t-1)x+1$ for $x \in [0,1]$. 

  Therefore,
  \begin{align*}
    \E \inb{\frac{1}{1+\sum X_i} }=&\Int{\prod_{i=1}^N \E\inb{t^X_i}}{t,0,1} \\
    \le &\Int{\prod_{i=1}^N (1+ (t-1)\E X_i)}{t,0,1} \\
      \le &\Int{\exp\inp{ (t-1)\sum_{i=1}^N \E X_i}}{t,0,1} \\
      =&\frac{1 - \exp\inp{\sum_{i=1}^N \E X_i}}{\sum_{i=1}^N \E X_i}.
  \end{align*}
  This concludes the proof.
\end{proof}

Now we are ready to prove~\cref{thm:em}.


{\bf Proof of~\cref{thm:em}.}
The probability of outputting a dummy is 
\[
	\E \inb{\frac{(1+\frac{1}{\gamma})\exp(\eps_2 \T)}{ (1+\frac{1}{\gamma})\exp(\eps_2 \T) + \sum_{i=1}^N \exp(\eps_2 \widetilde{q_i})} } \le \frac{1+\frac{1}{\gamma}}{Np+\frac{1}{\gamma}},
\]
where the inequality follows from the definition of $p$ and~\cref{lem:inv-exp}.
Then, 
\begin{align*}
  &\Pr\inb{\EM_{\eps_2}  (A_1, \cdots, A_n)< \T - \frac{1}{\eps_2} \ln \frac{1}{\delta p} }\\
  \le& \sum_{i \in [N]} \Pr\inb{ \widetilde{q_i} < \T - \frac{1}{\eps_2} \ln \frac{1}{\delta p}  } \exp(\eps_2 \widetilde{q_i}) \cdot \E \inb{ \frac{1}{ (1+\frac{1}{\gamma})\exp(\eps_2 \T) + \sum_{j=1}^N \exp(\eps_2 \widetilde{q_j})} \middle| \widetilde{q_i}} \\
  <& N p \delta \cdot \E \inb{\frac{1}{\frac{1}{\gamma}+ \sum_{j=1}^N \exp(\eps_2 (\widetilde{q_j} - \T))}} \\
  \le& N p \delta \cdot \frac{1}{\frac{1}{\gamma} - 1 + \sum_{i=1}^N \E \exp(\eps_2 (\widetilde{q_i} - \T))} \quad \hbox{ by~\cref{lem:inv-exp}}\\
    \le& \delta \quad \hbox{ by definition of $p$.}
\end{align*}

For the privacy part, consider any two neighboring datasets $\Data_1$ and $\Data_2$.
Let $A_i = \widetilde{q_i}(\Data_1)$, and $B_i=\widetilde{q_i}(\Data_2)$ for $i\le N$, and $A_i=B_i=\T$ for $i=N+1, \cdots, N+1+\frac{1}{\gamma}$.
Let $M_1 \defeq \EM_{\eps_2}\inp{ A_1, \cdots, A_n}$, $M_2\defeq \EM_{\eps_2}\inp{ B_1, \cdots, B_n }$ be the random variables of the two outcomes.

For any outcome $S$,
\begin{align*}
  \Pr[M_1 = S]
  =& \sum_{i=1}^N \Pr[A_i = S] \exp(\eps_2 S) \cdot \E \inb{\frac{1}{\exp(\eps_2 \T) + \sum_{j=1}^N \exp(\eps_2 A_j)} \middle| A_i = S} \\
  =& \sum_{i=1}^N \Pr[A_i = S] \exp(\eps_2 (S-\T)) \cdot \E \inb{\frac{1}{1 + \frac{1}{\gamma}+ \exp(\eps_2 (S-\T)) + \sum_{j=1,j \neq i}^N \exp(\eps_2 (A_j-\T))} }. 
\end{align*}

Note that by~\cref{lem:inv-exp},
\begin{align*}
  \E \inb{\frac{1}{1 + \frac{1}{\gamma}+ \exp(\eps_2 (S-\T)) + \sum_{j=1,j \neq i}^N \exp(\eps_2 (A_j-\T))} } 
  \le & \E \inb{\frac{1}{1 + \frac{1}{\gamma}+  \sum_{j=1,j \neq i}^N \exp(\eps_2 (A_j-\T))} } \\
  \le &\frac{1}{\frac{1}{\gamma} + \E \sum_{j=1,j \neq i}^N \exp(\eps_2 (A_j-\T))}  \\
  \le &\frac{1}{\frac{1}{\gamma} +\E \sum_{j=1}^N \exp(\eps_2 (A_j-\T))  - 1} \\
  \le &\frac{1}{Np + \frac{1}{\gamma}- 1}.
\end{align*}

On the other hand, by Jensen's inequality,
\begin{align*}
  &\E \inb{\frac{1}{1 + \frac{1}{\gamma} +\exp(\eps_2 (S-\T))+  \sum_{j=1,j \neq i}^N \exp(\eps_2 (A_j-\T))} } \\
  \ge &\frac{1}{1+ \frac{1}{\gamma} +\exp(\eps_2 (S-\T))+\E \sum_{j=1,j \neq i}^N \exp(\eps_2 (A_j-\T))}  \\
  \ge &\frac{1}{2 +\frac{1}{\gamma} + \E \sum_{j=1}^N \exp(\eps_2 (A_j-\T)) } \\
  \ge & \frac{1}{2+\frac{1}{\gamma} + Np}.
\end{align*}

By symmetry we have the same bound for $M_2$ and $B_j$.
Therefore,
\begin{align*}
  \frac{\Pr[M_1 = S]}{\Pr[M_2 = S]} \le& \max_{i \in [N]}\set{ \frac{ \Pr[A_i = S] \E \inb{\frac{1}{1 + \exp(\eps_2 (S-\T)) + \sum_{j=1,j \neq i}^N \exp(\eps_2 (A_j-\T))} }}{\Pr[B_i = S] \E \inb{\frac{1}{1 + \exp(\eps_2 (S-\T)) + \sum_{j=1,j \neq i}^N \exp(\eps_2 (B_j-\T))} }}}  \\
  \le&\exp(\eps_1)\frac{Np e^{\eps_1}+\frac{1}{\gamma} +2}{Np + \frac{1}{\gamma} - 1} \\
  \le& \exp\inp{2\eps_1 + 8\gamma}.
\end{align*}

Similarly
\begin{align*}
  \frac{\Pr[M_1 = S]}{\Pr[M_2 = S]} \ge& \min_{i \in [N]}\set{ \frac{ \Pr[A_i=S] \E \inb{\frac{1}{1 + \exp(\eps_2 (S-\T)) + \sum_{j=1,j \neq i}^N \exp(\eps_2 (A_j-\T))} }}{\Pr[B_i=S] \E \inb{\frac{1}{1 + \exp(\eps_2 (S-\T)) + \sum_{j=1,j \neq i}^N \exp(\eps_2 (B_j-\T))} }}}\\
  \ge&\exp(-\eps_1)\frac{Npe^{-\eps_1}+\frac{1}{\gamma}-1}{Np+\frac{1}{\gamma} + 2} \\
  \ge& \exp\inp{-2\eps_1 - 8\gamma }.
\end{align*}

This concludes the proof.

\section{Lower Bounds}
\label{app:lower_bounds}
In this section, we show that our algorithms loss in parameters are close to optimal. First note that since are competing against the best of $K$ mechanisms, at least $K$ oracle calls are needed, and our algorithm makes only $\tilde{O}(K)$ oracle calls.

When each of the input mechanisms $\M_i$ is $\eps$-DP, our final algorithm has privacy guarantee $2\eps + \eps'$ where $\eps'$ can be made arbitrarily small. Recall that in this factor of two loss occurs already in the case when $\M_i$ is $q_i(\cdot) + Lap(S/\eps)$ for some score functions with sensitivity $S$: in this case the NoisyMax mechanism has $2\eps$-DP, and the exponential mechanism with similar utility has the same factor of two loss. We next argue that this factor of two loss is necessary under weak utility assumptions. We start with a definition.

\begin{definition}
	Suppose that $\M$ is an algorithm that takes as input a set of $\eps$-DP mechanisms $\M_1, \ldots, \M_K$ and outputs an index $i$. We say that $i^*$ is $\gamma$-dominant in $\M_1, \ldots, \M_K$ on $D$ if $\Pr_{m_i \sim \M_i(d)}[\argmax_{i} m_i = i^*] \geq 1 - \gamma$. We say that $\M$ is $\gamma$-weakly useful if $\Pr[\M(D) = i^*] \geq \gamma$ whenever $i^*$ is $\gamma$-dominant in $\M_1, \ldots, \M_K$ on $D$.
\end{definition}

The next theorem says that a fairly mild weak usefulness condition already implies that this factor of $2$ loss is unavoidable.

\begin{theorem}
	Suppose that $\M$ is an algorithm that takes as input a set of $\eps$-DP mechanisms $\M_1, \ldots, \M_K$, and outputs an index $i$. If $\M$ is $\gamma$-weakly useful for $\gamma = K^{-\alpha}$ for a small enough $\alpha > 0$, then $\M$ cannot by $\hat{\eps}$-DP for any $\hat{\eps} < (2-6\alpha)\eps$.
\end{theorem}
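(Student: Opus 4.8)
The plan is to prove this by the standard \emph{packing} lower bound for differential privacy, but instantiated with a family of score functions whose ``opposite‑direction'' sensitivity is exactly what forces the factor of two. First I would fix $K$ mechanisms of the form $\M_i(D)=q_i(D)+\Lap(1)$, built as follows. The data universe consists of $n=Kd$ people partitioned into groups $\{P_{k,\ell}\}_{\ell=1}^{d}$, $k\in[K]$, each person holding a bit, and
\[
  q_i(D)=\eps\Bigl(\sum_{\ell}D_{P_{i,\ell}}\;-\;\sum_{k\neq i}\sum_{\ell}D_{P_{k,\ell}}\Bigr).
\]
Flipping any single person's bit changes every $q_i$ by exactly $\eps$, so each $\M_i$ has sensitivity $\eps$ and scale $1$, hence is $\eps$-DP. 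Let $D_0$ be the all‑zeros dataset, so $q_i(D_0)=0$ for all $i$, and for each $k$ let $E_k$ be $D_0$ with the $d$ bits of group $k$ set to one. Then $q_k(E_k)=\eps d$ while $q_i(E_k)=-\eps d$ for $i\neq k$, so on $E_k$ candidate $k$ enjoys a gap of $2\eps d$ over every competitor, and $\mathrm{dist}(D_0,E_k)=d$.

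Next I would choose $d$ just large enough that candidate $k$ is $\gamma$‑dominant on $E_k$. The $\argmax$ fails to be $k$ only if some difference of two independent $\Lap(1)$'s exceeds the gap $2\eps d$; since $\Pr[\Lap(1)-\Lap(1)\ge t]=\tfrac14(2+t)e^{-t}$, a union bound gives $\Pr[\argmax\neq k\text{ on }E_k]\le (K-1)\cdot\tfrac14(2+2\eps d)e^{-2\eps d}$. Writing $g:=2\eps d$, this is $\le\gamma$ as soon as $g\ge\ln(K/\gamma)+O(\ln\ln K)=(1+\alpha)\ln K+O(\ln\ln K)$, so it suffices to take $d=\lceil g/(2\eps)\rceil$ with $g$ of this size (this is where we use that $\eps$ is in the relevant regime $\eps=O(\log K)$, so $d\ge 1$ and the ceiling costs a negligible additive term). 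Now weak usefulness yields $\Pr[\M(E_k)=k]\ge\gamma$ for every $k$; if $\M$ is $\hat\eps$-DP then group privacy along a path of $d$ neighbours from $D_0$ to $E_k$ gives $\Pr[\M(D_0)=k]\ge e^{-\hat\eps d}\Pr[\M(E_k)=k]\ge\gamma e^{-\hat\eps d}$. Summing the disjoint events $\{\M(D_0)=k\}$ over $k\in[K]$,
\[
  1=\sum_{k=1}^{K}\Pr[\M(D_0)=k]\ \ge\ K\gamma e^{-\hat\eps d},
  \qquad\text{hence}\qquad
  \hat\eps\ \ge\ \frac{\ln(K\gamma)}{d}\ =\ \frac{(1-\alpha)\ln K}{d}.
\]

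Finally I would substitute $d\le \frac{g}{2\eps}+1$ with $g=(1+\alpha)\ln K+O(\ln\ln K)$, obtaining, for $K$ large enough to absorb the lower‑order terms, $\hat\eps\ge\frac{2\eps(1-\alpha)\ln K}{g+2\eps}\ge\frac{2(1-\alpha)}{1+2\alpha}\,\eps\ge(2-6\alpha)\eps$ (the last inequality is the elementary $2(1-\alpha)\ge(2-6\alpha)(1+2\alpha)$), contradicting $\hat\eps<(2-6\alpha)\eps$. The conceptual crux, and the step I expect to require the most care, is the packing construction itself: the factor of two is not automatic. It appears precisely because each $E_k$ \emph{simultaneously} raises $q_k$ and lowers every $q_i$ with $i\neq k$, so a dataset only $\approx\Lambda/(2\eps)$ Hamming steps from the common center $D_0$ can already open a gap of size $\Lambda=(1+\alpha)\ln K$; a naive ``one‑directional'' construction only reaches distance $\Lambda/\eps$ and would give the weaker bound $\hat\eps\gtrsim\eps$. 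The rest is routine: pinning down the constant in $\Lambda$ from the exact Laplace‑difference tail, checking $d\ge1$ and handling the ceiling, and the arithmetic turning $\tfrac{2(1-\alpha)}{1+2\alpha+o(1)}$ into $2-6\alpha$; I would also remark that letting $K\to\infty$ drives the bound to $\tfrac{2(1-\alpha)}{1+\alpha}\eps$, so the loss in the exponent is essentially $2$ under only a polynomially‑weak usefulness hypothesis.
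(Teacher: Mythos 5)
Your proposal is correct, and at its core it is the same argument as the paper's: a packing/group-privacy lower bound in which $K$ datasets $E_1,\ldots,E_K$ lie at Hamming distance $d\approx\frac{(1+\alpha)\ln K}{2\eps}$ from a common center $D_0$, weak usefulness forces $\Pr[\M(E_k)=k]\ge K^{-\alpha}$, and group privacy together with the fact that the disjoint events $\{\M(D_0)=k\}$ have total mass at most one gives $\hat\eps\, d\ge(1-\alpha)\ln K$, hence $\hat\eps\gtrsim\frac{2(1-\alpha)}{1+2\alpha}\,\eps\ge(2-6\alpha)\eps$ (your check $2(1-\alpha)\ge(2-6\alpha)(1+2\alpha)$ is right). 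The genuine difference is the instantiation of the hard family: the paper prescribes Bernoulli mechanisms with success probabilities $\tfrac{1}{2K^{0.5}}$ at $D_0$ and $1-\tfrac{1}{2K^{\alpha}}$ resp.\ $\tfrac{1}{2K^{1+\alpha}}$ at $D_j$, chosen so that each Hamming step can simultaneously push $p_j$ up and every $p_{i\ne j}$ down by a factor $e^{\eps}$, and it leaves the extension to bona fide $\eps$-DP mechanisms as an unverified ``easy check''; you instead give a fully explicit construction --- Laplace-noised linear scores over a bit-vector universe in which one bit flip raises $q_k$ by $\eps$ while lowering every other $q_i$ by $\eps$ --- and verify $\gamma$-dominance from the exact tail $\Pr[\Lap(1)-\Lap(1)\ge t]=\tfrac14(2+t)e^{-t}$. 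This buys self-containedness (nothing left to the reader, and it makes transparent where the factor of two comes from) at the cost of an $O(\ln\ln K)$ slack in $d$, which is harmless. One shared caveat: both arguments establish the bound only for $K$ large relative to $e^{\eps}$, to absorb the ceiling in $d$ and the lower-order terms; the paper hides this in ``for large enough $K$'', while you flag the regime explicitly --- though note that what is actually needed is roughly $\eps\lesssim\alpha\ln K$ (so that $2\eps+O(\ln\ln K)\le\alpha\ln K$), not merely $\eps=O(\log K)$; the paper's own final step needs an analogous restriction, so this is not a gap peculiar to your proof.
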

\begin{proof}
	The proof is a simple packing argument. Our mechanisms $\M_i$ all have range $\{0, 1\}$ and output $1$ with probability $p_i(D)$ on dataset $D$. We define a set of $K+1$ datasets $D_0, D_1, \ldots, D_K$ such that:
	\begin{align*}
		p_i(D_0) &=  \frac{1}{2K^{0.5}}\\
		p_i(D_j) &= \left\{\begin{array}{ll}
			1-\frac{1}{2K^\alpha} & \mbox{if } i = j,\\
			\frac{1}{2K^{1+\alpha}} &\mbox{otherwise}.
		\end{array}\right.
	\end{align*}
	It is easy to check that if $D_0$ and $D_i$ are distance $\Delta = \lceil\frac{(0.5+\alpha)\ln K}{\eps}\rceil$, then the $\M_i$'s can be extended to satisfy $\eps$-DP. Moreover, any $\frac{1}{K^\alpha}$-weakly useful algorithm on dataset $D_i$ should output $i$ with probability at least $\frac{1}{K^{\alpha}}$. Suppose that $\M$ is $\hat{\eps}$-DP. Then,
	\begin{align*}
		\Pr[\M(D_0) = i] \geq \exp(-\Delta \hat{\eps}) \cdot \frac 1 {K^\alpha}.
	\end{align*}
	Since $\sum_i\Pr[\M(D_0) = i] \leq 1$, it follows that for some $i$, this probability $\Pr[\M(D_0) = i] \leq \frac 1 K$. It follows that
	\begin{align*}
		-\ln K &\geq -\Delta \hat{\eps} - \alpha\ln K \\
		\Leftrightarrow \;\;\;\;\;\;\;\hat{\eps} &\geq \frac{(1-\alpha) \ln K}{\lceil((0.5 + \alpha)\ln K)/\eps\rceil}. 
	\end{align*}
	For large enough $K$, this implies that $\hat{\eps} \geq 2(1-3\alpha)\eps$.
\end{proof}

\section{Useful Properties of Differential Privacy}
\label{app:ed_v_cond_e}
In this section, we prove some folklore properties of the distance implicit in
the definition of differential privacy that are useful. We start with a definition of closeness.

\begin{definition}
  For distributions $P$ and $Q$, we say that $P$ is $(\eps,\delta)$-far from $Q$, if for all events $S$,
  \begin{align*}
    \Pr_{x \sim P}[x \in S] \leq \exp(\eps) \Pr_{x \sim Q}[x \in S] + \delta
  \end{align*}
  We say that $P \equiv_{\eps,\delta} Q$ if $P$ is $(\eps,\delta)$-far from $Q$ and $Q$ is $(\eps, \delta)$-far from $P$.
\end{definition}

\begin{lemma}
  \label{lem:ed_to_cond_e}
  Suppose that $P \equiv_{\eps,\delta} Q$ for $\delta < \frac 1 {10}$. Then for any $\eps' > \eps$, there is an event $B$ such that (a) $\Pr_{x \sim P}[x \in B] \leq \delta/(1-\exp(\eps - \eps'))$, and (b) $P \mid B^c \equiv_{\eps', 0} Q \mid B^c$. In particular, setting $\eps' = \eps + \sqrt{2delta}$, we get $\Pr_P[B] \leq \sqrt{\delta}$.
\end{lemma}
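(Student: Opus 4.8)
The plan is to take $B$ to be the event on which the ``privacy loss'' of $P$ against $Q$ is already too large, and show that conditioning it out is exactly what is needed. Fix a common dominating measure $\mu$ (e.g. $\mu=\tfrac12(P+Q)$) and write $p=\tfrac{dP}{d\mu}$, $q=\tfrac{dQ}{d\mu}$. Define
\[
  B \;:=\; \set{x : p(x) > e^{\eps'} q(x)}.
\]
(For the two-sided conclusion in (b) one also unions in $\set{x : q(x) > e^{\eps'}p(x)}$; as the bound below shows, this region is controlled by the symmetric argument, so it costs only in constants, and I carry the one-sided version for cleanliness.)

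For part (a), apply the hypothesis that $P$ is $(\eps,\delta)$-far from $Q$ to the set $S=B$ itself: $\Pr_P[B]\le e^{\eps}\Pr_Q[B]+\delta$. On the other hand, by the very definition of $B$ we have $q(x) < e^{-\eps'}p(x)$ pointwise on $B$, hence $\Pr_Q[B]=\int_B q\,d\mu \le e^{-\eps'}\int_B p\,d\mu = e^{-\eps'}\Pr_P[B]$. Substituting gives $\Pr_P[B]\le e^{\eps-\eps'}\Pr_P[B]+\delta$, and since $\eps'>\eps$ we may rearrange to obtain $\Pr_P[B]\le \delta/(1-e^{\eps-\eps'})$, which is exactly (a). The identical argument with the roles of $P$ and $Q$ exchanged bounds $\Pr_Q[\set{q>e^{\eps'}p}]$ by the same quantity, which is what makes the optional two-sided enlargement of $B$ essentially free.

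For part (b), on $B^{c}$ we have $p(x)\le e^{\eps'}q(x)$ pointwise, so for every measurable $S$, $P(S\cap B^{c})=\int_{S\cap B^{c}}p\,d\mu \le e^{\eps'}\int_{S\cap B^{c}}q\,d\mu = e^{\eps'}Q(S\cap B^{c})$; taking $S$ to be the whole space gives $P(B^{c})\le e^{\eps'}Q(B^{c})$. Dividing, $P\mid B^{c}(S) \le e^{\eps'}\,Q(S\cap B^{c})/P(B^{c})$, and combining this with the reverse pointwise bound on $B^{c}$ (available once the region $\set{q>e^{\eps'}p}$ has been removed) together with the sandwich $e^{-\eps'}\le Q(B^{c})/P(B^{c})\le e^{\eps'}$, one checks that the renormalized conditional laws are within a multiplicative factor $e^{\eps'}$ in both directions, i.e. $P\mid B^{c}\equiv_{\eps',0}Q\mid B^{c}$. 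I expect this renormalization step to be the one point where genuine care is required: one must track how much the normalizing constants $P(B^{c}),Q(B^{c})$ — which differ from $1$ by at most $\Pr_P[B],\Pr_Q[B]\le \delta/(1-e^{\eps-\eps'})$ — can inflate the ratio, and the hypothesis $\delta<\tfrac1{10}$ is precisely what keeps these constants close enough to $1$ that the slack already present in $\eps'-\eps$ absorbs the loss.

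Finally, for the ``in particular'' clause I would substitute $\eps'=\eps+\sqrt{2\delta}$, so that $1-e^{\eps-\eps'}=1-e^{-\sqrt{2\delta}}\ge \sqrt{2\delta}-\delta$ by the elementary inequality $e^{-t}\le 1-t+t^{2}/2$ (valid for all $t\ge0$); since $\delta<\tfrac1{10}$ we have $\sqrt{\delta}\le\sqrt2-1$, hence $\sqrt{2\delta}-\delta=(\sqrt2-1)\sqrt{\delta}-(\delta-(\sqrt2-1)... )\ge\sqrt\delta$, wait --- more simply $\sqrt{2\delta}-\delta\ge\sqrt{\delta}$ iff $\sqrt\delta\le\sqrt2-1$, which holds. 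Therefore $\Pr_P[B]\le \delta/(1-e^{\eps-\eps'})\le \delta/\sqrt\delta=\sqrt\delta$, as claimed. The only nontrivial ingredient in the whole argument is the conditioning/renormalization bookkeeping in part (b); parts (a) and the final estimate are one-line computations.
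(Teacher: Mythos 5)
Your handling of part (a) and of the ``in particular'' clause is exactly the paper's argument: the same likelihood-ratio event $B=\set{x: p(x)>e^{\eps'}q(x)}$, the same application of the $(\eps,\delta)$ hypothesis to $S=B$ combined with $\Pr_Q[B]\le e^{-\eps'}\Pr_P[B]$, and the same elementary estimate $1-e^{-\sqrt{2\delta}}\ge \sqrt{2\delta}-\delta\ge\sqrt{\delta}$ for $\delta<\tfrac1{10}$. Those steps are correct.

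The gap is in part (b), and it is precisely the step you flag and then wave through. First, the one-sided $B$ you ``carry for cleanliness'' cannot give the symmetric relation $P\mid B^c\equiv_{\eps',0}Q\mid B^c$: on $B^c$ there is no lower bound on $p/q$ at all (the additive $\delta$ allows $Q$ to put mass where $P$ puts none), so the direction $\Pr_{Q\mid B^c}[S]\le e^{\eps'}\Pr_{P\mid B^c}[S]$ can fail outright; you must union in $\set{x:q(x)>e^{\eps'}p(x)}$, and this enlargement is not free for the \emph{stated} bounds in (a): the extra region is controlled only by the symmetric argument, so $\Pr_P[B]$ grows by up to a factor $1+e^{-\eps'}$, and the $\sqrt\delta$ bound degrades accordingly. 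Second, the renormalization does not work out the way you assert. For $S\subseteq B^c$,
\[
\frac{\Pr_{P\mid B^c}[S]}{\Pr_{Q\mid B^c}[S]}=\frac{P(S)}{Q(S)}\cdot\frac{Q(B^c)}{P(B^c)},
\]
and since $Q(B)\le e^{-\eps'}P(B)\le P(B)$ (for your one-sided $B$; essentially the same holds after symmetrization), the factor $\frac{Q(B^c)}{P(B^c)}=\frac{1-Q(B)}{1-P(B)}$ is $\ge 1$. With the threshold set exactly at $e^{\eps'}$, the pointwise bound on $B^c$ is already tight at $e^{\eps'}$, so there is no ``slack in $\eps'-\eps$'' left to absorb this factor; what the argument actually yields is $\bigl(\eps'+\ln\frac{1-Q(B)}{1-P(B)},\,0\bigr)$-closeness, not $(\eps',0)$. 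Absorbing the loss by lowering the threshold below $e^{\eps'}$ changes the constants in (a) (and, as a short calculation shows, cannot simultaneously preserve the stated $\sqrt\delta$ bound), so closing (b) requires either tolerating an $\eps'+\ln\frac{1}{1-\Pr_P[B]}$ guarantee, or restating (b) for the restricted (unnormalized) measures. For comparison, the paper's own proof only establishes (a) and the $\sqrt\delta$ estimate and treats (b) as immediate from the definition of $B$, i.e.\ it silently ignores both the symmetrization and the normalizing constants (the downstream use in Theorem 3.6 is insensitive to this constant-factor slack); your write-up is more candid about where the difficulty sits, but as written it does not resolve it.
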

\begin{proof}
  Without loss of generality\footnote{This can be ensured by having the mechanism outputting a uniform $[0,1]$ r.v. in addition to its original output.}, the distributions have a density function. Let $B = \{x : \frac{\Pr_{P}[x]}{\Pr_Q[x]} \geq \exp(\eps') \}$. Now note that
  \begin{align*}
    \Pr_P[B] \leq \exp(\eps)\Pr_Q[B] + \delta
    &\leq \exp(\eps)\exp(-\eps')\Pr_P[B] + \delta,
  \end{align*}
  so that $\Pr_P[B] \leq \delta / (1-\exp(\eps - \eps'))$. Setting $\eps' = \eps + \sqrt{2\delta}$, and noting that $\exp(-\sqrt{2\delta}) \leq 1 - \sqrt{\delta}$ for $\delta < \frac 1 {10}$, the claim follows.

\end{proof}

\begin{lemma}
  \label{lem:cond_e_to_ed}
  Suppose that there is an event $B$ such that $P \mid B^c \equiv_{\eps, 0} Q \mid B^c$, and that $\Pr_{x \sim P}[B] \leq \delta$. Then $P \equiv_{\eps, \delta} Q$.
\end{lemma}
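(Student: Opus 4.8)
The plan is to verify the two one-sided inequalities in the definition of $P \equiv_{\eps,\delta} Q$ directly, decomposing every event according to whether it falls inside the bad set $B$ or not.

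First I would fix an arbitrary measurable event $S$ and write $\Pr_{x\sim P}[S] = \Pr_{x\sim P}[S\cap B^c] + \Pr_{x\sim P}[S\cap B]$. The second term is at most $\Pr_{x\sim P}[B]\le\delta$ by hypothesis. For the first term I would invoke the assumption that $P$ and $Q$ are multiplicatively $e^\eps$-close on $B^c$ with no additive slack, in its unnormalized form, to get $\Pr_{x\sim P}[S\cap B^c]\le e^\eps\Pr_{x\sim Q}[S\cap B^c]\le e^\eps\Pr_{x\sim Q}[S]$. Adding the two bounds gives $\Pr_{x\sim P}[S]\le e^\eps\Pr_{x\sim Q}[S]+\delta$, i.e. $P$ is $(\eps,\delta)$-far from $Q$. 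The reverse inequality is obtained by running exactly the same computation with $P$ and $Q$ interchanged: $\Pr_{x\sim Q}[S] = \Pr_{x\sim Q}[S\cap B^c] + \Pr_{x\sim Q}[S\cap B] \le e^\eps\Pr_{x\sim P}[S] + \Pr_{x\sim Q}[B]$, and the last term is again at most $\delta$ (the event $B$ has small mass under both distributions; when $B$ is the likelihood-ratio cutoff produced by \cref{lem:ed_to_cond_e} one even has $\Pr_{x\sim Q}[B]\le e^{-\eps}\Pr_{x\sim P}[B]$). Together the two directions give $P\equiv_{\eps,\delta}Q$.

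I do not expect any real obstacle: the statement is the routine converse of \cref{lem:ed_to_cond_e}, and the only point deserving a moment of attention is that the ``conditioned on $B^c$'' hypothesis should be used in the form $\Pr_P[S\cap B^c]\le e^\eps\Pr_Q[S\cap B^c]$ rather than after renormalizing the conditional densities by $\Pr[B^c]$ — which is precisely the form in which that closeness is available here. Everything else is union-bound bookkeeping.
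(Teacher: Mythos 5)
Your proposal follows the same skeleton as the paper's proof: split every event over $B$ and $B^c$, pay $\delta$ for the part inside $B$, and use the closeness on $B^c$ for the rest. The paper writes this with conditional probabilities, $\Pr_P[S]\le \Pr_{P\mid B^c}[S]+\Pr_P[B]\le e^{\eps}\Pr_{Q\mid B^c}[S]+\Pr_P[B]\le e^{\eps}\Pr_Q[S]+\delta$, while you work with the restricted (unnormalized) measures. One assertion of yours is not literally right, though: the hypothesis is a statement about the \emph{conditional} laws $P\mid B^c$ and $Q\mid B^c$, so the unnormalized inequality $\Pr_P[S\cap B^c]\le e^{\eps}\Pr_Q[S\cap B^c]$ is not ``precisely the form in which the closeness is available''; deriving it from the hypothesis costs a factor $\Pr_P[B^c]/\Pr_Q[B^c]$, which is harmless only if $\Pr_Q[B]$ is also controlled --- something the stated hypotheses do not give. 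To be fair, the paper's own last step $\Pr_{Q\mid B^c}[S]\le\Pr_Q[S]$ hides exactly the same requirement, so on this point you are at the paper's level of rigor; and in the one place the lemma is used, $B$ is the likelihood-ratio event of \cref{lem:ed_to_cond_e}, for which $\Pr_Q[B]\le e^{-\eps'}\Pr_P[B]$, so the elision is harmless there.

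The real gap is in your reverse direction. The relation $\equiv_{\eps,\delta}$ is two-sided, and to bound $\Pr_Q[S]\le e^{\eps}\Pr_P[S]+\Pr_Q[B]$ you then declare $\Pr_Q[B]\le\delta$ by appealing to ``$B$ has small mass under both distributions'' or to the specific $B$ produced by \cref{lem:ed_to_cond_e} --- neither of which is a hypothesis of the lemma, which only bounds $\Pr_P[B]$. This cannot be argued away from the statement as written: if $P$ puts mass $\delta$ on $B$, $Q$ puts mass $9/10$ on $B$, and the two conditional laws on $B^c$ coincide, then the hypotheses hold but neither direction of the conclusion does (test $S=B^c$ and $S=B$). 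So some control of $\Pr_Q[B]$ (equivalently, the two-sided unnormalized closeness) is genuinely needed for the two-sided conclusion; your instinct that it must be supplied is correct, but it has to be added as an assumption rather than borrowed from the application. For comparison, the paper's own proof silently proves only the direction ``$P$ is $(\eps,\delta)$-far from $Q$'' and never addresses the other, so you engage with more of the claim than the paper does; the clean fix is either to assume $\Pr_Q[B]\le\delta$ as well (true where the lemma is invoked) or to weaken the conclusion to the one-sided statement.
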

\begin{proof}
  Let $S$ be any event. Then
  \begin{align*}
    \Pr_{P}[S] &\leq \Pr_{P \mid B^c}[S] + \Pr[B]\\
    &\leq \exp(\eps) \Pr_{Q \mid B^c}[S] + \Pr[B]\\
    &\leq \exp(\eps) \Pr_{Q}[S] + \delta.\\
  \end{align*}
\end{proof}

\end{document}